%% file: main.tex
\documentclass[runningheads]{fundam}
\usepackage{graphicx}
\input{Package_and_commands/packages.tex}
\input{Package_and_commands/commands.tex}

\begin{document}
%
\title{Symbolic Model Checking using Intervals of Vectors }
%
%
\author{Damien Morard\orcidlink{0000-0001-6887-6357} \\
Computer Science Department, Faculty of Science,\\
University of Geneva, Switzerland\\
damien@morard.me \\
\and
Didier Buchs \\
Computer Science Department, Faculty of Science,\\
University of Geneva, Switzerland\\
didier.buchs@unige.ch
}

%
\maketitle              
\runninghead{D. Morard, D. Buchs}{Symbolic Model Checking using Intervals of Vectors}

\begin{abstract}
Model checking is a powerful technique for software verification. However, the approach notably suffers from the infamous state space explosion problem.
To tackle this, in this paper, we introduce a novel symbolic method for encoding Petri net markings.
It is based on the use of generalised intervals on vectors, as opposed to existing methods based on vectors of intervals such as Interval Decision Diagrams.
We develop a formalisation of these intervals, show that they possess homomorphic operations for model checking CTL on Petri nets, and define a canonical form that provides good performance characteristics. 
Our structure facilitates the symbolic evaluation of CTL formulas in the realm of global model checking, which aims to identify every state that satisfies a formula. 
Tests on examples of the model checking contest (MCC 2022) show that our approach yields promising results.
To achieve this, we implement efficient computations based on saturation and clustering principles derived from other symbolic model checking techniques.
\end{abstract}

\section{Introduction}

In the ever-evolving landscape of computer science and software engineering, the correctness and reliability of software systems remain paramount concerns.
In response to these challenges, the discipline of model checking has emerged as a powerful tool in the arsenal of software verification and validation.

Model checking~\cite{clarke:1997:model} is a formal verification technique used to ensure that a system adheres to its specifications and requirements.
However, although model checking is a trustworthy and robust method, it is confronted with a variety of challenges.
One of the most significant ones is the state space explosion problem~\cite{clarke:2011:model,mcmillan:1993:symbolic}.
This problem arises when dealing with complex software or hardware systems, where the number of possible states and transitions within a model grows exponentially with its size.
The task becomes more challenging when the objective is to identify all states that adhere to a given property, as opposed to merely determining the validity of the property within a specific configuration.
Moreover, our objective is to address the issue of \textit{global model checking}, namely, to identify all states that fulfill a CTL (Computation Tree Logic)~\cite{clarke:1981:design} formula.

While various techniques have been developed to mitigate the state space explosion problem, such as Decision diagrams~\cite{akers:1978:binary,hamez:2008:hierarchical,hostettler:2011:high}, abstractions~\cite{clarke:1994:model}, partial order reduction~\cite{peled:1994:combining}, and many others, it has become evident that this challenge will persist as models become more sophisticated.
Therefore, the quest for innovative methods to address this issue remains a compelling area of research.
As always, each new approach inevitably brings its own set of advantages and disadvantages compared to existing techniques.

This paper focusses on a symbolic method inspired by~\cite{racloz:1994:properties} that proposes to encode sets of markings in Petri net models with vectors that act as boundaries on the markings, similar to intervals.
As markings are essentially vectors of values, and fireability of transitions are lower constraints and their negation are upper constraints, intervals seem to be a suitable choice for representing the satisfiable states of a CTL formula.
However, expressing intervals on vectors is complicated, since vectors are not always comparable.
We introduce such notions and demonstrate that, for non-total orders, we require additional information such as a set of lower and upper bounds, rather than a single bound as in intervals of natural numbers.
Merely efficiently representing sets of markings is insufficient; we therefore also define homomorphic operations (w.r.t. union) on the symbolic structure to directly perform model checking on it.

Thanks to this encoded structure and operations, we can efficiently maintain the same structure and remain in the symbolic domain, without the need to decode it to perform computation over the encoded values.
Furthermore, we introduce a canonical form incorporating optimisations to make it efficient to use in the context of CTL model checking.
We tested our approach on examples from the model checking contest~\cite{mcc:2022}, yielding very promising results.

The paper is organised as follows: firstly, in \Cref{section:informal-presentation}, we provide an overview of our project. 
Next, in \Cref{section:svs}, we present the complete formalism of our symbolic structure.
In \Cref{section:from-svs-to-pn}, we demonstrate the connection between this symbolic structure and Petri nets to apply CTL model checking.
Finally, \Cref{section:benchmark} presents our results using this technique. This paper is an extended version of the paper published in Petri Nets 2024 \cite{DBLP:conf/apn/MorardDB24}.

\section{Informal presentation}
\label{section:informal-presentation}
In the context of Petri nets, as illustrated in~\Cref{figure:background:petri-net-first-visualisation}, the states of a model are represented using \textit{markings}.
A marking can be viewed as a vector where each place is associated with a natural number that indicates the number of resources.
In our example, $(3,1)$ is the current marking of the Petri net.

Model checking Petri nets consists in being able to answer queries such as "How many resources are required at each place to make $t_0$ fireable ?".
Two perspectives on the problem are possible from this question. The first seeks all states that satisfy the query, namely \textit{global model checking}~\cite{schuele:2004:global}.
The second checks whether a specific state satisfies the query, corresponding to \textit{local model checking}.
Obviously, because it computes the entire state space for a query, global model checking also provides an answer for any given state, making it more general than the local approach.
In particular, such a space is often infinite, meaning that a simple explicit enumeration of all states is impossible, as would be the case for the property of making $t_0$ fireable. 

Our approach uses a new structure called \textit{symbolic vector}, described by a couple $(a,b)$, where $a$ and $b$ represent sets of markings.
To belong to $(a,b)$, a given marking must include (given a partial order relation on vectors) all the markings in $a$ and none of the markings in $b$.
This is a generalisation of the interval concept to vectors.
Indeed, membership in a natural number interval $[a, b]$ can be expressed as $\forall x \in \mathbb{N}, a \leq x \leq b$, which can equivalently be formulated as $\forall x \in \mathbb{N}, a \leq x \land b \not< x$. 
The latter definition aligns with the conceptual framework of symbolic vectors, albeit in this case, the focus is on vectors rather than natural numbers.
The latter definition shares the same idea as symbolic vector, excepts that we work on vectors instead of naturals.
In the example in~\Cref{figure:background:petri-net-first-visualisation}, the symbolic vector encoding all solutions of $t_0$ being fireable is $(\set{(2,0)}, \varnothing)$.
This couple includes an infinite number of markings, such as $(2,0)$ or $(3,4)$ and so on.
Let us assume that we do not want to include markings greater than or equal to $(8,9)$; we would get the symbolic vector $(\set{(2,0)}, \set{(8,9)})$. 
In this refined version, valid markings could be $(7,7)$ or $(10,8)$, whereas invalid markings could be $(1,0)$, $(8,9)$ or $(10,9)$.
Therefore, based on an inclusion and exclusion set, we can encapsulate sets of markings.
In addition to this structure, we leverage homomorphic operations to perform computations directly on the symbolic structure, rather than on each element of the set individually.
For example, the action of adding $1$ to $p_0$ for all markings in $(\set{(2,0)}, \varnothing)$ can be done in a single step, leading to $(\set{(3,0)}, \varnothing)$, rather than having to iterate over each element in the set to increment it.

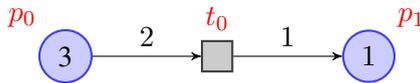
\begin{figure}[ht!]
\centering
\begin{tikzpicture}[>=stealth']
\node[place, label={135:$p_0$}] at (-2,0) (p0) {3};
\node[place, label={45:$p_1$}] at (2, 0) (p1) {$1$};
\node[transition, label={90:$t_0$}] at (0, 0) (t0) {};

\draw[->] (p0) to node[above] {$2$} (t0);
\draw[->] (t0) to node[above] {$1$} (p1);
\end{tikzpicture}

\caption{Petri net with two places and one transition}
\label{figure:background:petri-net-first-visualisation}
\end{figure}

Although symbolic vectors are a first step in encoding sets of markings, they are not sufficient to represent all possible sets, similar to intervals of natural numbers that cannot encode all sets of natural numbers.
To address this limitation, a set of intervals is required. 
For instance, the set of naturals $\set{1,2,3,7,8}$ is encoded as $\set{[1,3], [7,8]}$, necessitating two intervals.
However, constructing a symbolic version for sets poses a greater challenge.
Indeed, a symbolic representation is often linked to the task of finding a canonical form, as failing to do so may result in redundancy and other issues that diminish the efficiency of the approach.
For example, the set $\set{[1,4], [2,5], [3,6]}$ could be reduced to $\set{[1,6]}$ in natural numbers.
Due to the non-explicit representation, the uniqueness set rule is not sufficient to preserve the uniqueness of the symbolic set.
In fact, $[1,4] \neq [2,5]$ and $[1,4] \cap [2,5] \neq \varnothing$.
Each interval in this example is distinct, but all share values with the others.
The standard approach to constructing a canonical form in this context is to merge intervals whenever possible.

Unlike sets of intervals, creating a symbolic representation for sets of markings poses a more substantial challenge, primarily due to the non-strict partial order on its elements. 
Although all intervals can be compared to each other, the same is not true for markings.
Besides, a symbolic vector itself is an intricate structure that also needs to be canonised.
Consequently, additional constraints are necessary to manage canonicity of a \textit{symbolic vector set}.

By combining a symbolic encoding for sets of markings and encoding the evaluation process to be applied to symbolic vector sets, we are able to compute CTL formulas for global model checking, i.e. to determine all the markings of a net that satisfy a formula. 
It should be noted that the necessary operations for doing CTL model checking are mainly set operations and a \textit{pre} operation.
All of them are defined as homomorphisms on symbolic vector sets.

For the sake of simplicity, we only present the most important proofs in the remainder of this article.
Note that this work is a synthesis of the thesis in~\cite{morard:2024:symbolic} and all omitted proofs can be found in it.

\section{Symbolic structure for sets of vectors}
\label{section:svs}
This section formalises the structures known as \textit{symbolic vector} and \textit{symbolic vector set}.
Symbolic vectors serve as the initial encoding layer for Petri net markings, whereas sets of symbolic vectors constitute the set that incorporates them.
We present all of their respective definitions, properties, and canonical form.

\subsection{Symbolic vectors}

\begin{definition}[Non-strict partial order]
Let $a,b \in Q$ be two elements.
A \textit{non-strict partial order} is a homogeneous relation $\subseteq_f$ on a set $Q$, such that it verifies the following properties; reflexivity: $a \subseteq_f a$; antisymmetry: if $a \subseteq_f b \land b \subseteq_f a $ then $a = b$; transitivity: if $a \subseteq_f b \land b \subseteq_f c$ then $a \subseteq_f c$.
\end{definition}
Notice that two values can potentially not be compared, i.e. neither $a \subseteq_f b$ nor $b \subseteq_f a$ are in the relation.

\begin{definition}[Inclusion relation on vectors]
    Let $q_a, q_b \in Q_n$ be two elements such that $Q_n$ is the set of vectors of size $n \in \Nat$ over naturals $\Nat$  as elements. \footnote{In all generality it could have been any ordered monoid}
    The non-strict partial order on two vectors of the same size, namely the \textit{inclusion of vectors}, is a relation defined as $q_a \subseteq_f q_b \Leftrightarrow \forall p \in P, q_a(p) \leq q_b(p)$.
\end{definition}

We use the notation $\forall p \in P, q(p)$ where $q \in Q_n$ to iterate over the elements of our tuple.
This parallels a function structure, where each tuple location serves as the domain and the associated value functions as its codomain.
From the previous definition, the derivations of the definitions for $\not\subset_f, =_f, \neq_f$ can be inferred.
In addition, for the remainder of the article, we assume the same definition of $Q_n$.

\begin{definition}[Symbolic vector]
\label{definition:predicate-structure:belongs-to}
Let $Q_n$ be a set of vectors such that there exists a non-strict partial order $\subseteq_f$ on its elements.
A \textit{symbolic vector} is defined as: $(a,b) \in \SV, ~\text{where}~a,b \in \powerset{Q_n}$.
A vector that belongs to it, noted $\in_{sv}: Q_n \times \SV$, is defined by the relation $(q \in_{sv} (a,b)) \Leftrightarrow \forall q_a \in a, q_a \subseteq_f q \land \forall q_b \in b, (q_b \not\subset_f q \land q_b \neq_f q)$. We assume that $\not\in_{sv}$ is its negation.
\end{definition}

\begin{example}
We illustrate this with a few examples of vectors that either belong or do not belong to the corresponding symbolic vectors.
\begin{align*}
(0,3) &\in_{sv} (\{(0,1)\},\{(1,4)\})\\
(0,3) &\in_{sv} (\{(0,1),(0,2)\},\{(1,4),(0,5)\})\\
(0,3) &\not\in_{sv} (\{(1,1)\},\{(1,4)\})
\end{align*}
\end{example}

A symbolic vector relies on set theory and the definition of a non-strict partial order relation for elements inside it.
For a vector $q$ to be accepted by such a structure, all elements of $a$ must be included in $q$ and none of  the elements of $b$ must be included in it.
Note that the whole relation may be seen as a sequence of conjunctions ($\land$) where each predicate must be true.

\begin{remark}
Note that the theory of symbolic vectors could be generalised to any structure with a non-strict partial order.
However, for simplicity and a clear focus on Petri nets, we will describe it directly for vectors, the structure used to represent markings. 
Moreover, one advantage of vectors of totally ordered values is that there exists a lexicographic order on them that can be used as a total order.
Thus, it can help construct the canonical form.
\end{remark}

\begin{definition}[Underlying set of vectors]
\label{definition:predicate-structure:uf}
Let $sv \in \SV$ be a symbolic vector.
The \textit{set of underlying vectors} of a symbolic vector is a function $uf: \SV \rightarrow \powerset{Q_n}$ such that $uf(sv) = \set{q \in Q_n~|~ q \in_{sv} sv}$.
\end{definition}

\begin{figure}[ht!]
\centering
\scalebox{0.8}{
    \begin{minipage}[b]{0.48\textwidth}
        \centering
        \begin{tikzpicture}
            \draw[->] (0,0) -- (5,0) node[right] {};
            \draw[->] (0,0) -- (0,4) node[above] {};
        
            \foreach \x in {0,1,...,5}
                \draw (\x,0.1) -- (\x,-0.1) node[below] {$\x$};
        
            \foreach \y in {0,1,...,4}
                \draw (0.1,\y) -- (-0.1,\y) node[left] {$\y$};
        
            \draw[draw=none, pattern=dots] (1,0) rectangle (5,1);
            \draw[draw=none, pattern=dots] (0,1) rectangle (1,4);
            \draw[draw=none, pattern=north west lines] (1,1) rectangle (5,4);
        
            \draw[dashed] (1,0) -- (1,4) node[above] at (1,4) {};
            \draw[dashed] (0,1) -- (5,1) node[above] at (5,1) {};
        
        \end{tikzpicture}
        \captionsetup{font=footnotesize}
        \captionof{figure}{Visualisation of the symbolic vector $(\set{(1,0), (0,1)}, \varnothing)$.}
        \label{figure:graph-visualisation1}
    \end{minipage}
}
\scalebox{0.8}{
    \begin{minipage}[b]{0.48\textwidth}
        \centering
        \begin{tikzpicture}
            \draw[->] (0,0) -- (5,0) node[right] {};
            \draw[->] (0,0) -- (0,4) node[above] {};
        
            \foreach \x in {0,1,...,5}
                \draw (\x,0.1) -- (\x,-0.1) node[below] {$\x$};
        
            \foreach \y in {0,1,...,4}
                \draw (0.1,\y) -- (-0.1,\y) node[left] {$\y$};
        
            \draw[draw=none, pattern=dots] (1,0) rectangle (5,1);
            \draw[draw=none, pattern=dots] (0,1) rectangle (1,4);
            \draw[draw=none, pattern=north west lines] (1,1) rectangle (2,4);
            \draw[draw=none, pattern=north west lines] (2,1) rectangle (4,3);
            \draw[draw=none, pattern=checkerboard] (2,3) rectangle (5,4);
            \draw[draw=none, pattern=checkerboard] (4,1) rectangle (5,3);
        
            \draw[dashed] (1,0) -- (1,4) node[above] at (1,4) {};
            \draw[dashed] (0,1) -- (5,1) node[above] at (5,1) {};
        
        \end{tikzpicture}
        \captionsetup{font=footnotesize}
        \captionof{figure}{Visualisation of the symbolic vector $(\set{(1,0), (0,1)}, \set{(2,3), (4,1)})$.}
        \label{figure:graph-visualisation2}
    \end{minipage}
}
\end{figure}

In~\Cref{figure:graph-visualisation1}, we present an initial example of visualisation \footnote{the upper bound on both axis is artificially limited for the sake of presentability} for the underlying set of vectors from $(\set{(1,0), (0,1)}, \varnothing)$.
The coordinate of a value in this grid is represented as a two-dimensional vector.
Vectors in the dot pattern satisfy only one of the conditions, i.e., $\forall q \in Q_2, ((1,0) \subseteq q \land (0,1) \not\subset q \lor ((0,1) \subseteq q \land (1,0) \not\subset q)$.
In addition, those in the hatch pattern satisfy both simultaneously, i.e. $\forall q \in Q_2, (1,0) \subseteq q \land (0,1) \subseteq q$.
This illustrates that the left vectors must all be included, denoted by the "\textit{and}" operator. 
In contrast, when adding multiple constraints, as shown in~\Cref{figure:graph-visualisation2} with the chequerboard pattern, the "\textit{or}" operator comes into play, making a solution invalid if it includes $(2,3)$, $(4,1)$, or both simultaneously.
It could be formally rewritten as $\forall q \in Q_2, (1,0) \subseteq q \land (0,1) \subseteq q \land (2,3) \not\subset q \land (4,1) \not\subset q$.
Therefore, the valid vectors are those belonging to the hatch pattern. 
Note that, while we bounded the domain size for the graphical representation, the hatch pattern symbolises an infinite number of solutions.
Moreover, while this visualisation is effective for a vector of size 2, it needs adaptation for varying vector lengths. 
Nevertheless, the fundamental concept remains consistent across different vector sizes.

\begin{definition}[Intersection of two symbolic vectors]
\label{def:predicate-structure:intersection-ps}
Let $sv = (a,b), sv' = (c,d) \in \SV$ be two symbolic vectors.
The \textit{intersection} between two symbolic vectors, noted $\cap_{sv}: \SV \times \SV \rightarrow \SV$, is defined as: $sv \cap_{sv} sv' = (a \cup c, b \cup d)$.
\end{definition}

The intersection takes the conditions of each symbolic vector to combine them.
Thus, to belong to their intersection, a new vector must belong to $a$ and $c$, but not to $b$ and $d$.
Furthermore, the intersection is the only operation on symbolic vectors that gives a symbolic vector as an output by combining two of them.
Similarly to intervals, the difference or union may return a set of symbolic intervals instead of a single symbolic interval.

\begin{lemma}
\label{lemma:predicate-structure:intersection-homomorphism-sv}
Let $sv, sv' \in \SV$, then $uf(sv \cap_{sv} sv') = uf(sv) \cap uf(sv')$.
\end{lemma}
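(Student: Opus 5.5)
The plan is to prove the equality by double inclusion, or more directly by a chain of equivalences on membership. Everything is unfolded through \Cref{definition:predicate-structure:uf} and \Cref{definition:predicate-structure:belongs-to}. Write $sv=(a,b)$ and $sv'=(c,d)$, so that by \Cref{def:predicate-structure:intersection-ps} we have $sv \cap_{sv} sv' = (a\cup c, b\cup d)$. For an arbitrary $q \in Q_n$, I will show that $q \in uf(sv \cap_{sv} sv')$ holds if and only if $q \in uf(sv)\cap uf(sv')$.

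The key step is that a universal quantifier over a union splits into a conjunction of universal quantifiers: $\forall x \in A\cup B,\ \varphi(x)$ is equivalent to $(\forall x\in A,\ \varphi(x)) \land (\forall x \in B,\ \varphi(x))$. Applying this, membership $q\in_{sv}(a\cup c, b\cup d)$ is
\begin{align*}
&\forall q_a \in a\cup c,\ q_a \subseteq_f q \ \land\ \forall q_b \in b\cup d,\ (q_b\not\subset_f q \land q_b \neq_f q)\\
\Leftrightarrow\ &\bigl(\forall q_a \in a,\ q_a \subseteq_f q\bigr) \land \bigl(\forall q_c \in c,\ q_c \subseteq_f q\bigr) \land \bigl(\forall q_b \in b,\ \psi(q_b)\bigr) \land \bigl(\forall q_d \in d,\ \psi(q_d)\bigr),
\end{align*}
where $\psi(x)$ abbreviates $x\not\subset_f q \land x\neq_f q$. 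Regrouping the four conjuncts by commutativity and associativity of $\land$ yields $(q\in_{sv}(a,b)) \land (q \in_{sv}(c,d))$. By \Cref{definition:predicate-structure:uf}, this is exactly $q\in uf(sv)\cap uf(sv')$.

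Because every step is an equivalence, both inclusions follow simultaneously. The only point needing care, rather than presenting a real obstacle, is the degenerate case where some of $a,b,c,d$ are empty. There the corresponding universal statements are vacuously true, and the splitting of quantifiers over unions remains valid, so no separate case analysis is needed. No property of $\subseteq_f$ beyond it being a fixed relation is used. In particular, none of reflexivity, antisymmetry or transitivity is required, which is consistent with the remark that the theory generalises to any partial order.
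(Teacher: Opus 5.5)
Your proof is correct and follows essentially the same route as the paper: you unfold $\cap_{sv}$, $uf$ and $\in_{sv}$, split the universal quantifiers over $a \cup c$ and $b \cup d$ into conjunctions, and regroup them into membership in $(a,b)$ and in $(c,d)$. The only difference is presentation: the paper writes this as a chain of set equalities, while you argue by pointwise equivalences on membership.
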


Proof of~\Cref{lemma:predicate-structure:intersection-homomorphism-sv}:
We want to prove the following lemma.
Let $sv, sv' \in \SV$, then $uf(sv \cap_{sv} sv') = uf(sv) \cap uf(sv')$.
\begin{proof}
    \begin{align*}
        uf&(sv \cap_{sv} sv') = uf((a,b) \cap (c,d)) & (\text{\Cref{def:predicate-structure:intersection-ps}})\\
        &= \set{q ~|~ q \in_{sv} (a \cup c, b \cup d))} & (\text{\Cref{definition:predicate-structure:uf}})\\
        &= \set{q ~|~ \forall q' \in (a \cup c), q' \subseteq_f q \land \forall q'' \in (b \cup d), (q'' \not\subset_f q \land q'' \neq_f q) } & (\text{\Cref{definition:predicate-structure:belongs-to}}) \\
        &= \set{q ~|~ \forall q_a \in a, q_a \subseteq_f q \land \forall q_b \in b, (q_b \not\subset_f q \land q_b \neq_f q) &\text{(Set rules)} \\
        & ~~~~\land \forall q_c \in c, q_c \subseteq_f q \land \forall q_d \in d, (q_d \not\subset_f q \land q_d \neq_f q)} \\
        &=  \set{q ~|~ \forall q_a \in a, q_a \subseteq_f q \land \forall q_b \in b, (q_b \not\subset_f q \land q_b \neq_f q)} &\text{(Set rules)} \\
        & ~~~~\cap \set{q' ~|~ \forall q_c \in c, q_c \subseteq_f q' \land \forall q_d \in d, (q_d \not\subset_f q' \land q_d \neq_f q')} \\
        &= \set{q ~|~ q \in_{sv} (a,b)} \cap \set{q' ~|~ q' \in_{sv} (c,d)} & (\text{\Cref{definition:predicate-structure:belongs-to}}) \\
        &= uf(a,b) \cap uf(c,d) & (\text{\Cref{def:predicate-structure:intersection-ps}}) \\
        &= uf(sv) \cap uf(sv')
    \end{align*}
\end{proof}

In the following, we assume that $f_\varepsilon \in Q_n$ is the \textit{zero vector} such that all its values are set to 0.

\begin{example}
We illustrate this with a few examples of vectors to represent specific sets.
\begin{align*}
\emptyset & = uf(\{(1,1)\},\{(1,1)\}) = uf(\{(3,1)\},\{(1,1)\}= uf(\{f_\epsilon\},\{f_\epsilon\})\\
Q_n & = uf(\emptyset,\emptyset)
\end{align*}
\end{example}

\begin{definition}[Join]
\label{definition:predicate-structure:convergent-function}
Let us assume $p \in \powerset{Q}$ is a set of vectors.
The \textit{join} function, which finds the \textit{least upper bound} of a set of vectors, written as $join: \powerset{Q_n} \rightarrow \powerset{Q_n}$, is defined as $join(\varnothing) = \set{f_\varepsilon}$ and
$join(p) = \set{q_{join}}, p \neq \varnothing,~\text{such that}~\forall q \in p, (q \subseteq_f q_{join} \land \not\exists q' \in \powerset{Q}, q' \subseteq_f q_{join} \land q_{join} \neq_f q' \land q \subseteq q')$.
\end{definition}

The $join$ operation has its origins in lattice theory and has been adapted for direct application on vectors.
For example, $join(\set{(1,1), (9,9)}) = \set{(9,9)}$ or $join(\set{(1,3), (7,2), (4,4)}) = \set{(7,4)}$.
Note that $join$ returns a singleton.
Additionally, when the input is an empty set, it returns a vector filled with zeros, representing the global least upper bound, ensuring its inclusion in all vectors.

\begin{lemma}
\label{lemma:predicate-structure:reduction-conv}
Let $sv = (a,b) \in \SV$.
Then, $uf(a,b) = uf(join(a),b)$.
\end{lemma}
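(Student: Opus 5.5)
Since the second component $b$ is identical on both sides and the definition of $\in_{sv}$ (\Cref{definition:predicate-structure:belongs-to}) is a conjunction of an $a$-part and a $b$-part, it suffices to show that for every $q \in Q_n$,
\[
\bigl(\forall q_a \in a,\ q_a \subseteq_f q\bigr) \iff \bigl(\forall q_j \in join(a),\ q_j \subseteq_f q\bigr).
\]
Equality of the underlying sets via \Cref{definition:predicate-structure:uf} then follows by set extensionality. Put differently, I will show that the set of upper bounds of $a$ coincides with the set of upper bounds of $join(a)$.

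The argument splits on whether $a$ is empty. If $a = \varnothing$, the left-hand condition holds vacuously. On the right, $join(\varnothing) = \set{f_\varepsilon}$, and $f_\varepsilon \subseteq_f q$ holds for every $q$ because all components of $q$ are naturals and hence at least $0$. Both sides are therefore always true.

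If $a \neq \varnothing$, then $join(a) = \set{q_{join}}$, and I first make explicit that $q_{join}$ is the componentwise maximum, $q_{join}(p) = \max_{q_a \in a} q_a(p)$. This reading is justified by the examples following \Cref{definition:predicate-structure:convergent-function}. For the direction ($\Leftarrow$), each $q_a \in a$ satisfies $q_a \subseteq_f q_{join}$ by the definition of join, so $q_{join} \subseteq_f q$ gives $q_a \subseteq_f q$ by transitivity of the partial order. For the direction ($\Rightarrow$), if every $q_a \subseteq_f q$, then $q_a(p) \le q(p)$ for every place $p$ and every $q_a \in a$. Taking the maximum over $a$ in each component gives $q_{join}(p) \le q(p)$, that is, $q_{join} \subseteq_f q$.

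The main obstacle is this ($\Rightarrow$) direction, because it depends on the leastness of $q_{join}$. The definition of join states leastness somewhat informally: it excludes a $q'$ strictly below $q_{join}$ that still bounds the elements. I would read it as ``$q_{join}$ is the least upper bound of $a$'' and justify that reading by identifying it with the componentwise maximum, which is the standard least upper bound in the product order on $\Nat^n$. This identification needs $a$ to be finite, or at least componentwise bounded, for the maximum to exist. I would state that assumption explicitly, as is implicit wherever the paper applies join. With leastness in hand, the equivalence holds for all $q$, the $b$-conjunct is unchanged, and hence $uf(a,b) = uf(join(a),b)$.
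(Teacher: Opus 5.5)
Your proof is correct and is the natural argument. The paper omits this proof and defers it to the thesis, but its remark that the reduction of $a$ ``capitalises on the `and' relation'' is exactly your observation that $a$ and $join(a)$ have the same upper bounds while the $b$-conjunct is untouched; your caveat that $a$ must be finite or componentwise bounded is also right, since otherwise $uf(a,b)=\varnothing$ and $join(a)$ does not exist.

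One tightening: rather than reading ``$join$ is the componentwise maximum'' off the examples, derive it. The componentwise maximum $m$ of $a$ is itself an upper bound, and $m \subseteq_f u$ for every upper bound $u$. Hence any upper bound admitting no strictly smaller upper bound must equal $m$. That is the intended reading of the printed definition, whose quantifier over $a$ is misplaced and which, once repaired, states only minimality, so this also settles the minimal-versus-least point you raise.
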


Thanks to \Cref{lemma:predicate-structure:reduction-conv}, we can treat $a$ as a singleton, which is an important simplification.
However, the same idea cannot be applied to $b$.
Indeed, the reduction of the set $a$ capitalises on the '\textit{and}' relation between vectors, while $b$ cannot achieve the same simplification due to the '\textit{or}' relation.

\begin{definition}[Empty symbolic vector]
\label{definition:predicate-structure:empty-uf}
Let $sv = (a,b) \in \SV$ be a symbolic vector and $\set{q_a} = join(a)$\footnote{If $a = \varnothing$, then $join(a) = \set{f_\varepsilon}$, where $q_a$ is included in all markings of $b$}.
An empty symbolic vector, noted $\varnothing_{sv}$, is defined as $sv \in \varnothing_{sv} \Leftrightarrow \exists q_b \in b, q_b \subseteq_f q_a$.
Furthermore, when expressing $sv \not\in \varnothing_{sv}$, we assume its negation.
\end{definition}

In fact, addressing emptiness in symbolic vectors is a crucial challenge. 
Given their infinite number, there is considerable potential for redundancy, necessitating the ability to identify them for the development of a future canonical form.
For example, $uf(\set{(3,3)}, \set{(1,1)}) = uf(\set{(0,0)}, \set{(0,0)}) = \varnothing$.

\begin{definition}[Canonicity of symbolic vectors]
Let $sv = (a,b) \in \SV$ be a symbolic vector.
$sv$ is \textit{canonical} if and only if:
\begin{enumerate}
    \item $a = \set{q_a}$ is a singleton.
    \item $sv \not\in \varnothing_{sv} \lor sv = (\set{f_\varepsilon}, \set{f_\varepsilon})$.
    \item $\forall q_b, q_b' \in b, q_b \neq q_b', q_b$ and $q_b'$ are not \textit{comparable}.
    \item $\forall q_b \in b, \forall p \in P, q_a(p) \leq q_b(p)$.
\end{enumerate}
\end{definition}

To ensure the canonicity of a symbolic vector, several conditions must be satisfied, each contributing to its unique representation.
Firstly, we rely on the join function described in \Cref{definition:predicate-structure:convergent-function} as the initial condition, which ensures that $a$ is always expressible as a singleton.
Secondly, addressing the issue of multiple possible expressions for an empty symbolic vector, a deliberate decision has been made to maintain a single representation, namely $(\set{f_\varepsilon}, \set{f_\varepsilon})$.
Thirdly, the vectors in the set $b$ must not be encapsulated within a vector in the same set, since the most inclusive vector already encapsulates all constraints.
For example, $uf(\set{(1,1)}, \set{(4,4), (5,5)}) = uf(\set{(1,1)}, \set{(4,4)})$, because $(4,4) \subseteq_f (5,5)$.
Lastly, the fourth condition ensures that each value of $q_a$ serves as a minimum bound for each value of each vector in $b$, as illustrated in the following exemple.
\begin{example}
    $uf(\set{(2,0,5)}, \set{(4,0,2)}) = uf(\set{(2,0,5)}, \set{(4,0,5)})$.
    Both vectors have the same underlying representation.
    Although $(4,0,2) \neq_f (4,0,5)$, the minimum vector is $(2,0,5)$, which forces each component to be greater than or equal to it to be accepted.
    Therefore, as long as the value of the third component is less than or equal to 5 for the right part, the underlying set remains equivalent.
    Furthermore, by imposing the minimum value of $q_a$ in $b$, all vectors in $b$ are inherently included in $a$, i.e. $\forall q_b \in b, q_a \subseteq_f q_b$.   
\end{example}

\begin{definition}[Symbolic equality]
\label{theorem:canonical-ps:canonicity-ps}
Let $sv, sv' \in \SV$ be two symbolic vectors.
Then, $uf(sv) = uf(sv') \Leftrightarrow sv =_{sv} sv'$.
\end{definition}


\begin{remark}
In the following, we assume a function $can_{sv} : \SV \rightarrow \SV$ that returns the canonical form of a symbolic vector.
A full description of this function would require introducing the complete collection of operations used to compute the canonical form, which is provided in~\cite{morard:2024:symbolic}.
Note that we will later introduce a distinct canonicalisation function, denoted by $can_{svs}$, which operates on sets rather than on symbolic vectors.
\end{remark}

\begin{lemma}
\label{lemma:canonical-ps:basic-equality}
Let $sv \in \SV$ be a symbolic vector.
Then, $can_{sv}(sv) =_{sv} sv$.
\end{lemma}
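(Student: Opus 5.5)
By the definition of symbolic equality, it suffices to prove $uf(can_{sv}(sv)) = uf(sv)$. The plan is to treat $can_{sv}$ as a composition of elementary rewriting steps, one for each canonicity condition, and show that each step preserves $uf$. Composing these equalities then gives the statement. Write $sv = (a,b)$; the steps, in order, are as follows.

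\emph{Step 1 (condition 1).} Replace $a$ by $join(a) = \set{q_a}$. This step preserves $uf$ directly by \Cref{lemma:predicate-structure:reduction-conv}.

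\emph{Step 2 (condition 4).} Replace each $q_b \in b$ by its componentwise maximum with $q_a$, that is, $q_b'(p) = \max(q_a(p), q_b(p))$. For any $q$ with $q_a \subseteq_f q$, we claim $q_b \subseteq_f q \Leftrightarrow q_b' \subseteq_f q$.
\begin{itemize}
\item If $q_b \subseteq_f q$, then $q_b' \subseteq_f q$, because each component satisfies $\max(q_a(p), q_b(p)) \leq q(p)$.
\item Conversely, $q_b \subseteq_f q_b'$ gives $q_b \subseteq_f q_b' \subseteq_f q$, so $q_b \subseteq_f q$.
\end{itemize}
Since the exclusion clause ``$q_b \not\subset_f q \land q_b \neq_f q$'' simply says $q_b \not\subseteq_f q$, membership is unchanged on the region where the $a$-constraint holds. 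Outside that region, both sides reject $q$ anyway.

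\emph{Step 3 (condition 3).} Remove every $q_b' \in b$ for which some other $q_b \in b$ satisfies $q_b \subseteq_f q_b'$. If $q_b' \subseteq_f q$, then by transitivity $q_b \subseteq_f q$, so the constraint from $q_b$ already excludes $q$. The conjunction of exclusion constraints is therefore unchanged. Because $\subseteq_f$ is a partial order on a set that we take to be finite, the minimal elements exist, and removing all non-minimal ones is well defined. Identical duplicates collapse automatically, since $b$ is a set.

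\emph{Step 4 (condition 2).} If the result lies in $\varnothing_{sv}$, there exists $q_b \subseteq_f q_a$. Then every $q$ with $q_a \subseteq_f q$ also satisfies $q_b \subseteq_f q$, so $uf = \varnothing$. This equals $uf(\set{f_\varepsilon},\set{f_\varepsilon})$, because $f_\varepsilon \subseteq_f q$ for every $q$.

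The main obstacle is that the paper defers the precise definition of $can_{sv}$ to \cite{morard:2024:symbolic}, so the proof must fix this decomposition as the definition of the function. The delicate calculation is Step 2, which requires the case split on whether $q_a \subseteq_f q$. One must also check that the order of steps does not break earlier conditions. In particular, Step 2 can create new comparabilities in $b$, so Step 3 must come after it. Step 4 can be performed last, since emptiness is invariant under Steps 2 and 3.
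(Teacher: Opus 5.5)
Your proof is correct. The paper gives no proof of this lemma: both the definition of $can_{sv}$ and the argument are deferred to the thesis. Its worked example, however, describes $can_{sv}$ by exactly your elementary steps ($join$, lifting each exclusion vector by $q_a$, discarding dominated exclusion vectors), so showing that each step preserves $uf$, with the empty case sent to $(\set{f_\varepsilon},\set{f_\varepsilon})$, is the intended route. The order in which you lift and prune does not matter for the $uf$-equality, but lifting before pruning is the right order for canonicity: the example's order (prune, then lift) can leave comparable vectors in $b$, e.g.\ with $q_a=(2,2)$, $b=\set{(0,5),(1,3)}$ becomes $\set{(2,5),(2,3)}$.
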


\begin{example}
$can_{sv}(\set{(1,2), (2,1)}, \set{(0,5), (0,7)}) =_{sv} (\set{(2,2)}, \set{(2,5)})$.

The canonical form is obtained through the following three main steps:
\begin{enumerate}
\item Apply the $join$ operation: $join(\set{(1,2),(2,1)}) = \set{(2,2)}$.
\item Remove the upper bound $(0,7)$, because the vector $(0,5)$ includes all values greater than $(0,7)$.
\item Modify the remaining upper bound $(0,5)$ according to the lower bound $\set{(2,2)}$, yielding $(2,5)$.
\end{enumerate}
\end{example}

\subsection{Symbolic vector sets}

As mentioned before when using intervals, union is not an internal operation of intervals. The same occurs for symbolic vectors. We then need the following definitions and properties that are based on the previous definition of symbolic vectors and can be seen as extensions that work on sets.

\begin{definition}[Symbolic vector sets]
A symbolic vector set, belonging to $\SVS$, is a set containing exclusively symbolic vectors.
A vector that belongs to it, noted $\in_{svs}: Q_n \times \SVS$, is defined by the relation $(q \in_{svs} svs) \Leftrightarrow \exists sv \in svs, q \in_{sv} sv$.
We assume $\not\in_{svs}$ as its negation.
\end{definition}

The equivalent class of the \textit{empty symbolic vector set} is: $svs \in \varnothing_{svs} \Leftrightarrow \forall sv \in svs, sv \in \varnothing_{sv}$, where $\varnothing_{svs}$ is the set of all empty symbolic vector sets.
The empty set $\varnothing$ also belongs to $\varnothing_{svs}$.

\begin{definition}[Underlying vectors of set]
\label{def:predicate-structure:underlying-vectors-of-set}
The \textit{set of underlying vectors} of a symbolic vector set is a function, noted $\usf: \SVS \rightarrow \powerset{Q_n}$, such that:
\begin{align*}
    \usf(svs) = \bigcup_{sv~\in~svs}(uf(sv))
\end{align*}
\end{definition}

Similarly to $uf$, we unfold the result of each symbolic vector to obtain the final vector set.

\begin{definition}[Union \& intersection]
\label{def:predicate-structure:union-intersection}
Let $svs, svs' \in \powerset{\SV}$ be two symbolic vector sets.
The \textit{union} and \textit{intersection} between two symbolic vector sets, noted $\cup_{svs}, \cap_{svs}: \SVS \times \SVS \rightarrow \SVS$, are defined as: $svs \cup_{svs} svs' = svs \cup svs'$ and $svs \cap_{svs} svs' = \bigcup_{sv \in svs}~\bigcup_{sv' \in svs'} sv~ \cap_{sv} sv'$.
\end{definition}

Union for symbolic vector sets is simply set union.
The intersection of two symbolic vector sets is the combination of the intersection between all the symbolic vectors of $sv$ and $sv'$.

\begin{lemma}[Commutativity]
\label{lemma:predicate-structure:commutativity-svs}
Let $svs, svs' \in \SVS$ be two symbolic vector sets and $\ast \in \set{\cup_{svs}, \cap_{svs}}$
Then, $svs \ast svs' = svs' \ast svs$.
\end{lemma}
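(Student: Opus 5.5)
The plan is to treat the two operators separately, proving syntactic equality of the resulting sets rather than only equality of their underlying vectors. For $\ast = \cup_{svs}$, \Cref{def:predicate-structure:union-intersection} gives $svs \cup_{svs} svs' = svs \cup svs'$. Commutativity of ordinary set union then yields $svs \cup svs' = svs' \cup svs = svs' \cup_{svs} svs$ directly.

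For $\ast = \cap_{svs}$, the first step is to show that $\cap_{sv}$ is itself commutative as a syntactic operation on symbolic vectors. With $sv = (a,b)$ and $sv' = (c,d)$, \Cref{def:predicate-structure:intersection-ps} gives
\begin{align*}
sv \cap_{sv} sv' = (a \cup c, b \cup d) = (c \cup a, d \cup b) = sv' \cap_{sv} sv,
\end{align*}
again by commutativity of set union, applied componentwise. No appeal to $uf$ or to \Cref{lemma:predicate-structure:intersection-homomorphism-sv} is needed.

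Next, I would rewrite the intersection of sets. Read with $sv \cap_{sv} sv'$ understood as the singleton $\set{sv \cap_{sv} sv'}$, the definition says
\begin{align*}
svs \cap_{svs} svs' = \set{sv \cap_{sv} sv' ~|~ sv \in svs, sv' \in svs'}.
\end{align*}
I would then use the commutativity just proved to replace each element by $sv' \cap_{sv} sv$. Renaming the bound variables and exchanging the order of the two indexed unions, which is allowed because union is associative and commutative over index sets, gives $\set{sv' \cap_{sv} sv ~|~ sv' \in svs', sv \in svs} = svs' \cap_{svs} svs$.

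The only delicate point is notational. The definition writes a union of symbolic vectors rather than of singleton sets, so I would make the singleton reading explicit before swapping the indices. Apart from that, the whole argument reduces to the commutativity of set union and of the pairing into a Cartesian product.
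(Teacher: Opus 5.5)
Your proof is correct and is essentially the intended argument. The paper omits this proof as routine and defers it to the thesis. As you show, it reduces to commutativity of ordinary set union: directly for $\cup_{svs}$, and for $\cap_{svs}$ componentwise on $(a \cup c, b \cup d)$, followed by reindexing the double union under the singleton reading you make explicit; syntactic equality is the right reading of $=$ here, since neither operation involves canonicalisation.
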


Before introducing the difference between two symbolic vector sets, we formalise the difference of two symbolic vectors.

\begin{definition}[Difference of symbolic vectors]
\label{definition:predicate-structure:difference}
Let $sv = (a,b), sv' = (c,d) \in \SV$ be symbolic vectors.
The \textit{difference} between two symbolic vectors, noted $\setminus_{sv}: \SV \times \SV \rightarrow \SVS$, is defined as:
\begin{align*}
    sv \setminus_{sv} sv' = \{(a, b \cup c)\} \cup_{svs}~\bigcup_{q_d \in d} (a \cup c \cup \set{q_d}, b)
\end{align*}
\end{definition}

We remove all the values that are common to both symbolic vectors.
$c$ and $d$ are used as upper and lower bounds, respectively.
Note that in the construction of the new set we have $b \cup \set{q_c}$ and $a \cup c \cup \set{q_d}$ instead of $\set{q_c}$ and $\set{q_d}$.
The omission of $c$ in the construction of the latter could cause a problem when $q_c \not \leq q_d$.
This is crucial to preserve the initial condition in addition to the new bounds, preventing acceptance of values that were not part of the original conditions.

\begin{example}
    Let us illustrate with an example:
    \begin{align*}
        &(\set{(1,1)}, \set{(9,9)}) \setminus_{sv} (\set{(4,4)}, \set{(7,2), (5,6)}) \\
        =&~\set{(\set{(1,1)}, \set{(9,9)} \cup \set{(4,4)})} \cup_{svs} \set{(\set{(1,1)} \cup \set{(4,4)} \cup \set{(7,2)}, \set{(9,9)})} \\
        &\cup_{svs} \set{(\set{(1,1)} \cup \set{(4,4)} \cup \set{(5,6)}, \set{(9,9)})} \\
        =&~\set{(\set{(1,1)}, \set{(4,4)}), (\set{(7,4)}, \set{(9,9)}), (\set{(5,6)}, \set{(9,9)})}
    \end{align*}
    The last step involves simplifications, ensuring that each symbolic vector adheres to the canonical form.
    This primarily involves using $join$ for the left component of each symbolic vector.
    Then, we verify that there are no comparable vectors in $b$, as exemplified by $\set{(9,9)} \cup \set{(4,4)}$, reduced to $\set{(4,4)}$.
    Note that this operation may introduce redundancy in the resulting symbolic vector set.
    The vector $(7,6)$ belongs to $(\set{(7,4)}, \set{(9,9)})$ and $(\set{(5,6)}, \set{(9,9)})$.
\end{example}

\begin{lemma}
\label{lemma:predicate-structure:difference-ps-homomorphism}
Let $sv, sv' \in \SV$. 
Then, $\usf(sv \setminus_{sv} sv') = uf(sv) \setminus uf(sv')$.
\end{lemma}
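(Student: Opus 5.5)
We prove the equality by unfolding both sides into membership predicates and comparing them pointwise, in the same style as the proof of \Cref{lemma:predicate-structure:intersection-homomorphism-sv}. Throughout we abbreviate the ``exclusion'' predicate $q_b \not\subset_f q \land q_b \neq_f q$ by $q_b \not\subseteq_f q$, i.e.\ $\neg(q_b \subseteq_f q)$. With $sv=(a,b)$ and $sv'=(c,d)$, \Cref{definition:predicate-structure:belongs-to} then reads
\[
q \in_{sv} (a,b) \Leftrightarrow A(q) \land B(q),
\]
where
\[
A(q) \equiv \forall q_a\in a,\ q_a\subseteq_f q, \qquad B(q) \equiv \forall q_b\in b,\ q_b\not\subseteq_f q.
\]
We define $C(q)$ and $D(q)$ analogously for $c$ and $d$.

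\textbf{Step 1: the right-hand side.} We have $q \in uf(sv)\setminus uf(sv')$ iff $A\land B\land\neg(C\land D)$. Since $\neg(C\land D)\equiv \neg C \lor (C\land\neg D)$, this equals
\[
(A\land B\land \neg C)\ \lor\ (A\land B\land C\land \neg D).
\]
We write it in this disjoint form because it aligns with the two kinds of components in \Cref{definition:predicate-structure:difference}.

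\textbf{Step 2: the left-hand side.} By \Cref{def:predicate-structure:underlying-vectors-of-set}, $\usf$ of the produced set is the union of the $uf$ of its members.

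For the first member, $q\in_{sv}(a,b\cup c)$ iff $A\land B\land \forall q_c\in c,\ q_c\not\subseteq_f q$.

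For each $q_d\in d$, $q\in_{sv}(a\cup c\cup\{q_d\},b)$ iff $A\land C\land q_d\subseteq_f q\land B$. Taking the union over $q_d\in d$ gives $A\land B\land C\land \exists q_d\in d,\ q_d\subseteq_f q$, which is exactly $A\land B\land C\land\neg D$. So the second disjunct matches Step 1 directly.

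\textbf{Step 3: the main obstacle.} The first disjunct does not match as literally written. The first member yields ``$q$ includes \emph{no} element of $c$'', i.e.\ $\forall q_c\in c,\ \neg(q_c\subseteq_f q)$, whereas $\neg C$ is $\exists q_c\in c,\ \neg(q_c\subseteq_f q)$. These agree only when $c$ is a singleton. To close the gap we invoke \Cref{lemma:predicate-structure:reduction-conv}: replace $sv'$ by $(join(c),d)$, which has the same $uf$, so the right-hand side is unchanged. Since $c=\{q_c\}$ is then a singleton, $\forall$ and $\exists$ over $c$ coincide, and the first member gives exactly $A\land B\land\neg C$.

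It remains to check the degenerate case $c=\varnothing$. Here $join(\varnothing)=\{f_\varepsilon\}$, and the first member becomes $(a,b\cup\{f_\varepsilon\})$, which is empty because $f_\varepsilon\subseteq_f q$ for every $q$. This is consistent with $\neg C$ being false when $c=\varnothing$.

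Consequently the statement holds as an equality of underlying sets once $sv'$ is taken with a singleton left component, which is the case for canonical symbolic vectors. We would make this assumption explicit, or equivalently read $c$ in \Cref{definition:predicate-structure:difference} as $join(c)$, justified by \Cref{lemma:predicate-structure:reduction-conv}. Combining Steps 1--3 by distributing $\cup$ over the disjunction concludes the proof.
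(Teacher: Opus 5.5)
Your proof is correct and is the natural unfold-and-regroup argument, mirroring the paper's proof of \Cref{lemma:predicate-structure:intersection-homomorphism-sv}; the paper does not give its own proof of this lemma and defers it to the thesis.

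Your Step~3 caveat points to a real defect in the statement, not in your proof. For arbitrary $sv'$ the identity fails. Take $sv=(\set{f_\varepsilon},\varnothing)$ and $sv'=(\set{(1,0),(0,1)},\varnothing)$. Then $sv\setminus_{sv}sv'=\set{(\set{(0,0)},\set{(1,0),(0,1)})}$, whose underlying set is $\set{(0,0)}$, yet $(1,0)\in uf(sv)\setminus uf(sv')$. The case $c=\varnothing$ also fails: with $sv=sv'=(\varnothing,\varnothing)$ the left-hand side is all of $Q_n$ and the right-hand side is empty.

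However, \Cref{lemma:predicate-structure:reduction-conv} does not ``close the gap'' for the literal statement. Replacing $c$ by $join(c)$ leaves the right-hand side unchanged but changes the left-hand side, because $\setminus_{sv}$ depends on the syntax of $c$. What you have actually proved is the lemma under the added hypothesis $c=\set{q_c}$, for instance when $sv'$ is canonical. This is what the paper tacitly assumes when it describes the first component of \Cref{definition:predicate-structure:difference} as $b\cup\set{q_c}$. State that hypothesis and the counterexample explicitly, rather than presenting the replacement as a repair.
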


\begin{definition}[Difference of symbolic vector sets]
\label{definition:predicate-structure:difference-svs}
Let $svs = \set{sv_1, \dots, sv_n}, svs' \in \SVS, n > 1$ and $sv, sv' \in \SV$.
The \textit{difference} between two symbolic vector sets, noted $\setminus_{svs}: \SVS \times \SVS \rightarrow \SVS$, is defined inductively as:
\begin{align*}
svs \setminus_{svs} \varnothing_{svs} &= svs,~\varnothing_{svs} \setminus_{svs} svs = \varnothing_{svs} \\
\set{sv} \setminus_{svs} \set{sv'} &= sv \setminus_{sv} sv' \\
\set{sv} \setminus_{svs} \set{sv_1, \dots, sv_n} &= (sv \setminus_{sv} sv_1) \setminus_{svs} \set{sv_2, \dots, sv_n} \\
\set{sv_1, \dots, sv_n} \setminus_{svs} svs' &= \set{sv_1} \setminus_{svs} svs' \cup_{svs} \set{sv_2, \dots, sv_n} \setminus_{svs} svs'
\end{align*}
\end{definition}

The difference is carried out recursively by taking each left symbolic vector and applying the difference with each right symbolic vector.
Moreover, since the result of $\setminus_{sv}$ is a new set, we need to invoke $\setminus_{svs}$.

\begin{definition}[Negation]
Let $svs \in \SVS$ be a symbolic vector set  and $sv \in \SV$ be a symbolic vector.
The respective \textit{negation} of both structures, noted $\neg_{svs}: \SVS \rightarrow \SVS$ and $\neg_{sv}: \SV \rightarrow \SVS$, are defined as: 
$\neg_{sv}(sv) = (\set{f_\varepsilon}, \varnothing) \setminus_{sv} sv~\text{and}~\neg_{svs}(svs) = \set{(\set{f_\varepsilon}, \varnothing)} \setminus_{svs} svs $
\end{definition}

It must be noted that the negation is defined as a complement to the whole set.
\begin{lemma}[Homomorphic operations]
Let $svs, svs' \in \SVS$ and $\ast \in \set{\cup, \cap, \setminus}$ be operations.
Then, $\\\usf(svs) \ast \usf(svs') = \usf(svs \ast_{svs} svs')$ and $\usf(\neg_{svs} svs) = \neg \usf(svs)$
\end{lemma}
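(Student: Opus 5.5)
We prove each of the four identities separately, reducing each to an earlier result on single symbolic vectors via the definition of $\usf$ as a union of the $uf(sv)$.

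\textbf{Union.} By \Cref{def:predicate-structure:union-intersection}, $svs \cup_{svs} svs' = svs \cup svs'$. Unfolding \Cref{def:predicate-structure:underlying-vectors-of-set}, a union indexed over $svs \cup svs'$ splits into the union over $svs$ together with the union over $svs'$. This gives $\usf(svs)\cup\usf(svs')$ directly.

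\textbf{Intersection.} Distributivity of $\cap$ over arbitrary unions gives
\[
\usf(svs)\cap\usf(svs') = \bigcup_{sv\in svs}\bigcup_{sv'\in svs'} uf(sv)\cap uf(sv').
\]
By \Cref{lemma:predicate-structure:intersection-homomorphism-sv}, each term $uf(sv)\cap uf(sv')$ equals $uf(sv\cap_{sv} sv')$. Reading $svs\cap_{svs}svs'$ as the set of all $sv\cap_{sv}sv'$ and unfolding $\usf$ once more gives the claim.

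\textbf{Difference.} We proceed by induction following the clauses of \Cref{definition:predicate-structure:difference-svs}, on the pair (size of the left set, size of the right set) ordered lexicographically.
\begin{itemize}
\item The clauses involving $\varnothing_{svs}$ are immediate, since $\usf$ of any member of $\varnothing_{svs}$ is empty: every $sv\in\varnothing_{sv}$ has $uf(sv)=\varnothing$, because some $q_b\subseteq_f q_a$ and \Cref{lemma:predicate-structure:reduction-conv} applies.
\item The singleton–singleton case is exactly \Cref{lemma:predicate-structure:difference-ps-homomorphism}.
\item For $\set{sv}\setminus_{svs}\set{sv_1,\dots,sv_n}$, use $A\setminus(B\cup C)=(A\setminus B)\setminus C$. \Cref{lemma:predicate-structure:difference-ps-homomorphism} handles the first step. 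The induction hypothesis then applies to $(sv\setminus_{sv}sv_1)\setminus_{svs}\set{sv_2,\dots,sv_n}$.
\item For a left set with several elements, use $(A\cup B)\setminus C=(A\setminus C)\cup(B\setminus C)$ together with the union case already proved.
\end{itemize}

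\textbf{Negation.} Since $uf(\set{f_\varepsilon},\varnothing)=Q_n$, because every vector includes $f_\varepsilon$, this is the difference case instantiated with left argument $\set{(\set{f_\varepsilon},\varnothing)}$.

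\textbf{Main obstacle.} The difficulty is making the difference induction well founded. In the third clause the left argument $sv\setminus_{sv}sv_1$ is no longer a singleton: it may contain up to $|d|+1$ symbolic vectors. The recursion therefore decreases only the right-hand size, while the fourth clause then splits the enlarged left set. I would use lexicographic induction on (size of right set, size of left set): the third clause strictly shrinks the right set, and the fourth keeps the right set fixed while shrinking the left. Under this order every recursive call is on a smaller pair.

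A second point to check is that "set" union does not collapse equal symbolic vectors in a way that affects $\usf$. It does not, because $\usf$ depends only on the set of members.
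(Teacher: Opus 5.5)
Your proof is correct relative to the paper's earlier lemmas. It matches the route the paper sets up for this statement, whose own proof is deferred to the thesis: union by splitting the index set, intersection by distributivity plus \Cref{lemma:predicate-structure:intersection-homomorphism-sv}, and difference and negation by induction over the clauses of \Cref{definition:predicate-structure:difference-svs} using \Cref{lemma:predicate-structure:difference-ps-homomorphism}. Your corrected measure (size of the right set first, then of the left set) is the right one, so drop the (left, right) order you announce at the start of the difference case.

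One caveat is inherited from the paper. \Cref{lemma:predicate-structure:difference-ps-homomorphism} only holds when the subtracted symbolic vector has a one-element inclusion set, for instance when it is canonical. For example, $(\set{f_\varepsilon},\varnothing)\setminus_{sv}(\set{(1,0),(0,1)},\varnothing)=\set{(\set{f_\varepsilon},\set{(1,0),(0,1)})}$ has underlying set $\set{(0,0)}$, whereas the true difference is the set of all vectors with a zero coordinate. Your difference and negation cases are therefore valid whenever every symbolic vector in the right operand (resp.\ in the negated set) has a one-element inclusion set.
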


\begin{definition}[Symbolic vector set relations]
Let $svs, svs' \in \SVS$ be symbolic vector sets and $q \in Q_n$ be a vector.
Each of the relations $\subseteq_{svs}, \not\subset_{svs}, =_{svs}, \neq_{svs}\;\subseteq \SVS \times \SVS$ and $\in_{svs}\;\subseteq \SV \times \SVS$ are defined as follows:
\begin{itemize}
    \item $svs \subseteq_{svs} svs' \Leftrightarrow svs \setminus_{svs} svs' \in \varnothing_{svs}$
    \item $svs \not\subset_{svs} svs' \Leftrightarrow svs \setminus_{svs} svs' \notin \varnothing_{svs}$ \footnote{For legacy reasons we used the symbol $\not\subset$ for the negation of $\subseteq$ which could be a bit awkward}
    \item $svs =_{svs} svs' \Leftrightarrow (svs \subseteq_{svs} svs') \land (svs' \subseteq_{svs} svs)$
    \item $svs \neq_{svs} svs' \Leftrightarrow (svs \not\subset_{svs} svs') \lor (svs' \not\subset_{svs} svs)$
    \item $q \in_{svs} svs \Leftrightarrow \exists sv \in svs, q \in_{sv} sv$
\end{itemize}
\end{definition}

The general set relations are expanded to directly work on sets of symbolic vectors.
Their constructions are based on the operations described previously.

\subsection{Total order, mergeability and shareability of symbolic vector sets}

Establishing a canonical form for symbolic vector sets requires a deep understanding of the interactions that may occur between symbolic vectors, in much the same way as one must understand the interactions between integer intervals in order to reduce them effectively.
Among the conditions involved is the need to prevent overlap between intervals, which corresponds to avoiding the inclusion of the same value in two distinct symbolic vector sets.
Additionally, another scenario may arise when two intervals share no values, but can be merged into a single one. 
For instance, the two intervals [1,4] and [5,7] in natural numbers can be combined into [1,7]. 
Therefore, symbolic vector sets must take into account these conditions.

\begin{figure}[ht!]
\centering
\scalebox{0.8}{
\begin{minipage}[b]{0.48\textwidth}
    \centering
    \begin{align*}
        &\usf(\set{(\set{(2,4)}, \varnothing), (\set{(4,2)}, \set{(4,4)})}) \\
        &~~~~~~~~~~~~~~~~~~~~~~~~~~=\\
        &\usf(\set{(\set{(2,4)}, \set{(4,4)}), (\set{(4,2)}, \varnothing)}) \\
    \end{align*}
    \captionof{figure}{Two symbolic vector sets with the same underlying set of vectors.}
    \label{figure:canonical-ps:shearbility11}
\end{minipage}
}
\scalebox{0.8}{
\begin{minipage}[b]{0.48\textwidth}
    \centering
        \begin{tikzpicture}[
            > = stealth, 
            shorten > = 1pt, 
            auto,
            node distance = 1cm, 
            draw = black,
            thick,
            fill = white,
            minimum size = 4mm
        ]
            \node[ellipse, draw] at (-1,0) (24) {(2,4)};
            \node[ellipse, draw] at (1,0) (42) {(4,2)};
            \node[ellipse, draw] at (0,1) (44) {$(4,4)$};
            \node[ellipse] at (0,2.25) (infinf) {$...$};
            
            \draw[] (24) to node {} (44);
            \draw[] (42) to node {} (44);
            \draw[] (44) to node {} (infinf);
        \end{tikzpicture}
    \captionof{figure}{Visualisation of the two symbolic vector sets.}
    \label{figure:canonical-ps:shearbility12}
\end{minipage}
}
\caption{
Illustration of the canonical issue encountered with the non-strict partial order on its vectors, where $(\set{(4,4)}, \varnothing)$ is potentially contained in both symbolic vectors.
}
\label{figure:canonical-ps:shareability1}
\end{figure}

However, addressing the aforementioned cases is insufficient to guarantee future canonicity due to the non-strict partial order of its elements. 
\Cref{figure:canonical-ps:shareability1} depicts a scenario in which two symbolic vectors do not share values and cannot be merged.
Despite this, two different sets exist, resulting in an equivalent final underlying set of vectors.
The issue arises because the value within the symbolic vector $(\set{(4,4)}, \varnothing)$ may be included in both symbolic vectors. 
The visualisation in~\Cref{figure:canonical-ps:shearbility12} offers an initial explanation of this phenomenon.
The nodes are arranged according to their vectors, with the lowest at the bottom and the largest at the top.
Vector (4,4) serves as a \textit{join} point where the two symbolic vectors share the above values.
Moreover, vector (4,4) can be moved over both excluding sets to transfer the constraint, leading to two different representations for the same underlying set.
Thus, it is imperative to establish a deterministic rule for selecting the symbolic vector that should encompass the constraint.
To address this, we introduce a total order on symbolic vectors.

\begin{definition}[Lexicographic ordering]
\label{def:canonical-ps:lexicographic-ordering}
Let $q = (x_1, \dots, x_n), q' = (y_1, \dots ,y_n) \in Q_n$. 
The \textit{lexicographic ordering}, noted $\leq_f: Q_n \times Q_n \rightarrow \Bool$, is defined as:
\begin{align*}
() \leq_f () &= \true \\
(x_1, \dots, x_n) \leq_f (y_1, \dots, y_n) &= \begin{cases}
    true &x_1 < y_1 \\
    false &y_1 < x_1 \\
    (x_2, \dots, x_n) \leq_f (y_2, \dots, y_n) &x_1 = y_1
\end{cases}
\end{align*}
\end{definition}

\begin{lemma}
    $\leq_f$ defines a total order on $Q_n$.
\end{lemma}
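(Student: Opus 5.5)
We prove the four properties of a (non-strict) total order, namely reflexivity, antisymmetry, transitivity and totality, by induction on the length $n$ of the vectors, following the recursive structure of \Cref{def:canonical-ps:lexicographic-ordering}. The base case $n=0$ is immediate: $Q_0$ contains only the empty tuple $()$, and $() \leq_f ()$ holds by definition, so all four properties hold. For the inductive step, assume $\leq_f$ is a total order on $Q_{n-1}$. Take $q=(x_1,\dots,x_n)$, $q'=(y_1,\dots,y_n)$, $q''=(z_1,\dots,z_n)$, and write $\tilde q$ for the tail $(x_2,\dots,x_n)$, and similarly $\tilde q'$ and $\tilde q''$. In every case we split according to the trichotomy of the usual total order on $\Nat$ for the first components.

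\emph{Reflexivity.} Since $x_1=x_1$, we have $q \leq_f q$ if and only if $\tilde q \leq_f \tilde q$, which holds by induction.

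\emph{Totality.} If $x_1<y_1$ then $q\leq_f q'$, and if $y_1<x_1$ then $q'\leq_f q$. If $x_1=y_1$, then $q \leq_f q'$ is equivalent to $\tilde q\leq_f\tilde q'$, and $q'\leq_f q$ is equivalent to $\tilde q'\leq_f\tilde q$. One of these two holds by the induction hypothesis.

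\emph{Antisymmetry.} Suppose $q\leq_f q'$ and $q'\leq_f q$. If $x_1<y_1$, then $q'\leq_f q$ evaluates to false, a contradiction, and the case $y_1<x_1$ is symmetric. Hence $x_1=y_1$, so $\tilde q\leq_f\tilde q'$ and $\tilde q'\leq_f\tilde q$. By induction $\tilde q=\tilde q'$, and therefore $q=q'$.

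\emph{Transitivity.} This is the only step requiring some care, because of the number of cases. Assume $q\leq_f q'$ and $q'\leq_f q''$. The argument above shows that $q\leq_f q'$ forces $x_1\le y_1$, and likewise $y_1\le z_1$. If either of these inequalities is strict, then $x_1<z_1$ by transitivity of $<$ and $\le$ on $\Nat$, so $q\leq_f q''$ directly. Otherwise $x_1=y_1=z_1$, both hypotheses reduce to $\tilde q\leq_f\tilde q'$ and $\tilde q'\leq_f\tilde q''$, and the induction hypothesis gives $\tilde q\leq_f\tilde q''$, which is exactly $q\leq_f q''$.

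The main obstacle is therefore only organisational: tracking the $3\times 3$ combinations of first-component comparisons in the transitivity case. The preliminary observation that $q\leq_f q'$ implies $x_1\le y_1$ collapses these combinations into the two cases above. Note that the argument uses only that $\Nat$ is totally ordered, so it extends verbatim to vectors over any totally ordered set.
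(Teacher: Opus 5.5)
Your proof is correct and complete. Induction on the vector length follows the recursive definition of $\leq_f$ and uses the trichotomy of $<$ on $\Nat$ for the first components; this establishes reflexivity, antisymmetry, transitivity and totality, and the observation that $q \leq_f q'$ forces $x_1 \le y_1$ handles transitivity cleanly. The paper itself gives no proof of this lemma, deferring it to the thesis, so there is no in-paper argument to compare against; yours is the standard one.
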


Using this new relation, we can determine the symbolic vector that should incorporate the constraint.
Referring to the previous illustration in \Cref{figure:canonical-ps:shareability1}, we opted to retain the constraint in the larger symbolic vector. 
Consequently, the selected form that we keep among these two alternatives is $\set{(\set{(2,4)}, \varnothing), (\set{(4,2)}, \set{(4,4)})}$.


\begin{definition}[Mergeable]
Let $sv = \set{(\set{q_a}, b)}, sv' = \set{(\set{q_c}, d)} \in \SV$ be canonical symbolic vectors.
The mergeable relation on two symbolic vectors, noted $mergeable_{sv} \subseteq \SV \times \SV$, is defined as:
\begin{align*}
    \bm{if}~&sv \in \varnothing_{sv} \lor sv' \in \varnothing_{sv}: ~mergeable_{sv}(sv, sv') = \true \\
    \bm{if}~&sv \not\in \varnothing_{sv} \land sv' \not\in \varnothing_{sv}& \\
    mergeable_{sv}(sv, sv') &=
    \begin{cases}
        \bm{if}&q_a \subseteq_f q_c \\
        &\begin{cases}
            \bm{if}& \forall q_b \in b, (q_c \subseteq_f q_b \lor~\exists q_d \in d, q_d \subseteq_f join(\set{q_b, q_c})) \\
            &~\true \\
            \bm{else}&\false \\
        \end{cases} \\
        \bm{if}&q_c \subseteq_f q_a~mergeable_{sv}(sv',sv) \\
        \bm{else}& \false
        \end{cases} 
\end{align*}
\end{definition}

The $mergeable_{sv}$ function determines whether two canonical symbolic vectors can be combined into a singleton.
The first essential condition is that $q_a$ and $q_c$ must be comparable.
If $q_a \subseteq_f q_c$, one of the two following conditions must be satisfied for each vector in $b$ to return true.
The intuition behind the condition $q_c \subseteq_f q_b$ can be seen by analogy with integer intervals such as $[1,5]$ and $[3,8]$, where $3 \leq 5$.
In contrast, the condition $\exists q_d \in d, q_d \subseteq_f join(\set{q_b, q_c})$ is intended to ensure that the upper bound resulting from the join operation is not exceeded by either of the original upper bounds; otherwise, some vectors would be accepted when it should not be.

Note that if at least one of the symbolic vectors belongs to $\varnothing_{sv}$, $mergeable_{sv}$ returns $\true$.

\begin{example}
Let us illustrate different examples of symbolic vectors that can be merged or not.
\begin{enumerate}
\item $mergeable((\set{(1,1)},\set{(7,7)}), (\set{(3,3)},\set{(5,5)})) = \true,\\$ resulting in $\set{(\set{(1,1)},\set{(7,7)}}).\\$ 
This is the easiest case where $(1,1) \subseteq_f (3,3)$ and $(3,3) \subseteq_f (7,7)$.
In this context, this case is similar to a simple comparison of intervals such as $[1,7)$ and $[3,5)$, where it seems clear that the interval $[3,5)$ is included in $[1,7)$.
\item $mergeable((\set{(1,1)},\set{(4,4)}), (\set{(6,6)},\set{(8,8)})) = \false. \\$ Although $(1,1) \subseteq_f (6,6)$, we have $(6,6) \not\subset_f (4,4)$ and $(8,8) \not\subset_f (6,6) = join(\set{(4,4), (6,6)})$.
There exists a gap between the two symbolic vectors.
This can also be interpreted in terms of integer intervals, such as $[1,4)$ and $[6,8)$, where the gap corresponds to the set of integers in $[4,6)$.
\item $mergeable((\set{(1,3)},\set{(2,3),(1,4)}), (\set{(1,4)},\set{(2,4), (1,5)})) = \true$, \\resulting in $\set{(\set{(1,3)},\set{(2,3), (1,5)}}).\\$
This is the scenario in which two symbolic vectors follow each other.
\item $mergeable((\set{(1,3)},\set{(2,3),(1,4)}), (\set{(1,4)},\set{(2,5), (1,5)})) = \false.\\$
This is a modified version of the latter to illustrate the scenario where the condition $\exists q_d \in d, q_d \subseteq_f join(\set{q_b, q_c})$ does not hold. The second symbolic vector has the function $(2,5)$ instead of $(2,4)$.
Fusion is no longer possible due to $(2,5) \not\subset_f (2,4)$, where $\set{(2,4)} = join(\set{(2,3),(1,4)})$.
If we had merged both symbolic vectors as $\set{(\set{(1,3)},\set{(2,5), (1,4)}})$, we would have accepted the vector $(2,3)$, which was originally not accepted by either of them.
\end{enumerate}
\end{example}

With the mergeable definition established, we can now formalise the merge function as follows.

\begin{definition}[Merge]
\label{definition:canonical-ps:merge-ps}
Let $sv = \set{(\set{q_a}, b)}, sv' = \set{(\set{q_c}, d)} \in \SV$ be canonical symbolic vectors.
The merge function, noted $merge_{sv}: \SV \times \SV \rightarrow \SVS$, is defined as:
\begin{align*}
    &\text{Given:} \\
    &b' = \set{q_b \in b ~|~ q_c \not\subset_f q_b} \\
    &b'' = \set{q \in \SV ~|~ \forall q_b \in b, \forall q_d \in d, q_c \subseteq_f q_b \land q = join(\set{q_b, q_d})} \\
    &merge_{sv}(sv,sv') = \begin{cases}
    \bm{if}&sv \in \varnothing_{sv} \land sv' \in \varnothing_{sv}:~~~~\varnothing \\
    \bm{if}&sv \in \varnothing_{sv} \land sv' \not\in \varnothing_{sv}:~~~~\set{sv'} \\
    \bm{if}&sv \not\in \varnothing_{sv} \land sv' \in \varnothing_{sv}:~~~~\set{sv} \\
    \bm{if}&mergeable(sv,sv') = \true \land sv \not\in \varnothing_{sv} \land sv' \not\in \varnothing_{sv} \\
    &\begin{cases}
        \bm{if}&q_a \subseteq_f q_c \\   
        &\set{can_{sv}(q_a, b' \cup b'')}\\
        \bm{else}&~merge_{sv}(sv',sv)
    \end{cases} \\
    \bm{if}&mergeable(sv,sv') = \false \land sv \not\in \varnothing_{sv} \land sv' \not\in \varnothing_{sv} \\
    &\set{sv, sv'}
    \end{cases} \\
\end{align*}
\end{definition}

The merging process involves two steps, achieved through the construction of $b'$ and $b''$.
The first step aims to separate vectors that must remain unchanged in $b'$.
This step also addresses the gap scenario where a path is constrained within both symbolic vectors.
Conversely, when vectors from $b$ are included in $q_c$, we combine all of them using $join$ to determine the maximum boundary that satisfies both symbolic vectors.
The final outcome involves taking the minimum tuple of $q_a$ and $q_c$ as the minimum boundary for the new symbolic vector.
Subsequently, $b'$ and $b''$ are assembled as the new excluding set. 
Furthermore, the application of $can_{sv}$ eliminates potential redundancies resulting from the combinations created in $b''$ and ensures a canonical structure.

\begin{lemma}[Merge correctness]
\label{lemma:canonical-ps:merge-ps-canonical}
Let $sv, sv' \in \SV$ be two symbolic vectors.
Then, $\set{sv, sv'} =_{svs} merge_{sv}(sv, sv')$.
\end{lemma}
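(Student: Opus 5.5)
We have to show that $\usf(\set{sv, sv'}) = \usf(merge_{sv}(sv,sv'))$, which is exactly $=_{svs}$ by the homomorphism lemma for set operations. We argue by case analysis following the branches of \Cref{definition:canonical-ps:merge-ps}, assuming (as the definition does) that $sv=(\set{q_a},b)$ and $sv'=(\set{q_c},d)$ are canonical.

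\textbf{Empty and non-mergeable branches.} If both are in $\varnothing_{sv}$, then by \Cref{definition:predicate-structure:empty-uf} some $q_b \subseteq_f q_a$. Every $q \in_{sv} sv$ then satisfies $q_b \subseteq_f q_a \subseteq_f q$, a contradiction, so $uf(sv)=\varnothing$, and likewise $uf(sv')=\varnothing$. Hence both sides are empty. The two mixed branches follow by the same argument. The non-mergeable branch returns $\set{sv,sv'}$ literally.

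\textbf{Mergeable branch.} The case $q_c \subseteq_f q_a$ reduces by symmetry to $q_a \subseteq_f q_c$, since $\set{sv,sv'}=\set{sv',sv}$. By \Cref{lemma:canonical-ps:basic-equality}, it suffices to prove $uf(\set{q_a}, b'\cup b'') = uf(sv)\cup uf(sv')$. We prove the two inclusions separately.

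($\supseteq$) Take $q \in_{sv} sv$. Then $q_a \subseteq_f q$, no $q_b$ is below $q$, and each element of $b''$ is above some $q_b$, so none lies below $q$. Now take $q \in_{sv} sv'$. Then $q_a\subseteq_f q_c\subseteq_f q$. For $q_b \in b'$ (so $q_c \not\subseteq_f q_b$), mergeability gives some $q_d \subseteq_f join(\set{q_b,q_c})$. If we had $q_b\subseteq_f q$, then $join(\set{q_b,q_c})\subseteq_f q$ and hence $q_d \subseteq_f q$, a contradiction. For $join(\set{q_b,q_d}) \in b''$, inclusion in $q$ would force $q_d\subseteq_f q$, again impossible.

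($\subseteq$) Take $q$ in the merged vector and suppose $q \notin uf(sv)$. Then some $q_b\subseteq_f q$, and necessarily $q_b\notin b'$, so $q_c \subseteq_f q_b \subseteq_f q$. For any $q_d\in d$, $join(\set{q_b,q_d})\in b''$ is not below $q$. Since $q_b\subseteq_f q$, this forces $q_d\not\subseteq_f q$. Hence $q \in_{sv} sv'$.

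\textbf{Main obstacle.} The delicate step is the use of the mergeability witness in ($\supseteq$) for elements of $b'$. We must use the fact that $join$ returns the componentwise maximum, so that $q_b,q_c\subseteq_f q$ implies $join(\set{q_b,q_c})\subseteq_f q$. We must also read the quantifiers in $b''$ correctly: it contains the pairwise joins over $q_b$ with $q_c\subseteq_f q_b$ and all $q_d$. Also, $can_{sv}$ is only applied after the equality of underlying sets is established, and it preserves them by \Cref{lemma:canonical-ps:basic-equality} and \Cref{definition:predicate-structure:uf}.
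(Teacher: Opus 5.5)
Your proof is correct, but it takes a different route from the paper's. The paper never leaves the symbolic calculus. It unfolds $\{sv,sv'\}\setminus_{svs} merge_{sv}(sv,sv')$ and $merge_{sv}(sv,sv')\setminus_{svs}\{sv,sv'\}$ using the definition of $\setminus_{sv}$. It then checks that every symbolic vector produced lies in $\varnothing_{sv}$, by exhibiting an exclusion bound below its lower bound. This matches the definition of $=_{svs}$ literally.

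You instead pass to underlying sets and do a two-inclusion element chase. The two arguments correspond closely. The paper's pieces (1.1) and (1.2) are your $(\supseteq)$ for $uf(sv)$ and $uf(sv')$, with the same use of the mergeability witness $q_d\subseteq_f join(\{q_b,q_c\})$ for $q_b\in b'$. The paper's part (2) is your $(\subseteq)$.

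Your version is the more complete of the two. The paper's part (2) is only sketched (``repeating the same process''). Your $(\subseteq)$ argument is explicit, and it shows that this inclusion needs neither mergeability nor any canonicity beyond the singleton lower bound.

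What your route costs is the bridge between $=_{svs}$ and equality of $\usf$. Besides the homomorphism lemma for $\setminus_{svs}$, you also need that $uf(sv)=\varnothing$ implies $sv\in\varnothing_{sv}$. This is the converse of what you prove in the empty branch. It holds because the vector in $join(a)$ belongs to $uf(a,b)$ unless some exclusion bound lies below it, and this covers the non-singleton lower sets produced by $\setminus_{sv}$ as well. You should state it explicitly rather than fold it into ``exactly $=_{svs}$''.
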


\begin{proof}
Let us establish the proof for~\cref{lemma:canonical-ps:merge-ps-canonical} by demonstrating that:
\begin{align*}
    &sv = (\set{q_a}, b),~sv' = (\set{q_c}, d)\\
    (1)&~\set{sv,sv'} \setminus_{svs} merge_{sv}(sv,sv') \in \varnothing_{svs}~\land \\
    (2)&~merge_{sv}(sv,sv') \setminus_{svs} \set{sv,sv'} \in \varnothing_{svs} 
\end{align*}

The situations where $sv \in \varnothing_{sv}$, $sv' \in \varnothing_{sv}$, or $mergeable(sv, sv')$ are trivial. 
The only scenario that requires examination is when:
\begin{align*}
    mergeable(sv,sv') = \true \land sv \not\in \varnothing_{sv} \land sv' \not\in \varnothing_{sv}
\end{align*}

Let us concentrate on the case where $q_a \subseteq_f q_c$ (the other case is symmetrical).
We begin by proving case (1) in this configuration.
Both cases can be rewritten as:
\begin{align*}
     &merge_{sv}(sv,sv') = \set{can_{sv}(\set{q_a}, b' \cup b'')} \\
    (1)&~\set{sv,sv'} \setminus_{svs} merge_{sv}(sv,sv') = \\
    (1.1)&~\set{(\set{q_a}, b)} \setminus_{svs}  \set{can_{sv}(\set{q_a}, b' \cup b'')}~\cup_{svs}\\
    (1.2)&~\set{(\set{q_c}, d)} \setminus_{svs}  \set{can_{sv}(\set{q_a}, b' \cup b'')}
\end{align*}

We need to prove that both parts, i.e., (1.1) and (1.2), belong to $\varnothing_{svs}$. (1.1) is trivial; we have the same minimal including vector, and $b \cup b''$ encapsulates $b$.
In fact, $b'$ is a vector from $b$, while $b''$ is the join between $q_b$ and $q_d$, which means that $q_b$ is automatically included in it.
Then, $(1.1) \in_{svs} \varnothing_{svs}$.
(1.2) requires breaking down the term using the difference definition:
\begin{align*}
    (1.2)&~\set{(\set{q_c}, d)} \setminus_{svs}  \set{can_{sv}(\set{q_a}, b' \cup b'')} =\\
    (1.21)&~\set{(\set{q_c}, b \cup \set{q_a})}~\cup_{svs}\\
    (1.22)&~\set{sv_1~|~ q_b \in b', sv_1 = \set{(\set{q_a} \cup \set{q_c} \cup \set{q_b}, d)}}~\cup_{svs}\\
    (1.23)&~\set{sv_2~|~ q_b \in b'', sv_2 = \set{(\set{q_a} \cup \set{q_c} \cup \set{q_b}, d)}}
\end{align*}
In the difference definition (\cref{definition:predicate-structure:difference}), (1.22) and (1.23) appear in only one set.
However, to assist in constructing the proof, we divide the two scenarios.
\begin{itemize}
    \item (1.21): It belongs to $\varnothing_{svs}$, because $q_a \subseteq_f q_c$.
    \item (1.22) $q_b \in b' \Rightarrow q_c \not\subset_f q_b$: By hypothesis, $sv$ and $sv'$ are mergeable.
    Thus, we have:
    \begin{align*}
        (3)&~\forall q_b \in b, q_c \subseteq_f q_b \lor \\
        (4)&~\exists q_d \in d, q_d \subseteq_f join(\set{q_b, q_c}) \\
    \end{align*}
    (3) is not possible, so (4) is true, implying that all $sv_1 \in \varnothing_{sv}$, and thus $(1.22) \in \varnothing_{svs}$.
    \item $(1.23)$: By construction of $b''$, each $q_b \in q_b''$ is the result of the join of $q_b$ and a $q_d$.
    Thus, the corresponding $q_d$ exists in $d$, meaning that $q_d \subseteq_f join(\set{q_b, q_d})$.
    Furthermore, $(1.23) \in \varnothing_{svs}$
\end{itemize}
Therefore, we have proved (1).

Now, let us break down case (2) to prove it:
\begin{align*}
    (2)&~merge_{sv}(sv,sv') \setminus_{svs} \set{sv,sv'} \in \varnothing_{svs} = \\
    &(can_{sv}(\set{q_a}, b' \cup b'') \setminus_{sv} (\set{q_a}, b)) \setminus_{svs} \set{(\set{q_c}, d))}\\
    \\
    (2.1)&~can_{sv}(\set{q_a}, b' \cup b'') \setminus_{sv} (\set{q_a}, b) = \\
    (2.11)&~\set{(\set{q_a}, b \cup \set{q_a})}~\cup_{svs}\\
    (2.12)&~\set{sv_1~|~ q_b \in b, sv_1 = \set{(\set{q_a} \cup \set{q_c} \cup \set{q_b}, b')}}~\cup_{svs}\\
    (2.13)&~\set{sv_2~|~ q_b \in b, sv_2 = \set{(\set{q_a} \cup \set{q_c} \cup \set{q_b}, b'')}}
\end{align*}

\begin{itemize}
    \item $(2.1) \in \varnothing_{svs}$: Trivial
    \item (2.2): $b'$ is constructed using $q_b$ that are not modified.
    Thus, $(2.2) \in \varnothing_{svs}$.
    \item (2.3): $b''$ is built on $join$ on $q_b$ and $q_d$.
    If $q_d \subseteq_f q_b$, then it belongs to $\varnothing_{sv}$.
    If $q_b \subseteq_f q_d$, subtracting by $\set{(\set{q_c}, d))}$ removes the rest by repeating the same process.
\end{itemize}
We have proved case (2).

Cases (1) and (2) result in empty symbolic vector sets, indicating that the union is also empty.
As a result, the property holds.
\end{proof}

\begin{lemma}[Merge reduction]
\label{lemma:canonical-ps:merge-singleton}
Let $sv, sv' \in \SV$ be symbolic vectors.
Then, 
\begin{align*}
    mergeable_{sv}(sv,sv') = \true\Rightarrow merge_{sv}(sv,sv') = \set{sv''},\text{where}~sv'' \in \SV
\end{align*}
\end{lemma}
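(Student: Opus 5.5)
The plan is a case analysis that follows the branches of \Cref{definition:canonical-ps:merge-ps}, assuming $mergeable_{sv}(sv,sv') = \true$ throughout.

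\textbf{Empty cases.} If $sv \in \varnothing_{sv}$ and $sv' \not\in \varnothing_{sv}$, the definition returns $\set{sv'}$, a singleton. The symmetric case returns $\set{sv}$.

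If both symbolic vectors belong to $\varnothing_{sv}$, the definition returns $\varnothing$, which is not literally a singleton. Here I would rely on the paper's convention for the canonical empty symbolic vector $(\set{f_\varepsilon}, \set{f_\varepsilon})$. By the remark after the definition of $\varnothing_{svs}$, $\varnothing$ and $\set{(\set{f_\varepsilon},\set{f_\varepsilon})}$ lie in the same class $\varnothing_{svs}$. So the statement holds with $sv'' = (\set{f_\varepsilon},\set{f_\varepsilon})$ when equality is read up to $=_{svs}$. I would state this reading explicitly.

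\textbf{Main case.} Both vectors are nonempty and mergeable. First I would note that the branch $q_a \subseteq_f q_c$ directly returns $\set{can_{sv}(\set{q_a}, b' \cup b'')}$. This is a singleton, since $can_{sv}$ maps $\SV$ to $\SV$ (remark preceding \Cref{lemma:canonical-ps:basic-equality}). I must also check that $(\set{q_a}, b' \cup b'')$ is a well-formed symbolic vector. This holds because $b'$ and $b''$ are finite sets of vectors of $Q_n$, the latter being built from $join$, which returns a singleton vector. Finally, $can_{sv}$ preserves the underlying set by \Cref{lemma:canonical-ps:basic-equality}, although only its codomain matters here.

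\textbf{Symmetric branch and termination.} If instead $q_a \not\subseteq_f q_c$, the definition of $mergeable_{sv}$ is the key step. A nonempty pair is mergeable only if $q_a \subseteq_f q_c$ or $q_c \subseteq_f q_a$, since incomparable lower bounds yield $\false$. So we must have $q_c \subseteq_f q_a$, and moreover $mergeable_{sv}(sv',sv) = \true$. The recursive call $merge_{sv}(sv',sv)$ therefore falls into the first subcase with roles swapped, and again returns a singleton.

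The only subtlety I expect is termination of this recursion. When $q_a =_f q_c$ both orders hold, but the first branch is taken immediately, so at most one swap occurs. This argument, together with the literal-versus-$=_{svs}$ reading in the both-empty case, is where I expect the main care to be needed. Everything else is immediate from the definitions.
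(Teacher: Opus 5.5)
Your proof is correct. The paper states this lemma without proof (it defers to the thesis), and your unfolding of the branches of $merge_{sv}$ is exactly the direct check the statement calls for: mergeability of two non-empty vectors forces $q_a \subseteq_f q_c$ or $q_c \subseteq_f q_a$, so at most one swap precedes the branch returning $\set{can_{sv}(\set{q_a}, b' \cup b'')}$.

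You are also right that the statement fails literally when both inputs lie in $\varnothing_{sv}$, since $mergeable_{sv}$ then returns $\true$ while $merge_{sv}$ returns $\varnothing$. Your repair via $=_{svs}$ is true but not the useful one. A set that is merely $=_{svs}$ to a singleton can have many elements, whereas the proof of \Cref{thm:canonical-union-termination} relies on $merge_{sv}$ producing ``at most one symbolic vector''. Moreover, your witness $\set{(\set{f_\varepsilon},\set{f_\varepsilon})}$ is precisely the empty representation excluded by condition 2 of \Cref{definition:canonical-ps:canonical-svs}.

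Your case analysis already establishes the literal form $|merge_{sv}(sv,sv')| \leq 1$, with equality unless both inputs are empty, so state the lemma that way. Also say explicitly that you assume canonical inputs, since $mergeable_{sv}$ and $merge_{sv}$ are only defined for vectors of the form $(\set{q_a}, b)$.
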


This lemma indicates that the merging of two symbolic vectors is a singleton when both are mergeable.

\begin{definition}
\label{definition:canonical-ps:shareable}
Let $sv = (a,b), sv' = (c,d) \in \SV$ be two symbolic vectors, and $\set{q_a} = join(a), \set{q_c} = join(c)$ be the simplifications of $a$ and $c$.
Additionally, let $\set{q_{max}} = join(\set{q_a, q_c})$ denote the join vector of $q_a$ and $q_c$.
The shareable function on two symbolic vectors, noted $shareable_{sv}: \SV \times \SV \rightarrow \Bool$, is defined as:
\begin{align*}
    \bm{if}~&sv \in \varnothing_{sv} \lor sv' \in \varnothing_{sv},~shareable_{sv}(sv, sv') = \true \\
    \bm{if}~&sv \not\in \varnothing_{sv} \land sv' \not\in \varnothing_{sv}~~ \\
    shareable_{sv}(sv, sv') &=
    \begin{cases}
    \bm{if}& q_a \leq_f q_c \\
    &\begin{cases}
        \bm{if}& \not\exists q_b \in b, (q_b \subseteq_f q_{max} \land q_b \neq_f q_{max}) \\
        &~\land~\not\exists q_d \in d, q_d \subseteq_{f} q_{max} \\
        &\true\\
        \bm{else}& \false
    \end{cases}\\
    \bm{else}& shareable_{sv}(sv',sv)
    \end{cases}
\end{align*}
\end{definition}

The function $shareable_{sv}$ is designed to determine when two symbolic vectors have a shareable component.
Checking shareability involves examining whether there exists a gap between two symbolic vectors.
By building $q_{max}$, we seek the first value that is potentially shared between two symbolic vectors and will then be the minimum vector of the shared symbolic vector. 
Therefore, if a tuple of $b$ or $d$ is lower than $q_{max}$, this implies that sharing is not possible.
This is similar to having a gap between two intervals such as $[1,4]$ and $[7,9]$.

When the function returns true, this means that we can extract a subset from one of the symbolic vectors and transfer it to the other.
Although the shared component could be moved to either side, to ensure future canonicity, we only allow the merge to be performed on one side due to the lexicographic ordering, making the process deterministic and terminating.

As illustrated in~\Cref{figure:canonical-ps:shearbility11}, a scenario arises in which two symbolic vectors are shareable.
In this case, the shareability should be considered $\true$ only when the shared component is not on the correct symbolic vector. 
The condition $q_b \neq_f q_{max}$ guarantees this, ensuring that the vector with the lowest lexicographic order does not contain the constraint.
In other words, only the largest symbolic vector will be constrained by $q_{max}$.
This prevents overlap between the two structures.

The purpose of $shareable_{sv}$ extends beyond the mentioned scenario. These conditions also encompass situations of \textit{overlapping} and \textit{mergeability}.
In essence, when the function evaluates to true, it signifies that the symbolic vector set is non-canonical.

\begin{example}
Let us observe some examples of its application.
    \begin{enumerate}
        \item $shareable_{sv}((\set{(1,1)}, \set{(3,3)}), (\set{(5,5)}, \set{(7,7)})) = \false. \\$
        Indeed, there exists $q_b = (3,3)$ such that $(3,3) \subseteq_f (5,5) \land (3,3) \neq_f (5,5)$.
        In this context, this case is similar to a simple comparison of intervals such as $[1,3)$ and $[5,7)$, where it seems clear that the interval $[1,3)$ and $[5,7)$ do not overlap.
        \item $shareable_{sv}((\set{(1,1)}, \set{(5,5)}), (\set{(5,5)}, \set{(7,7)})) = \true. \\$
        The merge case is handled.
        \item $shareable_{sv}((\set{(1,1)}, \set{(9,9)}), (\set{(5,5)}, \set{(7,7)})) = \true. \\$
        The overlap case is also handled.
        \item $shareable_{sv}((\set{(2,4)}, \set{(4,4)}), (\set{(4,2)}, \varnothing)) = \true. \\$
        In this case, there is neither an overlap nor a potential merge.
        However, the constraint $(4,4)$ is not on the correct symbolic vector.
        \item $shareable_{sv}((\set{(2,4)}, \varnothing), (\set{(4,2)}, \set{(4,4)})) = \false. \\$
        This time, $(4,4)$ is on the appropriate symbolic vector with respect to $\leq_f$.
\end{enumerate}
\end{example}

\begin{lemma}
\label{lemma:mergeable-imply-shareable}
Let $sv, sv' \in \SV$ be two symbolic vectors.
Then,
\begin{align*}
    mergeable(sv,sv') = \true \Rightarrow shareable_{sv}(sv,sv') = \true
\end{align*}
\end{lemma}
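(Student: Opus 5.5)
The plan is a direct case analysis that follows the branches of $mergeable_{sv}$ and $shareable_{sv}$. The key observation is that, in the non-empty case, $mergeable_{sv}$ forces the two minimal vectors to be comparable. The lexicographic branch taken by $shareable_{sv}$ is then the same branch, and $q_{max}$ collapses to the larger of the two minimal vectors.

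\emph{Empty case.} If $sv \in \varnothing_{sv}$ or $sv' \in \varnothing_{sv}$, then $shareable_{sv}(sv,sv') = \true$ by the first clause of \Cref{definition:canonical-ps:shareable}, and there is nothing more to prove. Assume from now on that both are non-empty and canonical, with $sv = (\set{q_a}, b)$ and $sv' = (\set{q_c}, d)$. Since $mergeable_{sv}(sv,sv') = \true$, its \textbf{else} branch was not taken, so $q_a \subseteq_f q_c$ or $q_c \subseteq_f q_a$.

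\emph{Main case $q_a \subseteq_f q_c$.} I first show that this implies $q_a \leq_f q_c$. Either the vectors are equal, or the first differing component satisfies $q_a(p) < q_c(p)$, which is the definition of $\leq_f$ (\Cref{def:canonical-ps:lexicographic-ordering}). So $shareable_{sv}$ takes its first branch. Moreover $join(\set{q_a,q_c}) = \set{q_c}$, so $q_{max} = q_c$. Two conditions remain to be checked.
\begin{itemize}
\item No $q_d \in d$ satisfies $q_d \subseteq_f q_c$. This is exactly the statement $sv' \notin \varnothing_{sv}$ (\Cref{definition:predicate-structure:empty-uf}).
\item No $q_b \in b$ satisfies $q_b \subseteq_f q_c$ with $q_b \neq_f q_c$. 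Suppose such a $q_b$ exists. Mergeability gives two options for it.
  \begin{itemize}
  \item If $q_c \subseteq_f q_b$, antisymmetry yields $q_b = q_c$, a contradiction.
  \item Otherwise some $q_d \in d$ satisfies $q_d \subseteq_f join(\set{q_b,q_c})$. Since $q_b \subseteq_f q_c$, this join equals $q_c$, so $q_d \subseteq_f q_c$. This contradicts the non-emptiness of $sv'$ shown in the previous item.
  \end{itemize}
\end{itemize}
Hence $shareable_{sv}(sv,sv') = \true$.

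\emph{Symmetric case $q_c \subseteq_f q_a$ with $q_a \neq q_c$.} Here $mergeable_{sv}(sv,sv') = mergeable_{sv}(sv',sv)$, and by the same argument $q_c \leq_f q_a$ strictly. Therefore $shareable_{sv}(sv,sv')$ reduces through its \textbf{else} branch to $shareable_{sv}(sv',sv)$. The main case applied to the swapped pair then gives $\true$.

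\emph{Expected obstacle.} The delicate point is making sure the two definitions pick matching orientations. The containment branch of $mergeable_{sv}$ must correspond to the lexicographic branch of $shareable_{sv}$, including the tie $q_a = q_c$. In that tie both recursive calls are avoided and the first branch applies directly in both definitions. This requires checking that $\subseteq_f$ refines $\leq_f$, which is the small lemma I will prove first.
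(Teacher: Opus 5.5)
Your proof is correct and follows the same route as the paper. Both dispose of the empty case, then use $q_a \subseteq_f q_c \Rightarrow q_a \leq_f q_c$ and $q_{max}=q_c$, so that the $d$-condition becomes exactly the non-emptiness of $sv'$; a strictly smaller $q_b$ is then ruled out by the two disjuncts of mergeability, and the other orientation follows by symmetry. Your per-$q_b$ contradiction (antisymmetry for one disjunct, $join(\set{q_b,q_c})=q_c$ for the other) is a tidier version of the paper's looser split on which disjunct holds globally, and you also treat the swapped case and the tie $q_a=q_c$ explicitly, which the paper leaves implicit.
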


\begin{proof}
    To prove this property, we observe each case where $mergeable(sv,sv')$ is $true$, and then show that it is also $true$ for $shareable_{sv}(sv,sv')$.
    Let us assume that $sv = (a,b), sv' = (c,d),~q_{max} = join(\set{q_a, q_c}),~q_{max}' = join(\set{q_b, q_c})$.
    
    When $sv \in \varnothing_{sv}$ (or $sv' \in \varnothing_{sv}$) is true, the implicitation is automatically $true$.
    
    Let us observe the case where $q_a \subseteq_f q_c$.
    First, $q_a \subseteq_f q_c \Rightarrow q_a \leq_f q_c$.
    Therefore, we have to show that if the condition $\forall q_b \in b, q_c \subseteq_f q_b \lor \exists q_d \in d, q_d \subseteq_f join(\set{q_b, q_c})$ holds, then the condition $\not\exists q_b \in b, (q_b \subseteq_f q_{max} \land q_b \neq_f q_{max}) \land \not\exists q_d \in d, q_d \subseteq_{f} q_{max}$ also holds.
    We divide this sub-proof into two parts:
    \begin{itemize}
        \item Case where $\forall q_b \in b, q_c \subseteq_f q_b$:
        because $q_a \subseteq_f q_c$, then $q_{max} \subseteq_f q_b$, which means that we cannot break the condition $q_b \subseteq q_{max} \land q_b \neq_f q_{max}$.
        
        In addition, $q_a \subseteq_f q_c \Rightarrow q_{max} = q_c$.
        Then, we can deduce that $q_d \subseteq_f q_{max} \Rightarrow q_d \subseteq_f q_c \Rightarrow sv' \in \varnothing_{sv}$.
        Therefore, the only way that $q_d \subseteq_f q_{max}$ to exist is for $sv'$ to be empty, which is not possible under condition $sv' \not\in \varnothing_{sv}$.
        As a result, $\not\exists q_d \in d, q_d \subseteq_f q_{max}$ also holds.
        \item Case where $\exists q_d \in d, q_d \subseteq_f join(\set{q_b, q_c})$: let us show this case for the condition $\not\exists q_b \in b, (q_b \subseteq_f q_{max} \land q_b \neq_f q_{max})$.
        Due to $q_a \subseteq_f q_c$, we know that $q_{max} = q_c$. 
        We are supposed to have $q_d \subseteq_f q_{max}' = q_d \subseteq_f join(\set{q_b, q_c})$.
        However, we know that $\forall q_d \in d, q_c \subseteq_f q_d$.
        Furthermore, if $q_b \subseteq_f q_{max} = q_b \subseteq_f q_c$, the only way to satisfy the inequality $q_d \subseteq_f join(\set{q_b, q_c})$ is to have $q_d = join(\set{q_b, q_c})$.
        In addition, if we suppose that $q_b =_f q_{max}$, it implies that $q_b = q_c$, and then $q_c = q_d$. 
        Having $q_c = q_d$ means that $sv' \in \varnothing_{sv}$, which is not possible due to the initial conditions.
        Therefore, there does not exist a solution that breaks the first part of the condition.
        
        Let us prove the second part of the condition, i.e. $q_{max} \subseteq_f q_d \land q_{max} \neq q_d$.
        We have two cases, with $q_{max} \subseteq_f q_d \land q_{max} \neq q_d$ and $q_d \subseteq_f q_{max}$.
        If $q_{max} \subseteq_f q_d \land q_{max} \neq q_d$, condition $\not\exists q_d \in d, q_d \subseteq_{f} q_{max}$ is satisfied.
        For $q_d \subseteq_f q_{max}$, due to $q_{max} = q_c$, we obtain $q_d \subseteq_f q_c$, which means that $sv'$ is an empty symbolic vector.
        When $sv'$ is empty, we know that this is not possible by the condition $sv' \not\in \varnothing_{sv}$.
        Therefore, the condition remains $true$.
    \end{itemize}
    Consequently, the property is valid.
\end{proof}

\subsection{Canonicity of symbolic vector sets}

From the previous sections, we now have all the necessary ingredients to formally define the canonical form of a symbolic vector set, as well as the canonical operations on it.

\begin{definition}[Canonicity of symbolic vector sets]
\label{definition:canonical-ps:canonical-svs}
Let $svs \in \SVS$ be a symbolic vector set.
$svs$ is canonical if and only if:
\begin{enumerate}
    \item $\forall sv \in svs, sv$ is canonical.
    \item $svs \not\in \varnothing_{svs} \lor svs = \varnothing$
    \item $\forall sv,sv' \in svs, sv \neq sv', shareable_{sv}(sv,sv') = \false$
\end{enumerate}
\end{definition}

Based on this definition, three conditions must be met to guarantee canonicity.
The initial condition stipulates that all elements within a symbolic vector set must be canonical.
The second condition requires a unique form for the empty symbolic vector set.
Lastly, the condition we examined previously becomes significant.
When $shareable_{sv}$ is evaluated to be true, a portion of the largest symbolic vector can be transferred to the smaller one between $sv$ and $sv'$.
However, it is important to note that the shared part might already partially or entirely exist within the smaller symbolic vector.
This could occur in cases where one symbolic vector overlaps another.
In such instances, the larger symbolic vector would be either entirely removed or adjusted to exclude the portion already present in the smaller symbolic vector.
Note that, as shown in \cref{lemma:mergeable-imply-shareable}, whenever the mergeability condition holds, the shareability condition also holds. Therefore, this case is already covered by the definition of shareability and need not be explicitly included as a canonicity condition.

\begin{theorem}[Unicity of the representation]
\label{theorem:canonical-ps:canonicity-svs}
Let $svs, svs' \in \SVS$ be two canonical symbolic vector sets.
Then, $\usf(svs) = \usf(svs') \Leftrightarrow svs = svs'$.
\end{theorem}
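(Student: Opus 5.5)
The direction $svs = svs' \Rightarrow \usf(svs) = \usf(svs')$ is immediate from \Cref{def:predicate-structure:underlying-vectors-of-set}, so the work is the converse. I plan to reconstruct a canonical $svs$ deterministically from the set $S = \usf(svs)$ alone. Then two canonical sets with the same underlying set must coincide.

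\textbf{Step 1: the empty case.} If $S = \varnothing$, every element of $svs$ is empty, so $svs \in \varnothing_{svs}$. Condition 2 of \Cref{definition:canonical-ps:canonical-svs} then forces $svs = \varnothing$, and the same holds for $svs'$. From now on I assume $S \neq \varnothing$, so no element of $svs$ is empty: an empty element would be $(\set{f_\varepsilon},\set{f_\varepsilon})$, and together with any non-empty element it is shareable by definition.

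\textbf{Step 2: recover the lower bounds.} For a canonical non-empty $sv = (\set{q_a}, b)$, the vector $q_a$ itself belongs to $uf(sv)$, because $b$ is an antichain of vectors $\supseteq_f q_a$ and none equals $q_a$. I will show that the set $A = \set{q_a ~|~ (\set{q_a},b) \in svs}$ is exactly a set $M(S)$ defined from $S$. The natural candidate is to peel minimal elements iteratively in lexicographic order $\leq_f$: take the $\leq_f$-least element $m_1$ of $S$ (lexicographic minimum exists since $\leq_f$ is a total well-order on $\Nat^n$), declare it a lower bound, determine the symbolic vector generated from it (Step 3), remove its underlying set, and recurse. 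Non-shareability is what justifies the recursion. If some element of $S$ lay above a $q_a$ without belonging to that $sv$, while belonging to some other $sv'$ with $q_a \leq_f q_c$, then the join point $q_{max}$ would either be excluded only by a $q_b$ with $q_b \subseteq_f q_{max}$ and $q_b \neq q_{max}$ (a gap, so not shareable, which is consistent), or would be wrongly assigned, which contradicts $shareable_{sv} = \false$. Concretely, I will prove the key lemma: if $svs$ is canonical, then the $sv$ with the $\leq_f$-least $q_a$ satisfies $uf(sv) = \set{q \in S ~|~ q_a \subseteq_f q \text{ and } q \text{ is reachable from } q_a \text{ without leaving } S \text{ through points not claimed by a lexicographically larger lower bound}}$.

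\textbf{Step 3: recover the upper bounds.} Given $q_a$ and $uf(sv)$, canonicity of $sv$ (antichain $b$, each $q_b \supseteq_f q_a$, $q_b \notin uf(sv)$) determines $b$ uniquely. Specifically, $b$ is the set of $\subseteq_f$-minimal vectors above $q_a$ not in $uf(sv)$. This is essentially the uniqueness of the canonical symbolic vector (\Cref{theorem:canonical-ps:canonicity-ps}). So the whole decomposition is a function of $S$ alone, and therefore $svs = svs'$.

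\textbf{Main obstacle.} The hard part is the key lemma in Step 2: characterising $uf(sv)$ for the lexicographically least lower bound purely from $S$. This is where overlaps, mergeable pairs and the ``join point'' situation of \Cref{figure:canonical-ps:shareability1} must all be ruled out. I would argue by contradiction with two canonical sets $svs \neq svs'$ sharing $S$. Take the $\leq_f$-least lower bound at which they differ. If the lower bounds differ, the smaller one, say $q_a$ in $svs$, lies in $S$, so it is covered in $svs'$ by some $sv'$ with lower bound $q_c \subseteq_f q_a$, $q_c \neq q_a$. Then $q_c \in S$ is also covered in $svs$ by an earlier element, and comparing that element with $(\set{q_a},b)$ yields $shareable_{sv} = \true$, using \Cref{lemma:mergeable-imply-shareable} for the mergeable subcase. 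If the lower bounds agree but the exclusion sets differ, take a minimal $q_b$ in the symmetric difference. It is covered in the other set by a lexicographically larger element whose lower bound joins with $q_a$ at $q_b$, giving $q_{max} = q_b$, and again $shareable_{sv}$ evaluates to $\true$, a contradiction. Checking these cases against the exact clauses of \Cref{definition:canonical-ps:shareable}, especially the $q_b \neq_f q_{max}$ clause, is where I expect most of the effort.
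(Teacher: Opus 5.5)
\textbf{Overall.} Your Steps 2--3 do not work as stated. The ``key lemma'' describes $uf(sv)$ through points ``not claimed by a lexicographically larger lower bound''. That presupposes the very decomposition you want to recover, so it is not a characterisation from $S$ alone. You do not need it, however. The contradiction argument in your last paragraph is self-contained, and once tightened it is a complete proof by a route different from the paper's.

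\textbf{Setup to make explicit.} A non-empty canonical $(\set{q_a},b)$ contains $q_a$ and is convex above $q_a$. Two distinct elements of a canonical set are disjoint: overlap implies shareable, by \Cref{overlap_equiv_internonempty} and the contrapositive of \Cref{non_share_implies_non_overlap}. Hence distinct elements have distinct lower bounds. Now let $q^*$ be the $\leq_f$-least lower bound whose associated elements in $svs$ and $svs'$ do not coincide (one is missing, or both are present but different). It exists because $\leq_f$ well-orders $\Nat^n$, and every element with a lexicographically smaller lower bound belongs to both sets.

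\textbf{First case.} The decisive point is not that $q_c$ is covered in $svs$. It is that $q_c\subsetneq_f q_a$ implies $q_c<_f q_a$, so minimality puts $sv'=(\set{q_c},d)$ itself in $svs$. Since $q_a\in uf(sv')$, this $sv'$ overlaps $(\set{q_a},b)$, and overlap gives shareability. No appeal to mergeability is needed.

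\textbf{Second case.} Pick $q_b$ that is $\subseteq_f$-minimal in $b\triangle b'$, say $q_b\in b\setminus b'$. Minimality and the antichain condition on $b$ give $q_b\in uf(\set{q_a},b')\subseteq S$. So some $sv_2=(\set{q_2},b_2)\in svs$, distinct from $(\set{q_a},b)$, contains $q_b$. Two points need fixing.
\begin{itemize}
\item You must justify $q_a<_f q_2$. Otherwise $sv_2$ would also lie in $svs'$ and overlap $(\set{q_a},b')$ at $q_b$.
\item Writing $\set{q_{max}}=join(\set{q_a,q_2})$, you only get $q_{max}\subseteq_f q_b$ in general, not $q_{max}=q_b$. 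This suffices. An element of $b$ strictly below $q_{max}$ would be strictly below $q_b$, contradicting condition 3 of canonicity. Convexity gives $q_{max}\in uf(sv_2)$, so no element of $b_2$ is $\subseteq_f q_{max}$.
\end{itemize}
Hence $shareable_{sv}((\set{q_a},b),sv_2)=\true$, contradicting canonicity of $svs$. The subcase $q_b\in b'\setminus b$ is symmetric.

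\textbf{Comparison with the paper.} The paper reasons about coverings of underlying sets. It rules out that some $uf(sv)$ is strictly covered by a union of at least two elements of $svs'$ (\Cref{lemma_non_depassement}, built on \Cref{lemma:total-decomposition-shareability} and a gap/adjacency argument). It also rules out that some $uf(sv)$ is strictly contained in a single $uf(sv')$ (\Cref{lemma_non_inclusion}). It then matches the remaining elements one-to-one, using uniqueness of canonical symbolic vectors and the absence of empty elements.

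Your repaired argument is instead a well-founded induction on lower bounds along $\leq_f$ that directly exhibits a shareable pair in one of the two sets. It is shorter and uses only convexity, pairwise disjointness, and the fact that $\leq_f$ extends $\subseteq_f$. It avoids the informal adjacency step in the paper's partial-decomposition case. It does not even need uniqueness of single canonical symbolic vectors, since the second case handles differing exclusion sets directly. In exchange, the paper's route yields decomposition lemmas explaining why no canonical family can split or partially cover a symbolic vector, which is the intuition behind the canonical union.
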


\begin{remark}
For the sake of brevity and simplicity, we omit the formal description of canonical homomorphisms for $\cup_{svs}$, $\cap_{svs}$, $\setminus_{svs}$, and $\neg_{svs}$.
We assume their existence because, formally, they can be viewed as a canonisation of classical set operations and are intended to be used implicitly.
Furthermore, the complete theory is described in~\cite{morard:2024:symbolic}, including additional definitions and properties, as well as all the canonical operations and proofs related to them.
\end{remark}


The upcoming proof is the most intricate among all the theories introduced, necessitating the introduction of multiple definitions and properties for clarity.
Note that the final proof can be found at the end of this subsection.

\begin{definition}[Gap] 
    Let $sv =(\{q_a\}, b) \in \SV, sv'=(\{q_c\}, d) \in \SV$ be canonical, and $\{q_{max}\}=join(\{q_a,q_c\})$.
    The gap relation between two symbolic vectors, denoted as $gap: \SV \times \SV$, is defined as follows:
    \begin{align*}
        gap(sv, sv') \Leftrightarrow&~\exists q_b \in b, (q_b \subseteq_f q_{max} \land q_b \not =_f q_{max}) \\
        \lor&~\exists q_d \in d, (q_d \subseteq_f q_{max} \wedge q_d \not =_f q_{max}) \\
        \lor&~(q_{max} \in b \land q_{max} \in d) 
    \end{align*} 
\end{definition}

As a corollary, we have the negation:
\begin{corollary}[$\neg$Gap a.k.a Joinable]
 \begin{align*}
\neg gap(sv, sv') \Leftrightarrow&~\nexists q_b \in b, (q_b \subseteq_f q_{max} \land q_b \neq_f q_{max}) \\
\land&~\nexists q_d \in d, (q_d \subseteq_f q_{max} \land q_d \neq_f q_{max}) \\
\land&~(q_{max} \not\in b \lor q_{max} \not\in d)
\end{align*}   
\end{corollary}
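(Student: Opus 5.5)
The corollary is obtained by negating the three-way disjunction that defines $gap$ and then applying De Morgan's laws and the usual quantifier dualities. No symbolic-vector machinery is needed beyond what appears in the definition of $gap$. Throughout, $sv=(\set{q_a},b)$ and $sv'=(\set{q_c},d)$ are canonical and $\set{q_{max}}=join(\set{q_a,q_c})$ is fixed, so every atomic predicate ($q_b \subseteq_f q_{max}$, $q_b \neq_f q_{max}$, membership in $b$ or $d$) has a definite truth value. The transformation is therefore purely propositional and first-order.

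First, write $\neg gap(sv,sv')$ as the negation of $D_1 \lor D_2 \lor D_3$. Here $D_1$ is $\exists q_b \in b, (q_b \subseteq_f q_{max} \land q_b \neq_f q_{max})$, $D_2$ is the analogous statement over $d$, and $D_3$ is $(q_{max}\in b \land q_{max}\in d)$. De Morgan turns this into $\neg D_1 \land \neg D_2 \land \neg D_3$.

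Next, rewrite each conjunct. For $\neg D_1$ and $\neg D_2$ it suffices to write $\neg\exists$ as $\nexists$, which gives exactly the first two lines of the corollary. One could also push the negation inside to get $\forall q_b\in b, (q_b \not\subseteq_f q_{max} \lor q_b =_f q_{max})$, but the stated form keeps $\nexists$, so no further step is needed. For $\neg D_3$, De Morgan again gives $(q_{max}\notin b \lor q_{max}\notin d)$, which is the third line.

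The only point needing care is notation: the definition of $gap$ writes $\not=_f$ while the corollary writes $\neq_f$. I would note that both denote the negation of $=_f$, which is derived from $\subseteq_f$ as stated after the definition of inclusion of vectors, so the literals coincide. Every equivalence step is reversible, so the biconditional holds in both directions. I do not expect any real obstacle, since the result is a definitional unfolding.
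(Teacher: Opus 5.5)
Your proposal is correct and matches the paper's approach: the paper gives no separate proof and presents the corollary simply as the negation of the definition of $gap$. Your De Morgan unfolding of the three disjuncts, including the observation that $\not=_f$ and $\neq_f$ denote the same relation, is exactly that argument written out.
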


The $gap$ is a way of expressing continuity between the underlying sets of symbolic vectors.
Imagine a grid of dimension $n$ where our vectors exist as nodes. 
Movement in this grid is only allowed between adjacent nodes.
The gap case represents two sets of nodes/vectors such that it is not possible to move from one set to the other while adhering to these rules and staying within one set at each step.
There is a gap, a space separating the underlying sets of the symbolic vectors.

This is why the situation of $\neg gap$ is more intuitively named $joinable$ since it refers to the situation where the two underlying sets can be joined.
Please be cautious, as "\textit{joinable}" differs from determining whether two symbolic vectors are mergeable. 
Furthermore, being mergeable implies being joinable, but the reverse is not necessarily true.

\begin{definition}[Overlap]
    Let $sv=(\{q_a\}, b) \in \SV, sv'=(\{q_c\}, d) \in \SV$ be canonical symbolic vectors, and $\{q_{max}\} = join(\{q_a, q_c\})$.
    The overlap relation between two symbolic vectors, denoted as $overlap: \SV \times \SV$, is defined as:
    \begin{align*}
        overlap(sv, sv') \Leftrightarrow&~(\forall q_b \in b, q_b \not\subset_f q_{max} \lor (q_{max} \subseteq_f q_b \land q_{max} \neq_f q_b))\\ 
        \land&~(\forall q_d \in d, q_d \not\subset_f q_{max} \lor (q_{max} \subseteq_f q_d \land q_{max} \neq_f q_d))
    \end{align*}
\end{definition}

It is noted that if $q_{max} \subseteq_f q \land q_{max} \neq_f q$ is true, then it always verifies that $q \not\subset_f q_{max}$. 
However, it is important to notice that the reciprocal implication is not true.
Indeed, it is recalled that $q \not\subset_f q_{max}$ implies $q \neq_f q_{max}$ but does not imply $q_{max} \subseteq_f q$ as it is possible that $q$ and $q_{max}$ are different and simply non-comparable.

The following example helps visualize the kind of non-trivial cases captured by the definition. 
Let $sv=(\{(1,0)\}, \{(2,0)\}), sv'=(\{(0,2)\}, \{(0,3)\})$ where we have $q_{max} = \set{(1,2)}$.
In this example, the two vectors overlap only at $(1,2)$.
However, we can indeed verify $overlap(sv, sv')$.
This example highlights the need to consider the non-comparability of exclusion bounds with $q_{max}$.

\begin{corollary}[$\neg$Overlap] 
\label{neg_overlap}
    \begin{align*}
    \neg overlap(sv, sv') \Leftrightarrow (\exists q_b \in b, q_b \subseteq_f q_{max}) \lor (\exists q_d \in d, q_d \subseteq_f q_{max}) 
    \end{align*}
\end{corollary}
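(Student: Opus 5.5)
The corollary follows by negating the definition of $overlap$ with De Morgan's laws and then simplifying the resulting per-element predicate.

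First, negate the outer structure. Since $overlap(sv,sv')$ is a conjunction of two universally quantified clauses, one over $b$ and one over $d$, its negation is the disjunction of two existential statements:
\begin{align*}
\neg overlap(sv,sv') \Leftrightarrow~&\exists q_b \in b, \neg\bigl(q_b \not\subset_f q_{max} \lor (q_{max} \subseteq_f q_b \land q_{max} \neq_f q_b)\bigr)\\
\lor~&\exists q_d \in d, \neg\bigl(q_d \not\subset_f q_{max} \lor (q_{max} \subseteq_f q_d \land q_{max} \neq_f q_d)\bigr).
\end{align*}
The two disjuncts are symmetric, so it suffices to treat a generic $q \in b \cup d$.

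Second, simplify the inner predicate. By De Morgan it becomes
\[ q \subseteq_f q_{max} \land \bigl(q_{max} \not\subset_f q \lor q_{max} =_f q\bigr). \]
We must show this is equivalent to $q \subseteq_f q_{max}$. The forward direction is immediate. For the converse, assume $q \subseteq_f q_{max}$ and split into two cases.
\begin{enumerate}
\item If $q =_f q_{max}$, the second conjunct holds through its right disjunct.
\item If $q \neq_f q_{max}$, suppose for contradiction that $q_{max} \subseteq_f q$. Antisymmetry of $\subseteq_f$, which holds because it is the componentwise $\leq$ on $\Nat$, would then force $q = q_{max}$. Hence $q_{max} \not\subset_f q$, and the second conjunct holds through its left disjunct.
\end{enumerate}
In both cases the conjunction reduces to $q \subseteq_f q_{max}$. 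This is the content of the remark after the definition of $overlap$, read in the contrapositive direction: once $q \subseteq_f q_{max}$, the strict-inclusion clause $q_{max} \subseteq_f q \land q_{max}\neq_f q$ can never hold.

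Third, substitute the simplified predicate back into both existential disjuncts. This gives exactly $(\exists q_b \in b, q_b \subseteq_f q_{max}) \lor (\exists q_d \in d, q_d \subseteq_f q_{max})$, as stated.

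The only step that needs care is the second, namely the role of antisymmetry together with the paper's convention that $\not\subset_f$ means the negation of $\subseteq_f$ rather than the negation of strict inclusion. With that convention fixed, the argument is a short propositional calculation. Canonicity of $sv$ and $sv'$ plays no role, and $q_{max}$ is simply a fixed vector throughout.
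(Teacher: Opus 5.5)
Your proposal is correct and follows the paper's proof: negate the definition of $overlap$ with De Morgan's laws, then reduce the inner conjunction $q \subseteq_f q_{max} \land (q_{max} \not\subset_f q \lor q_{max} =_f q)$ to $q \subseteq_f q_{max}$. Your case split on $q =_f q_{max}$ with the appeal to antisymmetry spells out the step the paper only summarises as ``once distributed, we obtain an expression equivalent to $q \subseteq_f q_{max}$''.
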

\begin{proof}
    \begin{align}
        \neg overlap(sv, sv') &\Leftrightarrow (\exists q_b \in b, \neg(q_b \not\subset_f q_{max} \lor (q_{max} \subseteq_f q_b \land q_{max} \neq_f q_b))) \\
        &\lor (\exists q_d \in d, \neg(q_d \not\subset_f q_{max} \lor (q_{max} \subseteq_f q_d \land q_{max} \neq_f q_d))) \nonumber\\\nonumber\\
       &\Leftrightarrow (\exists q_b \in b, q_b \subseteq_f q_{max} \land (q_{max} \not\subset_f q_b \lor q_{max} =_f q_b)) \\
       &\lor (\exists q_d \in d, q_d \subseteq_f q_{max} \land (q_{max} \not\subset_f q_d \lor q_{max} =_f q_d)) \nonumber\\\nonumber\\
       &\Leftrightarrow (\exists q_b \in b, q_b \subseteq_f q_{max}) \lor (\exists q_d \in d, q_d \subseteq_f q_{max}) 
   \end{align}   
   To understand the steps, it is recalled that $\neg(x \not\subset_f y) \equiv (x\subseteq_f y)$.
   For the transition from step 5 to step 6, once distributed, we obtain an expression equivalent to $q \subseteq_f q_{max}$, hence the simplification.
\end{proof}

As a corollary, we have that $overlap(sv, sv') \implies \neg gap(sv, sv')$.
We observe that $q_{max}$ is the potential shared point.
If we are ensured that each exclusion bound is non-comparable ($\not\subset_f$) or strictly greater than $q_{max}$, then we know that shared vectors exist.
We have thus formalised the notion of empty or non-empty intersection in terms of symbolic vectors.

Furthermore, we observe:
\begin{lemma} 
\label{overlap_equiv_internonempty}
    Let $sv, sv' \in \SV$ be canonical, then 
    \begin{align*}
        overlap(sv, sv') \iff uf(sv) \cap uf(sv') \neq \varnothing
    \end{align*}
\end{lemma}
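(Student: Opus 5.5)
The plan is to reduce both sides of the equivalence to the same elementary condition on $q_{max}$, using \Cref{neg_overlap} for the left side and the intersection homomorphism for the right side. Write $sv = (\set{q_a}, b)$ and $sv' = (\set{q_c}, d)$; this is possible because both are canonical, so the inclusion sets are singletons. Let $\set{q_{max}} = join(\set{q_a, q_c})$.

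First I will rewrite the intersection. By \Cref{lemma:predicate-structure:intersection-homomorphism-sv}, $uf(sv) \cap uf(sv') = uf(sv \cap_{sv} sv') = uf(\set{q_a, q_c}, b \cup d)$. By \Cref{lemma:predicate-structure:reduction-conv}, this equals $uf(\set{q_{max}}, b \cup d)$. So the claim becomes: $uf(\set{q_{max}}, b\cup d) \neq \varnothing$ if and only if $overlap(sv,sv')$.

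Next I will characterise when this set is nonempty. Unfolding \Cref{definition:predicate-structure:belongs-to}, the condition "$q_e \not\subset_f q \land q_e \neq_f q$" just says $\neg(q_e \subseteq_f q)$. I will then prove
\[
uf(\set{q_{max}}, e) \neq \varnothing \iff \nexists q_e \in e,\ q_e \subseteq_f q_{max},
\]
where $e = b \cup d$.
\begin{itemize}
\item ($\Leftarrow$) If no $q_e$ lies below $q_{max}$, then $q_{max}$ is itself a witness: it satisfies $q_{max} \subseteq_f q_{max}$ by reflexivity, and it avoids every exclusion bound.
\item ($\Rightarrow$) Suppose some $q \in_{sv} (\set{q_{max}}, e)$ exists. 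Then $q_{max} \subseteq_f q$. If some $q_e \subseteq_f q_{max}$, transitivity gives $q_e \subseteq_f q$, which contradicts $q \in_{sv}$.
\end{itemize}
This is exactly the "upward closure" of the exclusion constraints, and it is the crux of the argument. The point is that $q_{max}$ is the least candidate, so the intersection is nonempty precisely when $q_{max}$ belongs to it.

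Finally I will apply \Cref{neg_overlap}, which states that $\neg overlap(sv,sv')$ holds if and only if some $q_b \in b$ or some $q_d \in d$ satisfies $q_b \subseteq_f q_{max}$ (respectively $q_d \subseteq_f q_{max}$). That is, $\neg overlap$ holds if and only if some $q_e \in b \cup d$ has $q_e \subseteq_f q_{max}$. Contraposing this and combining it with the previous step gives the equivalence.

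The main thing to check is the empty canonical representative $(\set{f_\varepsilon},\set{f_\varepsilon})$, which canonicity allows. It is handled uniformly: $f_\varepsilon \subseteq_f q_{max}$ always holds, so $overlap$ fails, and the intersection is indeed empty. Beyond this case and the correct reading of $\not\subset_f$ as the negation of $\subseteq_f$ (the footnote's convention), I expect no real obstacle.
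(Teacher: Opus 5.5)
Your proposal is correct and follows essentially the paper's argument. The paper likewise uses $q_{max}$ as the witness for the forward direction. For the converse, it takes $sv \cap_{sv} sv' = (\set{q_{max}}, b \cup d)$ and invokes \Cref{neg_overlap} to obtain an exclusion bound below $q_{max}$, which empties the intersection. Your version routes both directions through that single intersection vector and spells out the transitivity step that the paper calls a direct consequence.
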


\begin{proof}
\hfill

\begin{itemize}
\item[\boxed{\implies}] Let $sv=(\{q_a\}, b), sv'=(\{q_c\}, d)$ be canonical such that $overlap(sv, sv')$. 
Let us set $\{q_{max}\}=join(\{q_a, q_c\})$. 
By definition, $q_a \subseteq_f q_{max}$, and due to the fact that $overlap(sv, sv')$, we know that all exclusion bounds are either greater than $q_{max}$ or non-comparable.
Thus, $q_{max} \in uf(sv)$.
Applying the same reasoning to $sv'$, we can conclude that
\begin{align*}
    &~q_{max}\in uf(sv) \land q_{max}\in uf(sv') \\
    \iff&~q_{max} \in uf(sv) \cap uf(sv') \implies uf(sv) \cap uf(sv') \neq \varnothing
\end{align*}
\item[\boxed{\impliedby}] By contrapositive, that is, the proof of $\neg overlap(sv, sv') \implies uf(sv) \cap uf(sv') = \varnothing$.
Let $sv=(\{q_a\}, b), sv'=(\{q_c\}, d)$ be canonical such that $\neg overlap(sv, sv')$.
Let $\{q_{max}\}=join(\{q_a, q_c\})$ and $\Tilde{sv}= sv~\cap~sv'=(\{q_{max}\}, b\cup d)$.
From \Cref{neg_overlap}, we know that there is an exclusion bound $q_{excl} \in b\cup d$ such that $q_{excl} \subseteq_f q_{max}$. 
As a direct consequence, we have $uf(\Tilde{sv})=\varnothing$.
Since $uf(\Tilde{sv}) = uf(sv\cap sv') = uf(sv)\cap uf(sv')$, we conclude that $uf(sv)\cap uf(sv') = \varnothing$.
\end{itemize}
\end{proof}

\begin{definition}[Adjacent] 
\label{adjacence}
    Let $sv=(\{q_a\}, b) \in \SV, sv'=(\{q_c\}, d) \in \SV$ be canonical and $\{q_{max}\}=join(\{q_a,q_c\})$.
    \begin{align*}
        adjacent(sv, sv') \iff&~(q_{max} \in b \lor q_{max} \in d) \land~(q_{max}\not\in b \lor q_{max}\not\in d)
    \end{align*}
\end{definition}
This definition states that adjacency corresponds to the case where two vectors are joinable and do not overlap.

\begin{lemma}
\label{non_share_implies_non_overlap}
    Let $sv, sv' \in \SV$ be canonical, then
    \begin{align*}
        \neg shareable(sv, sv') \implies \neg overlap(sv, sv')
    \end{align*}
\end{lemma}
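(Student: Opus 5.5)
We prove the contrapositive: $overlap(sv,sv') \implies shareable(sv,sv')$. Throughout, let $sv=(\set{q_a},b)$ and $sv'=(\set{q_c},d)$ be canonical with $\set{q_{max}}=join(\set{q_a,q_c})$.

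\textbf{Empty case.} If $sv\in\varnothing_{sv}$ or $sv'\in\varnothing_{sv}$, then $shareable_{sv}(sv,sv')=\true$ by the first clause of \Cref{definition:canonical-ps:shareable}, and there is nothing to prove. So we assume both are non-empty.

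\textbf{Reduction to one orientation.} The overlap relation is symmetric in $sv$ and $sv'$: $q_{max}$ is symmetric, and the two conjuncts simply swap. The recursive call in $shareable_{sv}$ swaps the arguments exactly when $q_c <_f q_a$. Since $\leq_f$ is total, we may therefore assume without loss of generality that $q_a\leq_f q_c$. It then remains to check the two conditions of the inner branch:
\[
\nexists q_b\in b,\ (q_b\subseteq_f q_{max}\land q_b\neq_f q_{max}),
\qquad
\nexists q_d\in d,\ q_d\subseteq_f q_{max}.
\]

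\textbf{Deriving both conditions from overlap.} Write the overlap hypothesis for $b$ as: every $q_b\in b$ satisfies $q_b\not\subset_f q_{max}$ or ($q_{max}\subseteq_f q_b$ and $q_{max}\neq_f q_b$). Fix $q_b\in b$.
\begin{itemize}
\item In the first disjunct, $q_b\subseteq_f q_{max}$ fails outright.
\item In the second disjunct, $q_b\subseteq_f q_{max}$ together with $q_{max}\subseteq_f q_b$ would force $q_b=q_{max}$ by antisymmetry, contradicting $q_{max}\neq_f q_b$.
\end{itemize}
So no $q_b$ is $\subseteq_f q_{max}$ at all, which is stronger than the first shareability condition. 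The identical argument applied to $d$ shows that no $q_d\in d$ satisfies $q_d\subseteq_f q_{max}$, which is exactly the second condition.

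Equivalently, and more concisely, \Cref{neg_overlap} says that $\neg shareable$ yields an exclusion bound $q_b\subseteq_f q_{max}$ (strictly) or $q_d\subseteq_f q_{max}$, which is exactly $\neg overlap$. Hence $shareable_{sv}(sv,sv')=\true$, and the contrapositive gives the lemma.

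\textbf{Main obstacle.} The only delicate point is the asymmetry of $shareable_{sv}$. Overlap is symmetric, but shareability in the opposite orientation requires $q_b$ to be non-strict with respect to $q_{max}$ and $q_d$ to be strict. We must confirm that the swap is harmless, and it is, because our derivation yields the strong "no bound $\subseteq_f q_{max}$" statement for \emph{both} $b$ and $d$. That covers either orientation. The argument is purely order-theoretic and needs only antisymmetry of $\subseteq_f$; it never uses the canonicity conditions.
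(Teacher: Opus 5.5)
Your proof is correct and is essentially the paper's argument in contrapositive form. The paper unfolds $\neg shareable(sv,sv')$ into its two orientation cases and observes that each disjunct gives an exclusion bound $\subseteq_f q_{max}$, which is $\neg overlap(sv,sv')$ by the $\neg$Overlap corollary; that corollary's content is exactly the antisymmetry fact you re-derive, and your explicit handling of the empty case and of the argument swap in $shareable_{sv}$ spells out two points the paper leaves implicit.
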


\begin{proof}
Let $sv,sv' \in \SV$, where $sv=(\{q_a\}, b), sv'=(\{q_c\}, d)$.
Let us assume that $\neg shareable(sv, sv')$ and $\{q_{max}\}=join(\{q_a, q_c\})$.
Due to the existence of lexicographic order, $shareable$ is always well-defined, as well as its negation.
\begin{align*}
    \neg shareable(sv, sv') \iff \overbrace{\exists q_b \in b, (q_b \subseteq_f q_{max} \land q_b \neq_f q_{max}) \lor \exists q_d \in d, q_d \subseteq_f q_{max}}^{\text{case : } q_a \leq_f q_c} \\
    \lor \underbrace{\exists q_d \in d, (q_d \subseteq_f q_{max} \land q_d \neq_f q_{max}) \lor \exists q_b \in d, q_b \subseteq_f q_{max}}_{\text{case : } \neg(q_a \leq_f q_c)}
\end{align*}
We recall that $\neg overlap(sv, sv') \Leftrightarrow (\exists q_b \in b, q_b \subseteq_f q_{max}) \lor (\exists q_d \in d, q_d \subseteq_f q_{max})$.
Trivially, each case in the disjunction of $\neg shareable(sv, sv')$ implies one of the cases in $\neg overlap(sv, sv')$. 
Therefore, we have successfully demonstrated $\neg shareable(sv, sv')\implies \neg overlap(sv, sv')$.
\end{proof}
As a direct corollary of this result, we have $\neg shareable(sv, sv') \implies uf(sv)\cap uf(sv') = \varnothing$.

We will now introduce the two main lemmas that are at the core of the proof of uniqueness.

\begin{lemma}[Total decomposition implies shareability]
\label{lemma:total-decomposition-shareability}
Let $sv=(\{q_a\},b)\in\SV$ be a non-empty canonical symbolic
vector, and let
\[
    \widetilde{svs}
    =
    \{sv_1,\dots,sv_k\},
    \qquad k\geq 2,
\]
be a finite set of non-empty canonical symbolic vectors. If
\[
    uf(sv)
    =
    \bigcup_{sv_i\in\widetilde{svs}}uf(sv_i),
\]
then there exist two distinct symbolic vectors
$sv_i,sv_j\in\widetilde{svs}$ such that
\[
    shareable_{sv}(sv_i,sv_j)=\true.
\]
\end{lemma}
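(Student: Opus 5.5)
The plan is to argue by cases on whether two members of $\widetilde{svs}$ intersect. \textbf{First case:} there exist $i\neq j$ with $uf(sv_i)\cap uf(sv_j)\neq\varnothing$. By \Cref{overlap_equiv_internonempty} this gives $overlap(sv_i,sv_j)$. The contrapositive of \Cref{non_share_implies_non_overlap} then gives $shareable_{sv}(sv_i,sv_j)=\true$, and we are done. \textbf{Second case:} the sets $uf(sv_i)$ are pairwise disjoint. This is the real work: I will exhibit a specific pair and check the two conditions in \Cref{definition:canonical-ps:shareable} directly.

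Write $sv_i=(\{q_{a_i}\},b_i)$. Since $sv$ is canonical and non-empty, condition 4 and non-emptiness give $q_a\in uf(sv)$. Hence $q_a\in uf(sv_1)$ for some index, say $i=1$, and so $q_{a_1}\subseteq_f q_a$. Conversely, $sv_1$ is non-empty, so $q_{a_1}\in uf(sv_1)\subseteq uf(sv)$, which forces $q_a\subseteq_f q_{a_1}$. Thus $q_{a_1}=q_a$.

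Because $k\geq 2$ and the $sv_j$ are non-empty and disjoint from $sv_1$, the set $D=uf(sv)\setminus uf(sv_1)$ is non-empty. Since $\subseteq_f$ on $Q_n=\Nat^n$ is well-founded, I pick a $\subseteq_f$-minimal element $m\in D$. I then establish two facts about $m$.

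\emph{Fact (i): $m\in b_1$.} Since $m\notin uf(sv_1)$ but $q_a\subseteq_f m$, some $q\in b_1$ satisfies $q\subseteq_f m$. By condition 4, $q_a\subseteq_f q$. No exclusion bound of $b$ lies below $q$, since it would then lie below $m\in uf(sv)$. So $q\in uf(sv)$, and $q\notin uf(sv_1)$ trivially. Minimality gives $q=m$.

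\emph{Fact (ii): $m=q_{a_j}$ for the $j\neq 1$ with $m\in uf(sv_j)$.} We have $q_{a_j}\subseteq_f m$. Also $q_{a_j}\in uf(sv_j)\subseteq uf(sv)$, and $q_{a_j}\notin uf(sv_1)$ by disjointness, so $q_{a_j}\in D$. Minimality forces $q_{a_j}=m$.

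Now take the pair $(sv_1,sv_j)$. Then $q_{max}=join(\{q_a,m\})=m$, because $q_a\subseteq_f m$, and therefore $q_a\leq_f m$ in the lexicographic order. So the first branch of $shareable_{sv}$ applies, and two conditions must be checked.
\begin{enumerate}
\item No $q_d\in b_j$ satisfies $q_d\subseteq_f m$. This holds because $m=q_{a_j}\in uf(sv_j)$.
\item No $q_b\in b_1$ satisfies $q_b\subseteq_f m$ with $q_b\neq_f m$. Suppose such a $q_b$ existed. By condition 4 it lies above $q_a$, and it escapes every exclusion bound of $b$ because it is below $m\in uf(sv)$. So $q_b\in D$ and is strictly below $m$, contradicting minimality.
\end{enumerate}
Hence $shareable_{sv}(sv_1,sv_j)=\true$.

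The main obstacle I expect is the bookkeeping in the second case, namely getting both facts about the minimal element $m$. In particular, showing $m\in b_1$ needs canonicity condition 4 to guarantee that the witnessing exclusion bound still lies in $uf(sv)$. I would also double-check the orientation convention of $shareable_{sv}$, i.e.\ which argument is lexicographically smaller. Here that is settled by $q_a\subseteq_f m\Rightarrow q_a\leq_f m$, which is exactly why the constraint $m$ sits on the "wrong" symbolic vector $sv_1$.
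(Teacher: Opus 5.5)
Your proof is correct and follows essentially the paper's argument: the same overlap/disjoint case split, the same identification $q_{a_1}=q_a$, and the same downward-closure argument showing that no exclusion bound of $b_1$ lies strictly below the chosen point. The only difference is cosmetic: you take a $\subseteq_f$-minimal element $m$ of $uf(sv)\setminus uf(sv_1)$ (by well-foundedness of $\Nat^n$) and show it is some $q_{a_j}$, whereas the paper takes a $\subseteq_f$-minimal lower bound among the remaining vectors (by finiteness); these amount to the same choice, and your Fact (i) is not needed for the conclusion.
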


This lemma can be interpreted as stating that if the set of vectors represented by several symbolic vectors $sv_i$ can also be represented by a single symbolic vector, then the $sv_i$ can be replaced by that symbolic vector.

\begin{proof}
Let
\[
    sv_i=(\{q_i\},b_i)
\]
for every $sv_i\in\widetilde{svs}$.

We first observe that $uf(sv)$ is downward closed above its minimum
bound $q_a$. More precisely, if $x\in uf(sv)$ and
\[
    q_a\subseteq_f y\subseteq_f x,
\]
then $y\in uf(sv)$. Indeed, if some $q_b\in b$ satisfied
$q_b\subseteq_f y$, transitivity would give $q_b\subseteq_f x$,
contradicting $x\in uf(sv)$.

Suppose first that two distinct symbolic vectors
$sv_i,sv_j\in\widetilde{svs}$ overlap. By
Lemma~\ref{non_share_implies_non_overlap}, used by contraposition,
\[
    overlap(sv_i,sv_j)
    \Longrightarrow
    shareable_{sv}(sv_i,sv_j)=\true,
\]
and the result follows.

Assume now that the underlying sets of the elements of
$\widetilde{svs}$ are pairwise disjoint.

Since $sv$ is non-empty and canonical, its minimum bound satisfies
$q_a\in uf(sv)$. Therefore, there exists a unique
$sv_0=(\{q_0\},b_0)\in\widetilde{svs}$ such that
\[
    q_a\in uf(sv_0).
\]
Since $q_a\in uf(sv_0)$, we have $q_0\subseteq_f q_a$. Conversely,
$q_0\in uf(sv_0)\subseteq uf(sv)$, and hence
$q_a\subseteq_f q_0$. Consequently,
\[
    q_0=q_a.
\]

Among the symbolic vectors in
$\widetilde{svs}\setminus\{sv_0\}$, choose
$sv_i=(\{q_i\},b_i)$ whose minimum bound $q_i$ is minimal with
respect to $\subseteq_f$. Such an element exists because
$\widetilde{svs}$ is finite.

We show that
\[
    \nexists q_{b_0}\in b_0,\quad
    q_{b_0}\subseteq_f q_i
    \land q_{b_0}\neq_f q_i.
\]
Assume otherwise that such a $q_{b_0}$ exists. Since $sv_0$ is
canonical,
\[
    q_a=q_0\subseteq_f q_{b_0}\subseteq_f q_i.
\]
Moreover, $q_i\in uf(sv_i)\subseteq uf(sv)$. By the downward-closure
property established above, this implies
\[
    q_{b_0}\in uf(sv).
\]
Hence, some $sv_j=(\{q_j\},b_j)\in\widetilde{svs}$ contains
$q_{b_0}$. We have $sv_j\neq sv_0$, since $q_{b_0}$ is an exclusion
bound of $sv_0$, and
\[
    q_j\subseteq_f q_{b_0}\subset_f q_i.
\]
This contradicts the minimality of $q_i$.

Furthermore,
\[
    \nexists q_{b_i}\in b_i,\quad q_{b_i}\subseteq_f q_i.
\]
Indeed, the canonicity of $sv_i$ gives
$q_i\subseteq_f q_{b_i}$ for every $q_{b_i}\in b_i$. Thus,
$q_{b_i}\subseteq_f q_i$ would imply $q_{b_i}=q_i$, making $sv_i$
empty, contrary to the hypothesis.

Finally, $q_0=q_a\subseteq_f q_i$, and therefore
$q_0\leq_f q_i$ in the lexicographic order. Since
\[
    join(\{q_0,q_i\})=\{q_i\},
\]
the two properties established above are exactly the conditions in
the definition of $shareable_{sv}$. Consequently,
\[
    shareable_{sv}(sv_0,sv_i)=\true.
\]

Thus, in all cases, $\widetilde{svs}$ contains two distinct
shareable symbolic vectors.
\end{proof}

\begin{lemma}[Non-partial decomposition] 
\label{lemma_non_depassement}
    Let $svs, svs' \in \SVS$ be two canonical symbolic vector sets such that $\usf(svs)=\usf(svs')$, $\usf(svs) \neq \varnothing$ and $\usf(svs') \neq\varnothing$.
    The non-partial decomposition property is defined as:
    \begin{align*}
        &\forall \Tilde{svs} \subset svs', |\Tilde{svs}|\geq 2,\\ &\nexists sv \in svs, \left(sv \not\in \Tilde{svs} \land uf(sv) \subseteq \bigcup_{sv'\in\Tilde{svs}} uf(sv') \land \forall sv'\in\Tilde{svs}, overlap(sv, sv')\right) 
    \end{align*}
\end{lemma}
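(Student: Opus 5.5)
We argue by contradiction and reduce the statement to the two facts already available: the disjointness of elements of a canonical set, and Lemma~\ref{lemma:total-decomposition-shareability}.

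Assume there exist $\Tilde{svs}\subset svs'$ with $|\Tilde{svs}|\geq 2$ and some $sv\in svs$, $sv\notin\Tilde{svs}$, such that $uf(sv)\subseteq\bigcup_{sv'\in\Tilde{svs}}uf(sv')$ and $overlap(sv,sv')$ holds for every $sv'\in\Tilde{svs}$. Since $svs'$ is canonical, no two distinct elements of $\Tilde{svs}$ are shareable. By Lemma~\ref{non_share_implies_non_overlap} and the corollary after it, their underlying sets are therefore pairwise disjoint. By the same argument applied to $svs$, the underlying sets of distinct elements of $svs$ are pairwise disjoint.

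The core step is to turn the inclusion into an exact decomposition. For each $sv'_i\in\Tilde{svs}$, overlap gives $uf(sv)\cap uf(sv'_i)\neq\varnothing$ by Lemma~\ref{overlap_equiv_internonempty}. I then claim that $uf(sv'_i)\subseteq uf(sv)$. Take any $x\in uf(sv'_i)$. Since $\usf(svs)=\usf(svs')$, the vector $x$ lies in some $uf(sv^\ast)$ with $sv^\ast\in svs$. If $sv^\ast\neq sv$, then $sv'_i$ meets both $uf(sv)$ and $uf(sv^\ast)$, which are disjoint pieces of $\usf(svs)$.

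\textbf{Main obstacle.} The delicate point is deriving a contradiction from this situation. Here I would reuse the downward-closure argument from the proof of Lemma~\ref{lemma:total-decomposition-shareability}. Let $q_i$ be the minimum of $sv'_i$; it lies in exactly one element of $svs$. I would compare $q_i$ with the minimum $q_a$ of $sv$ and with the point $q_{max}=join(\{q_a,q_i\})$, which lies in $uf(sv)\cap uf(sv'_i)$ by the proof of Lemma~\ref{overlap_equiv_internonempty}. Because $uf(sv'_i)$ is closed under going down towards $q_i$, the whole segment $[q_i,q_{max}]$ lies in $uf(sv'_i)$. If some point of $uf(sv'_i)$ belonged to $sv^\ast$, I would follow a chain inside $uf(sv'_i)$ from that point to $q_{max}$ and locate a boundary between $sv$ and $sv^\ast$. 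At that boundary, the exclusion vector of one of them sits at or below the join of their minima. This would make $sv$ and $sv^\ast$ shareable, contradicting the canonicity of $svs$.

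If this pathwise argument turns out to be too fragile, my fallback is an induction on $|svs|+|svs'|$. In that version I would pick the lexicographically least minimum among all elements and show that it is the minimum of an element common to both sets. Removing that element then preserves canonicity and the equality of $\usf$.

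Once the claim holds, we obtain $uf(sv)=\bigcup_{sv'\in\Tilde{svs}}uf(sv')$ with $|\Tilde{svs}|\geq2$, and all these symbolic vectors are non-empty and canonical. Lemma~\ref{lemma:total-decomposition-shareability} then yields two distinct elements of $\Tilde{svs}\subseteq svs'$ that are shareable. This contradicts condition~3 of Definition~\ref{definition:canonical-ps:canonical-svs} for $svs'$, so no such $sv$ exists and the property holds.
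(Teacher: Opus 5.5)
Granting your inclusion claim, the rest of the argument is correct. Your exact-cover case via \Cref{lemma:total-decomposition-shareability} coincides with the paper's first case. The difference lies in the strict-inclusion case. You try to eliminate it by proving $uf(sv'_i)\subseteq uf(sv)$ for every $sv'_i\in\Tilde{svs}$. The paper instead treats it directly: it takes a lexicographically minimal $sv$, a vector $a\in uf(sv')\setminus uf(sv)$ adjacent to $sv$, and the element $sv_a\in svs$ containing $a$, and then does a case analysis on $\neg shareable_{sv}(sv,sv_a)$.

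\textbf{The gap.} The inclusion claim is not established. Your boundary argument uses only the canonicity of $svs$, and the configuration it is supposed to refute is compatible with it. Take the canonical set $svs=\set{(\set{(2,4)},\varnothing),(\set{(4,2)},\set{(4,4)})}$, which is the paper's own non-shareable pair, and $sv'_i=(\set{(4,3)},\varnothing)$. Then $uf(sv'_i)\subseteq\usf(svs)$, $sv'_i$ overlaps both elements of $svs$, and it meets them at the adjacent vectors $(4,4)$ and $(4,3)$. At this boundary the exclusion vector $(4,4)$ of the lexicographically larger element equals the join of the two minima, which is exactly what your chain argument would produce. But by the definition of $shareable_{sv}$, an exclusion vector of the larger element at or below $q_{max}$ (or of the smaller one strictly below it) makes the pair \emph{non}-shareable. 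So the step ``this would make $sv$ and $sv^\ast$ shareable'' is reversed. Even setting aside that the chain may cross other elements of $svs$, no contradiction with the canonicity of $svs$ can come out of it.

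\textbf{What is missing.} The situation is really excluded by the canonicity of $svs'$: how the rest of $\usf(svs)$ is covered by the other elements of $svs'$, and on which side the lexicographic rule places constraints located at join points. In the example, suppose a canonical $svs'$ with the same underlying set contains $(\set{(4,3)},\varnothing)$. It must have an element containing $(3,4)$, whose minimum is $(2,4)$ or $(3,4)$. Disjointness from $(\set{(4,3)},\varnothing)$ forces it to have an exclusion vector below $(4,4)$, and the only admissible one is $(4,4)$ itself. That element is then shareable with $(\set{(4,3)},\varnothing)$, a contradiction. Your proposal never brings this information in.

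\textbf{The fallback.} It does not repair the gap. Taking the lexicographically least minimum $q$ only shows that both sets contain an element with minimum $q$. Showing that these two elements have the same exclusion set is exactly the question of where join-point constraints sit, i.e.\ essentially \Cref{theorem:canonical-ps:canonicity-svs} itself, and you give no argument for it. As written, the strict-inclusion case remains unproved.
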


\begin{proof}
Let $svs, svs' \in \SVS$  be canonical such that $\usf(svs)=\usf(svs')$, $\usf(svs) \neq \varnothing$ and $\usf(svs') \neq \varnothing$. 
By reductio ad absurdum, let us assume the existence of such a set noted as $\Tilde{svs}\subset svs'$.
In addition, assume that $sv=(\{q_c\}, d)\in svs$ such that $sv\not\in\Tilde{svs}$, $uf(sv)\subseteq\bigcup_{sv'\in\Tilde{svs}}uf(sv')$, and $\forall sv'\in\Tilde{svs}, overlap(sv ,sv')$.

First, let us consider the case of a total decomposition, i.e.,
\[
    uf(sv)
    =
    \bigcup_{sv'\in\Tilde{svs}} uf(sv').
\]
By Lemma~\ref{lemma:total-decomposition-shareability}, there exist
two distinct symbolic vectors $sv_i,sv_j\in\Tilde{svs}$ such that
\[
    shareable_{sv}(sv_i,sv_j)=\true.
\]
Since $\Tilde{svs}\subset svs'$, both $sv_i$ and $sv_j$ belong to
$svs'$. This contradicts the canonicity of $svs'$ $\lightning$.

Next, let us address the case of non-partial decomposition where it is not total.
This case is intricate and requires the use of lexicographic ordering. 
Since lexicographic ordering is a total order on vectors, we can order the symbolic vectors based on their inclusion bounds, as these are unique for each canonical symbolic vector.

Let us arbitrarily assume that we are working with the "smallest" one, in the lexicographic sense, among the partial decomposition.
We have assumed that there is at least one, so we can always make this assumption.
Let $sv=(\{q_c\}, d)\in svs$ be this symbolic vector such that $uf(sv)\subset \bigcup_{sv'\in\Tilde{svs}}uf(sv')$ and remains valid for the rest of the hypothesis.
In addition, $\forall \hat{sv}=(\{\hat{q_a}\}, \hat{b})\in svs$ such that  
\begin{align*}
\hat{sv} \not\in \Tilde{svs} \land uf(\hat{sv}) \subseteq \bigcup_{sv'\in\Tilde{svs}} uf(sv') \land \forall sv'\in\Tilde{svs}, overlap(\hat{sv}, sv')
\end{align*}
then
\begin{align} \label{minimalite}
q_c \leq_f \hat{q_a}   
\end{align}

Since $uf(sv)\subset \bigcup_{sv'\in\Tilde{svs}}uf(sv')$, we know that there is at least one $sv'\in\Tilde{svs}$ that is not covered entirely by $uf(sv)$.
More precisely, there exists an $sv'$ such that $uf(sv)\cap uf(sv')\neq uf(sv')$.
Among the vectors in $sv'$ that are not covered by $sv$, there must be at least one: $a\in (uf(sv')/uf(sv))$ that is adjacent to $sv$.
If this were not the case, we would have $sv'$ that contains a gap, which is impossible.
Hence, let us consider $sv_a=(\{q_a\}, b) \in svs$ as a symbolic vector such that $a \in uf(sv_a)$.
Thus, we have $a \in uf(sv_a)\land a\in uf(sv')$ and $\{q_{max}\}=join(\{q_a, q_c\})$.

Now, by the canonicity of $svs$, we know that $\neg shareable(sv, sv_a)$. This corresponds to the following two cases:
\begin{itemize}
    \item If $q_a\leq_f q_c$, this can be broken down into two cases:
    \begin{itemize}
        \item $(q_a \subseteq_f q_c\land q_a\neq_f q_c)$ or $q_a$ is not comparable with $q_c$: it means that $sv_a$ exists outside of $sv'$.
        Since $a\in uf(sv_a)$ and $a\in uf(sv')$, we have
        \[
            a\in uf(sv_a)\cap uf(sv').
        \]
        Therefore,
        \[
            uf(sv_a)\cap uf(sv')\neq\varnothing.
        \]
        By Lemma~\ref{overlap_equiv_internonempty}, it follows that
        \[
            overlap(sv_a,sv').
        \]
        However, this contradicts the minimality of $sv$
        (\Cref{minimalite}) $\lightning$.
        \item $q_a =_f q_c$: From this assumption, $q_a \in sv\land q_a\in sv_a \Rightarrow uf(sv)\cap uf(sv_a)\neq\varnothing\Rightarrow overlap(sv, sv_a)$.
        By contraposition (\Cref{non_share_implies_non_overlap}), it implies that $shareable(sv, sv_a)$ is true, which is not possible by hypothesis $\lightning$.
    \end{itemize}
    \item If $q_c \leq_f q_a$: we immediately rule out the cases of equality and non-total comparability that we just addressed above.
    So $q_c \subseteq_f q_a$, which implies $q_{max}=q_a$.
    Therefore, we are in the case $\exists q_d \in d, (q_d \subseteq_f q_{max}\land q_d\subseteq_f q_{max}) \lor \exists q_b \in b, q_b \subseteq_f q_{max}$.
    \begin{itemize}
        \item If $\exists q_d \in d, (q_d \subseteq_f q_{max}\land q_d\subseteq_f q_{max})$, then we observe that this implies $gap(sv, sv_a)$, which is absurd because $sv_a$ contains $a$, which is a vector adjacent to $sv$, and therefore, $sv_a$ is necessarily adjacent to $sv$ $\lightning$.
        \item If $\exists q_b \in b, q_b \subseteq_f q_{max}$, then $\exists q_b \in b, q_b \subseteq_f q_a$, leading to $uf(sv_a)=\varnothing$ that is impossible because of the canonicity of $svs$ $\lightning$. 
    \end{itemize}
\end{itemize}
Like all cases are absurd, we know that $shareable(sv, sv_a)$.
As a result, this contradicts the assumption of canonical $svs$ $\lightning$.

Consequently, all scenarios are absurd, and we have thus demonstrated that the non-partial decomposition cannot exist. 
\end{proof}

\begin{lemma}[Non-inclusion] 
\label{lemma_non_inclusion}
Let $svs, svs' \in \SVS$ be two canonical symbolic vector sets such that $\usf(svs)=\usf(svs')$, $\usf(svs) \neq \varnothing$ and $\usf(svs')\neq\varnothing$. Then,
\begin{align*}
    \nexists sv \in svs, \exists sv' \in svs', \ \text{such that}\ uf(sv) \subset uf(sv')
\end{align*}
\end{lemma}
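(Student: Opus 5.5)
We argue by contradiction. Suppose some $sv=(\{q_c\},d)\in svs$ and $sv'=(\{q_a\},b)\in svs'$ satisfy $uf(sv)\subset uf(sv')$ strictly. Since $\usf(svs)=\usf(svs')$, the set $uf(sv')$ is covered by the symbolic vectors of $svs$. We look at how $uf(sv')$ is decomposed by elements of $svs$ and aim for a contradiction with either Lemma~\ref{lemma:total-decomposition-shareability} or Lemma~\ref{lemma_non_depassement}, with the roles of $svs$ and $svs'$ exchanged.

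Let $\widetilde{svs}=\{sv_i\in svs \mid uf(sv_i)\cap uf(sv')\neq\varnothing\}$. This set contains $sv$. We split into two cases.

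\emph{Case 1: every $sv_i\in\widetilde{svs}$ satisfies $uf(sv_i)\subseteq uf(sv')$.} Then $uf(sv')=\bigcup_{sv_i\in\widetilde{svs}}uf(sv_i)$, because any point of $uf(sv')$ lies in some element of $svs$, and that element then meets $uf(sv')$. Strictness of $uf(sv)\subset uf(sv')$ forces $|\widetilde{svs}|\ge 2$. Lemma~\ref{lemma:total-decomposition-shareability} then gives two distinct shareable elements of $svs$, which contradicts condition~3 of Definition~\ref{definition:canonical-ps:canonical-svs}.

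\emph{Case 2: some $sv_j\in\widetilde{svs}$ has $uf(sv_j)\not\subseteq uf(sv')$.} Here $sv_j$ sticks out of $sv'$, and the part of $sv_j$ outside $sv'$ must itself be covered by elements of $svs'$. The first thing I would try is to reduce to Lemma~\ref{lemma_non_depassement} with the roles of $svs$ and $svs'$ swapped. We take $sv'$ as the covered vector and the subfamily of $svs$ overlapping it as $\widetilde{svs}$. By Lemma~\ref{overlap_equiv_internonempty}, every member of that subfamily overlaps $sv'$, and together they cover $uf(sv')$. So if $|\widetilde{svs}|\ge 2$, Lemma~\ref{lemma_non_depassement} (applied to $svs'$, $svs$) forbids this configuration directly.

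It remains to rule out $|\widetilde{svs}|=1$, which means $\widetilde{svs}=\{sv\}$. Then $uf(sv')\subseteq uf(sv)$, contradicting strict inclusion. This shows that the strict-inclusion hypothesis is exactly what supplies $|\widetilde{svs}|\ge 2$. Case~1 is then in fact absorbed by the same argument, since the total decomposition case is handled inside Lemma~\ref{lemma_non_depassement}.

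\textbf{Main obstacle.} The delicate point is checking that the hypotheses of Lemma~\ref{lemma_non_depassement} match precisely after the role swap. That lemma requires $\widetilde{svs}\subset svs'$ (here a proper subfamily of $svs$) and $sv\notin\widetilde{svs}$ (here $sv'\notin svs$).

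If $sv'$ happened to belong to $svs$, we would have both $sv\subset sv'$ in $svs$, and then $overlap(sv,sv')$ holds since the intersection is $uf(sv)\ne\varnothing$. By the contrapositive of Lemma~\ref{non_share_implies_non_overlap} they are shareable, contradicting canonicity of $svs$.

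Properness of $\widetilde{svs}$ is not essential. If $\widetilde{svs}=svs$, the argument goes through identically, or one simply notes that the lemma's proof never uses properness. I would check this carefully, adjusting via the total-decomposition lemma if needed.
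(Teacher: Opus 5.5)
Your proposal is correct and follows essentially the paper's route. Like the paper, you cover $uf(sv')$ by members of $svs$, rule out $sv'\in svs$ because overlap implies shareability (contradicting canonicity of $svs$), and then invoke Lemma~\ref{lemma_non_depassement} with the roles of $svs$ and $svs'$ exchanged. Your choice of the family of all members of $svs$ meeting $uf(sv')$ makes the overlap hypothesis immediate via Lemma~\ref{overlap_equiv_internonempty} and avoids the paper's exact-cover step, which that lemma does not need; as you note, the preliminary Case~1 via Lemma~\ref{lemma:total-decomposition-shareability} is redundant.
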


\begin{proof}
Let $svs, svs' \in \SVS$  be canonical such that $\usf(svs)=\usf(svs')$, $\usf(svs) \neq \varnothing$ and $\usf(svs') \neq \varnothing$. 

By reductio ad absurdum, assume $\exists sv\in svs, \exists sv'\in svs'$ such that $uf(sv)\subset uf(sv')$.
The situation $\usf(svs)=\usf(svs')$ implies the existence of one or more vectors within $svs$ to cover $uf(sv')$. 
So, $\exists sv_1, \cdots, sv_n \in svs$, and we set $\Tilde{svs}={sv, sv_1, \cdots, sv_n}$ such that
$$uf(sv) \cup \bigcup_i^n uf(sv_i) = \bigcup_{\Tilde{sv}\in\Tilde{svs}}uf(\Tilde{sv}) = uf(sv')$$
We can achieve an exact cover because we have demonstrated~\Cref{lemma_non_depassement}, which ensures that each $uf(sv)$ is either equal to or contained within a $uf(sv')$.
Thanks to it, it remains to prove that $sv'\not\in\Tilde{svs}$ and $\forall \Tilde{sv}\in\Tilde{svs}, overlap(sv', \Tilde{sv})$.

By reductio ad absurdum, let us show $sv'\not\in\Tilde{svs}$, i.e. by assuming that $sv'\in\Tilde{svs}$, which also implies that $sv' \in svs$.
Suppose that there exists $sv_x \in \Tilde{svs}$ with $sv_x \neq sv'$. 
Since $uf(sv') \cup uf(sv_x) = uf(sv')$ and $sv_x$ cannot be empty due to the canonicity of $svs$, we have $uf(sv') \cap uf(sv_x) \neq \varnothing$.
By \Cref{overlap_equiv_internonempty}, we know $overlap(sv', sv_x)$, which, by the contrapositive of \Cref{non_share_implies_non_overlap}, gives us $shareable(sv', sv_x)$. 
This contradicts the assumption of canonicity of $svs$ $\lightning$.

In a similar way, let us show $\forall \Tilde{sv}\in\Tilde{svs}, overlap(sv', \Tilde{sv})$.
We assume that there exists $\Tilde{sv}_x\in \Tilde{svs}$ such that $\neg overlap(sv', \Tilde{sv}_x)$.
Since the union is equal to $uf(sv')$, this means that $\Tilde{sv}_x$ contributes nothing to the union.
Consequently, $\Tilde{sv}_x \in \varnothing_{sv}$, contradicting the assumption of a canonical $svs$ $\lightning$.

Thus, we have all the assumptions to apply~\Cref{lemma_non_depassement}, which ensures that such a decomposition is not possible.
\end{proof}

We are now able to introduce the complete proof of~\Cref{theorem:canonical-ps:canonicity-svs}, which states the following:
Let $svs, svs' \in \SVS$ be two canonical symbolic vectors.
Then, $\usf(svs) = \usf(svs') \Leftrightarrow svs = svs'$.

\begin{proof}
\hfill
\begin{itemize}
\item[\boxed{\impliedby}] By the determinism of $\usf$, if $svs = svs'$, then $\usf(svs) = \usf(svs')$ is obviously true.
\item[\boxed{\implies}]
Let us prove this by induction on the size of the underlying set.
Let us start with the base case, where $|\usf(svs)|=|\usf(svs')|=0$.
We know that there is only one representation of an underlying set of size $0$, and that is $\varnothing_{svs}$. 
If $svs$ or $svs'$ contained any $sv$, by the second rule of the canonicity definition, the set of corresponding symbolic vectors would no longer be empty and its underlying set would not be of size 0.

Now let us consider the induction step. 
Assume canonical $svs, svs'$ such that $\usf(svs)=\usf(svs')$. 
Let $|svs|=n$ and $|svs'|=m$, representing the number of symbolic vectors in each set. Without loss of generality, we can assume that $n\leq m$ since the cases are symmetric.
\begin{align*}
    \usf(svs) =& \bigcup_{sv\in svs}uf(sv) = \bigcup_{sv'\in svs'}uf(sv') = \usf(svs')
\end{align*}
At each step, we address the same questions regarding the underlying sets of the considered symbolic vectors.
Specifically, we inquire whether each underlying set of an $sv\in svs$ is also an underlying set of an $sv'\in svs'$.

Now, let us consider the two cases.
\begin{itemize}
\item[\textbf{Cas $n=m$} : ] 
We distinguish three cases regarding the equality of underlying sets:
\begin{enumerate}
    \item 
    (One $uf(sv)$ is "larger" than its counterpart) There is at least one $sv$ such that $uf(sv) \subset \bigcup_{\Tilde{sv}\in\Tilde{svs}}uf(\Tilde{sv}) = \usf(\Tilde{svs})$, where $\Tilde{svs} \subset svs'$ and $|\Tilde{svs}| \geq 2$.
    We know that this case is impossible by \Cref{lemma_non_depassement}(Non-partial decomposition). $\lightning$ 
    \item (One $uf(sv)$ is "smaller" than its counterpart) There is at least one $sv$ corresponding to the case where $\exists sv' \in svs'$ such that $uf(sv) \subset uf(sv')$. 
    We know that this case is impossible by \Cref{lemma_non_inclusion}(Non-Inclusion). $\lightning$
    \item Each $uf(sv)$ is equal to a $uf(sv')$, i.e., $\forall sv, \exists ! sv', uf(sv)=uf(sv')$, which is equivalent to saying $\forall sv, \exists ! sv', sv=sv'$ due to the uniqueness of the representation of canonical symbolic vectors.
So, $svs=svs'$ because they are composed of the same canonical symbolic vectors. 
\end{enumerate}

\item[\textbf{Cas $n<m$} : ] 
Let us distinguish the same cases as for $n=m$. By the exact same reasoning, we can conclude that cases $1.$ and $2.$ are still impossible.
Now, let us verify that the third case also does not hold.
\begin{enumerate}
    \item[3.] As each $sv \in svs$ is equal to a unique $sv' \in svs'$, we can construct the following non-empty sets:  $svs'_e=\{sv'\in svs' \mid sv'\not\in svs\}$ and $svs'_f=\{sv'\in svs'\mid sv'\in svs\}$.
    We observe that
    \begin{align*}
        \usf(svs)=\usf(svs')=\usf(svs'_e)\cup \usf(svs'_f)
    \end{align*}
    This implies, due to $\usf(svs) = \usf(svs'_f)$ that $\usf(svs'_e) = \varnothing$.
    In addition, this results in symbolic vectors in $svs'_e$ being empty, which originally belongs to $svs'$.
    This contradicts the assumption that $svs'$ is canonical $\lightning$.
\end{enumerate}
\end{itemize}
Through all these cases, all are impossible except one.
Thus, we have shown that if $svs$ and $svs'$ are canonical, and $\usf(svs) = \usf(svs')$, then $svs = svs'$.
\end{itemize}
As a result, the implication is demonstrated and the equivalence as well.
\end{proof}

From this point, we want to build the canonical operations on symbolic vector sets to preserve the canonicity, thereby eliminating redundancy in the representation and capitalising on optimisations resulting from its uniqueness.

\subsection{Canonical operations}

We begin by defining a union operation in such a way that elements are transferred from one set to the other, and such that each newly introduced element, as well as the target structure itself, is adjusted to preserve canonicity.
To aid in defining this new operation, we introduce several intermediate definitions that are required for it.

\begin{definition}[Remove empty symbolic vectors]
Let $svs \in \SVS$ be a symbolic vector set.
The function $rmEmpty: \SVS \rightarrow \SVS$ that removes the empty symbolic vectors from a symbolic vector set is defined as:
\begin{align*}
    rmEmpty(svs) = \set{sv \in svs ~|~ sv \not\in \varnothing_{sv}}
\end{align*}
\end{definition}

This function maintains the second condition of canonicity by removing symbolic vectors that could result in an empty set of functions.

In order to define the union, we need to define the canonical subtraction between two symbolic vectors.

\begin{definition}[Canonical difference of symbolic vectors]
\label{definition:canonical-ps:difference-ps}
Let $sv = (a,b) \in \SV, sv' = (c,d) \in \SV$ be canonical symbolic vectors.
The definition of canonical difference on symbolic vectors, noted $\csetminus_{sv}: \SV \times \SV \rightarrow \SVS$, is defined as:
\begin{align*}
& q_1, \dots, q_n \in Q \land q_1 \leq_f \dots \leq_f q_n \\
f&: \SV \times \powerset{Q} \rightarrow \SVS \\
f((x,y), \varnothing) &= \varnothing_{svs} \\
f((x,y), \set{q_1, \dots, q_n}) &= rmEmpty(\set{can_{sv}(x \cup \set{q_1}, y)}) \cup_{svs} f((x,y \cup \set{q_1}), \set{q_2, \dots, q_n}) \\ 
\\
sv \csetminus_{sv} sv' &= \begin{cases}
    \bm{if}~sv' \in \varnothing_{sv} & rmEmpty(\set{sv})\\
    \bm{else}& rmEmpty(\set{can_{sv}(a, b \cup c)} \cup_{svs} f((a \cup c,b), d))
\end{cases}
\end{align*}
\end{definition}

While the definition of difference appears succinct, a sub-function $f$ has been introduced to facilitate the recursive implementation. 
This sub-function conceptually aligns with the original difference operation. 
However, to prevent the generation of symbolic vectors that could undermine canonicity, this specialised function is employed.
Specifically, the difference operation produces distinct symbolic vectors for each function within $d$.
Moreover, there is a potential for these symbolic vectors to intersect with each other, potentially resulting in non-canonical forms.

\begin{example}
Let us observe an example of the application of the basic difference operation:
\begin{align*}
    (\set{(1,1)}, \varnothing) \setminus_{sv} &(\set{(3,3)}, \set{(5,8), (8,4)}) \\
    &= \\
    \set{(\set{(1,1)}, \set{(3,3)})&, (\set{(5,8)}, \varnothing), (\set{(8,4)}, \varnothing)}
\end{align*}
This version is clearly not canonical, because there exists an overlap between $(\set{(5,8)}, \varnothing)$ and $ (\set{(8,4)}, \varnothing)$.
\end{example}

To address this concern, the approach involves incorporating the preceding $q_d$, employed to construct its respective symbolic vector, as a constraint for generating subsequent symbolic vectors.
This strategy guarantees that previously added values are not reintroduced into subsequent symbolic vectors.
Furthermore, to respect canonicity, the sequence in which tuples within $d$ are used to form new symbolic vectors is regulated using the total order relation.
This mechanism ensures that the smallest symbolic vector encompasses the shared portion.

\begin{example}
Let us examine the previous example with the canonical difference:
\begin{align*}
    (5,8) \leq_f (8,4) & \\
    f((\set{(1,1)},& \varnothing), \set{(5,8), (8,4)}) \\
    =&~\set{can_{sv}(\set{(1,1)} \cup \set{(5,8)}, \varnothing)} \cup_{svs} f((\set{(1,1)}, \set{(5,8)}), \set{((8,4)}) \\
    =&~\set{(\set{(5,8)}, \varnothing)} \cup_{svs} f((\set{(1,1)}, \set{(5,8)}), \set{((8,4)}) \\
    =&~\set{(\set{(5,8)}, \varnothing)} \cup_{svs} \set{can_{sv}(\set{(8,4) \cup \set{(1,1)}}, \set{(5,8)})} \\
    &~\cup_{svs} f((\set{(1,1)}, \set{(5,8), (8,4)}), \varnothing) \\
    =&~\set{(\set{(5,8)}, \varnothing)} \cup_{svs} \set{(\set{(8,4)}, \set{(8,8)})} \cup_{svs} \varnothing_{svs} \\
    =&~ \set{(\set{(5,8)}, \varnothing), (\set{(8,4)}, \set{(8,8)})} \\
    \\
    (\set{(1,1)}, \varnothing)& \csetminus_{sv} (\set{(3,3)}, \set{(5,8), (8,4)}) \\
    &= \set{can_{sv}(\set{(1,1)}, \set{(3,3)})} \cup_{svs} f((\set{(1,1)}, \varnothing), \set{(5,8), (8,4)})\\
    &= \set{(\set{(1,1)}, \set{(3,3)})} \cup_{svs} \set{(\set{(5,8)}, \varnothing), (\set{(8,4)}, \set{(8,8)})} \\
    &= \set{(\set{(1,1)}, \set{(3,3)}), (\set{(5,8)}, \varnothing), (\set{(8,4)}, \set{(8,8)})} 
\end{align*}
This updated version exhibits no overlap and is inherently canonical. 
It is important to note that we have omitted the invocation of the function $rmEmpty$ here to enhance clarity.
Moreover, there are no symbolic vectors yielding an empty underlying set.
\end{example}

\begin{lemma}
\label{lemma:canonical-ps:canonical-difference-sv}
If $sv, sv' \in SV$ are two canonical symbolic vectors, then $sv \csetminus_{sv} sv'$ is canonical.
\end{lemma}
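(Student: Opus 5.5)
We have to show that the output of $sv \csetminus_{sv} sv'$ satisfies the three conditions of \Cref{definition:canonical-ps:canonical-svs}. The approach is to unfold the recursion of $f$ explicitly and then check each condition. Write $sv=(\set{q_a},b)$, $sv'=(\set{q_c},d)$ and $d=\set{q_1,\dots,q_n}$ with $q_1\leq_f\dots\leq_f q_n$. When $sv'\in\varnothing_{sv}$ the result is $rmEmpty(\set{sv})$, which is trivially canonical. Otherwise, an easy induction on $n$ gives the closed form
\begin{align*}
sv \csetminus_{sv} sv' = rmEmpty\big(\set{sv_0,sv_1,\dots,sv_n}\big),\quad sv_0=can_{sv}(\set{q_a},b\cup\set{q_c}),\quad sv_k=can_{sv}(\set{q_a,q_c,q_k},\, b\cup\set{q_1,\dots,q_{k-1}}).
\end{align*}

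Conditions 1 and 2 then follow immediately. Every element is an output of $can_{sv}$, so it is canonical. The final $rmEmpty$ deletes every element of $\varnothing_{sv}$, so either the set is $\varnothing$ or no element is empty. The substance is condition 3: for all $0\le j<k\le n$ we must show $shareable_{sv}(sv_j,sv_k)=\false$. Denote the (joined) lower bounds by $\ell_0=q_a$ and $\ell_k=join(\set{q_a,q_c,q_k})$, and set $q_{max}=join(\set{\ell_j,\ell_k})$. The key observation is that every vector of $uf(sv_k)$ avoids $q_j$ and contains $q_c$, while $sv_0$ excludes $q_c$. We use this to exhibit, for the appropriate orientation, an exclusion bound below $q_{max}$ as required by \Cref{definition:canonical-ps:shareable}.

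For $j=0$: we have $q_c\subseteq_f \ell_k\subseteq_f q_{max}$. The exclusion set of $sv_0$ contains $q_c$, possibly raised by $can_{sv}$ to $join(\set{q_a,q_c})\subseteq_f q_{max}$. So whichever of $\ell_0,\ell_k$ is lexicographically smaller, the relevant clause holds. If $sv_0$ plays the role of the first argument, we additionally need strictness $\neq_f q_{max}$. If equality occurs, the bound equals $\ell_k$, which forces $q_k\subseteq_f join(\set{q_a,q_c})$ and hence the emptiness of a component; that contradicts $rmEmpty$.

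For $1\le j<k$: the exclusion set of $sv_k$ contains $q_j$, canonised to $join(\set{\ell_k,q_j})$. Since $q_j\subseteq_f \ell_j\subseteq_f q_{max}$, this canonised bound is $\subseteq_f q_{max}$. By the ordering $q_j\leq_f q_k$ and monotonicity, we expect $\ell_j\leq_f\ell_k$ in the typical case. Then $sv_k$ is the second argument, and the condition $\nexists q_d\in d, q_d\subseteq_f q_{max}$ fails non-strictly, which is exactly what is needed.

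The main obstacle is the orientation issue. The lexicographic order on the joined lower bounds $\ell_j,\ell_k$ need not follow the order $q_j\leq_f q_k$, and the strict clause $q_b\neq_f q_{max}$ applies when the constrained vector is the first argument. The first thing to try is to show that if $\ell_k\leq_f\ell_j$, the bound $join(\set{\ell_k,q_j})$ equal to $q_{max}$ would make $uf(sv_k)$ agree with $sv_j$'s region in a way that contradicts disjointness. Here we would use that $uf(sv_j)\cap uf(sv_k)=\varnothing$ by construction, together with \Cref{overlap_equiv_internonempty}. If this fails, we fall back on the semantic route: by \Cref{overlap_equiv_internonempty} the pieces are pairwise non-overlapping, and any remaining shareable configuration would contradict the minimality encoded by processing $d$ in $\leq_f$ order.
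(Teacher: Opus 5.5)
Your unfolding of $f$ into the closed form $\set{sv_0,\dots,sv_n}$ is correct, and so is your treatment of conditions 1 and 2. The pieces are also pairwise disjoint, as you observe. The gap is condition 3, and it cannot be closed: with the paper's definitions, the statement itself fails.

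\textbf{The case $j=0$.} Here the orientation is forced. Since $\ell_0=q_a\subseteq_f\ell_k$, we have $\ell_0\leq_f\ell_k$, so you always need the strict clause. Your step ``if equality occurs \dots\ hence the emptiness of a component'' is false. Take the canonical vectors $sv=(\set{(2,0)},\varnothing)$ and $sv'=(\set{(1,1)},\set{(2,1)})$. Then $sv_0=can_{sv}(\set{(2,0)},\set{(1,1)})=(\set{(2,0)},\set{(2,1)})$ and $sv_1=(\set{(2,1)},\varnothing)$. Neither is empty, and $join(\set{q_a,q_c})=\ell_1=q_{max}=(2,1)$. The only exclusion bound of $sv_0$ equals $q_{max}$, and $sv_1$ has none, so $shareable_{sv}(sv_0,sv_1)=\true$. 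The two pieces are in fact mergeable into $(\set{(2,0)},\varnothing)$, which is the canonical form of $uf(sv)\setminus uf(sv')=uf(sv)$.

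\textbf{The case $1\leq j<k$.} The orientation problem you flagged is equally fatal, because lexicographic order is not preserved by joining with $q_a$. Take $sv=(\set{(2,0)},\varnothing)$ and $sv'=(\set{(0,0)},\set{(1,3),(2,0)})$. The piece $sv_0$ is empty, and the output is $\set{(\set{(2,3)},\varnothing),(\set{(2,0)},\set{(2,3)})}$. Although $q_1=(1,3)\leq_f q_2=(2,0)$, we get $\ell_2=(2,0)<_f\ell_1=(2,3)$. So $sv_2$ is the first argument, its only exclusion bound equals $q_{max}=(2,3)$, and the pair is shareable.

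\textbf{Your fallback.} It cannot rescue this. \Cref{non_share_implies_non_overlap} gives $\neg shareable\Rightarrow\neg overlap$, but you need the converse. The converse fails precisely for disjoint but adjacent pieces like the ones above.

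What the construction really guarantees is that each output element is canonical and non-empty, and that the elements are pairwise disjoint. Canonicity of the set in the sense of \Cref{definition:canonical-ps:canonical-svs} would require recombining the pieces, for instance by assembling them with $\ccup_{svs}$ or applying $can_{svs}$, instead of the plain $\cup_{svs}$ used in $f$.
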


\begin{lemma}
\label{lemma:canonical-ps:merge-canonical}
If $sv, sv' \in SV$ are two canonical symbolic vectors and mergeable, then $merge_{sv}(sv,sv')$ is canonical.
\end{lemma}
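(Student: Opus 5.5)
The plan is to unfold $merge_{sv}$ case by case and show that in every branch the output is a canonical symbolic vector set in the sense of \Cref{definition:canonical-ps:canonical-svs}.

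\textbf{Empty and singleton branches.} Suppose both inputs lie in $\varnothing_{sv}$. Then $merge_{sv}$ returns $\varnothing$, which is canonical by condition 2 of the definition. Suppose exactly one input is empty. Then the result is $\set{sv}$ or $\set{sv'}$. Condition 1 holds because the remaining input is a canonical symbolic vector. Condition 2 holds because that vector is non-empty. Condition 3 is vacuous for a singleton. So the only substantial branch is the one where both inputs are non-empty and $mergeable_{sv}(sv,sv')=\true$. By the symmetric recursive call we may assume $q_a \subseteq_f q_c$, so the output is $\set{can_{sv}(\set{q_a}, b' \cup b'')}$.

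\textbf{Main branch, conditions 1 and 3.} By \Cref{lemma:canonical-ps:merge-singleton} the output is a singleton, so condition 3 is again vacuous. Condition 1 holds because $can_{sv}$ returns a canonical symbolic vector by its specification. \Cref{lemma:canonical-ps:basic-equality} ensures that this canonicalisation does not change the underlying set.

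\textbf{Main branch, condition 2.} This is the step I expect to be the real obstacle: $can_{sv}$ could return the empty representative $(\set{f_\varepsilon},\set{f_\varepsilon})$, and $rmEmpty$ is not applied here. To rule this out I would compute underlying sets. By \Cref{lemma:canonical-ps:merge-ps-canonical},
\[
\usf(merge_{sv}(sv,sv')) = uf(sv)\cup uf(sv').
\]
Since $sv \notin \varnothing_{sv}$ and $sv$ is canonical, we have $q_a \in uf(sv)$: it satisfies the lower bound, and no $q_b \in b$ satisfies $q_b \subseteq_f q_a$. Hence the union is non-empty, and so the unique element of the output is not in $\varnothing_{sv}$.

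Should \Cref{lemma:canonical-ps:merge-ps-canonical} be considered too weak in the direction used, I would check directly that $q_a$ is not excluded by $b' \cup b''$. For $q_b \in b'$ this follows from the non-emptiness of $sv$. For $join(\set{q_b,q_d}) \in b''$, canonicity of $sv'$ gives $q_c \subseteq_f q_d$. Then $join(\set{q_b,q_d}) \subseteq_f q_a$ would force $q_c \subseteq_f q_a$, hence $q_c = q_a$, and therefore $q_d \subseteq_f q_c$, contradicting $sv' \notin \varnothing_{sv}$.

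With these three conditions verified in every branch, $merge_{sv}(sv,sv')$ is canonical.
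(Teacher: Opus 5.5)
Your proof is correct. The paper gives no proof of this lemma and defers it to the thesis. The only hint is the remark after the definition of $merge_{sv}$ that applying $can_{sv}$ ``ensures a canonical structure''. That remark covers your conditions 1 and 3 but says nothing about condition 2.

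Your explicit check that the merged vector is not in $\varnothing_{sv}$ is exactly the point the paper glosses over, since $can_{sv}$ alone could return the empty representative. Your direct verification is preferable to going through the merge-correctness lemma, whose proof in the paper is only sketched.

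Two small remarks:
\begin{enumerate}
\item The $b''$ case is simpler than you make it. Since $q_b \subseteq_f join(\set{q_b,q_d})$, a join below $q_a$ would put $q_b$ below $q_a$. That contradicts $sv \notin \varnothing_{sv}$ directly, without passing through $q_d$, the canonicity of $sv'$ and antisymmetry.
\item You should state explicitly that you read $b''$ as $\set{join(\set{q_b,q_d}) \mid q_b \in b,\ q_d \in d,\ q_c \subseteq_f q_b}$. The literal, universally quantified set-builder in the definition degenerates when $b$ or $d$ is empty: its condition then holds vacuously for every vector. Your non-emptiness argument relies on every element of $b''$ being such a join.
\end{enumerate}
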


We are now in a position to create the canonical union operation.

\begin{definition}[Canonical union]
\label{def:canonical-ps:union-svs}
Let $svs = \set{sv_1, \dots, sv_n}, svs' \in \SVS, n > 0$ be canonical symbolic vector sets.
The \textit{canonical union} of two symbolic vector sets, noted $\ccup_{svs}: \SVS \times \SVS \rightarrow \SVS$, is defined as:
\begin{align*}
\text{Given:}& \\
sv_1 &= (\set{q_a}, b),~ sv_1' = (\set{q_c}, d) \\
m &\in \Nat, m \geq 0 \\
svs '' &= \set{sv_1', \dots, sv_m'} = \set{sv' \in svs' ~|~ shareable_{sv}(sv_1, sv') = \true} \\
sh &= share_{sv}(sv_1, sv_1') \\
\\
\varnothing \ccup_{svs} svs &=  svs \\
\set{sv_1} \ccup_{svs} svs' &=
\begin{cases}    
\bm{if}&svs'' \neq \varnothing \\
    &\begin{cases}
        \bm{if}&q_a \leq_f q_c \\
        &(merge_{sv}(sv_1, sh) \cup_{svs} (sv_1' \csetminus_{sv} sh)) \ccup_{svs} (svs' \setminus \set{sv_1'}) \\
        \bm{if}& q_c \leq_f q_a\\
        &(merge_{sv}(sv_1', sh) \cup_{svs} (sv_1 \csetminus_{sv} sh)) \ccup_{svs} (svs' \setminus \set{sv_1'}) \\
    \end{cases} \\
    \bm{if}&svs'' = \varnothing\\
    &\set{sv_1} \cup_{svs} svs'
\end{cases} \\
\bm{if}~n > 1&: \set{sv_1, \dots, sv_n} \ccup_{svs} svs' = \set{sv_2, \dots, sv_n} \ccup_{svs} \set{sv_1} \ccup_{svs} svs' 
\end{align*}
\end{definition}

The fundamental concept underlying the union operation involves sequentially isolating each symbolic vector from the left-hand side.
Subsequently, all components of the other set are examined to identify those that include a shareable part with the symbolic vector that is isolated from it.
The set generated $svs''$ contains all of these symbolic vectors.
As we explained earlier, if shareability between two symbolic vectors is true, this means that a change is required to conform to canonicity.
A symbolic vector is selected in $svs''$ regardless of order.
The shared part between these two symbolic vectors is extracted and stored in $sh$, which is then merged with the smaller symbolic vector.
Given that the larger symbolic vector already contains this shared segment, a subtraction operation is applied to eliminate it.
Thus, this shared part is moved from one symbolic vector to the other with respect to the total order of the functions.
The same process is repeated as many times as necessary until there is only one set left.
When an element no longer possesses any shared components with the other set, it is included using a conventional set union operation.

\begin{remark}
Remark that the difference $svs \setminus \set{sv_1'}$ is a standard set difference operation.
We remove the $sv_1'$ element that was selected in $svs$.
\end{remark}

\begin{lemma}[Canonical union commutativity]
\label{lemma:canonical-ps:union-commutativity}
Let $svs, svs' \in \powerset{PS}$ be two symbolic vector sets.
Then,
\begin{align*}
svs \ccup_{svs} svs' = svs' \ccup_{svs} svs
\end{align*}
\end{lemma}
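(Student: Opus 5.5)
The plan is to avoid tracing the recursive definition of $\ccup_{svs}$ step by step. Instead I would argue through the uniqueness of canonical representations (\Cref{theorem:canonical-ps:canonicity-svs}). Commutativity then reduces to two facts about the canonical union: it always produces a canonical symbolic vector set, and it is homomorphic for union. Once both hold, commutativity follows from the commutativity of ordinary set union on the underlying vectors. As in the definition of $\ccup_{svs}$, I assume $svs$ and $svs'$ are canonical.

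\textbf{Semantic preservation.} The first step is to show
\[
\usf(svs \ccup_{svs} svs') = \usf(svs) \cup \usf(svs')
\]
by well-founded induction along the recursion of \Cref{def:canonical-ps:union-svs}.

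The base cases are direct. The case $\varnothing \ccup_{svs} svs$ is immediate. The case $svs'' = \varnothing$ reduces to plain set union, which is homomorphic by definition of $\usf$.

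The splitting case $n>1$ follows from associativity of $\cup$ on $\powerset{Q_n}$ together with the induction hypothesis.

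The key case is the sharing step with $q_a \leq_f q_c$. Here I need
\[
\usf(merge_{sv}(sv_1, sh)) \cup \usf(sv_1' \csetminus_{sv} sh) = uf(sv_1) \cup uf(sv_1').
\]
By \Cref{lemma:canonical-ps:merge-ps-canonical} the merge preserves the underlying set. The canonical difference is homomorphic, by the same argument as \Cref{lemma:predicate-structure:difference-ps-homomorphism}, since $f$ only adds disjointness constraints. So the left side is $uf(sv_1) \cup uf(sh) \cup (uf(sv_1') \setminus uf(sh))$. This equals the right side provided $uf(sh) \subseteq uf(sv_1')$, which should hold by construction of $share_{sv}$ as a part extracted from $sv_1'$. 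The symmetric case is analogous.

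For termination of the induction I would use the number of elements of $svs'$ still to be processed, which strictly decreases at each call, lexicographically combined with $|svs|$.

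\textbf{Canonicity of the result.} The second step shows that $svs \ccup_{svs} svs'$ satisfies the three conditions of \Cref{definition:canonical-ps:canonical-svs}.

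Condition 1 follows from \Cref{lemma:canonical-ps:merge-canonical} and \Cref{lemma:canonical-ps:canonical-difference-sv}. Condition 2 follows from the $rmEmpty$ calls inside $\csetminus_{sv}$ and from $merge_{sv}$ discarding empty inputs.

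Condition 3, the absence of shareable pairs, is the step I expect to be the main obstacle. The recursion stops only when $svs''=\varnothing$, so $sv_1$ shares with nothing in the remaining set. I must also show that the freshly produced pieces, $merge_{sv}(\cdot)$ and the fragments of $sv_1' \csetminus_{sv} sh$, do not become shareable with elements of $svs'$ already checked. My first attempt would be an invariant: at every recursive call, the right-hand argument is canonical. This invariant relies on the pieces produced by $\csetminus_{sv}$ being ordered by $\leq_f$, so that shared parts always sit on the lexicographically smaller vector. If that invariant proves too hard to maintain, a fallback is to note that applying $can_{svs}$ to the output does not change the argument, because only canonicity and $\usf$ are used below.

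\textbf{Conclusion.} With both facts in hand, $svs \ccup_{svs} svs'$ and $svs' \ccup_{svs} svs$ are canonical. Their underlying sets are $\usf(svs)\cup\usf(svs')$ and $\usf(svs')\cup\usf(svs)$, which are equal. \Cref{theorem:canonical-ps:canonicity-svs} then gives syntactic equality of the two results, which is the claim.
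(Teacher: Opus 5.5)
Your strategy is the natural one, and it is essentially forced. The definition of $\ccup_{svs}$ leaves open which element plays $sv_1$ and which $sv_1'\in svs''$ is selected, so syntactic equality of the two outputs can only come from \Cref{theorem:canonical-ps:canonicity-svs}. The paper gives no proof of this lemma, but the ingredients it states (that theorem, \Cref{proposition:canonical-ps:canonical-union}, and the homomorphic canonical operations of the Remark) are exactly the ones you combine.

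\textbf{Canonicity is not established.} As written, your proof lacks its decisive step. Canonicity of $svs\ccup_{svs}svs'$, in particular Condition 3, rests only on an invariant you announce but do not prove. Showing that the merged vector and the fragments of $sv_1'\csetminus_{sv}sh$ are not shareable with already processed elements is the entire content of \Cref{proposition:canonical-ps:canonical-union}.

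Your fallback does not fix this. Applying $can_{svs}$ to both outputs only yields $can_{svs}(svs\ccup_{svs}svs')=can_{svs}(svs'\ccup_{svs}svs)$. That amounts to the semantic equality $svs\ccup_{svs}svs' =_{svs} svs'\ccup_{svs}svs$, which your first step already gives without any canonicity. The lemma asserts equality of the raw outputs. You cannot recover them from their canonised forms unless you already know they were canonical, and $can_{svs}$ is itself built from $\ccup_{svs}$. The repair is to cite \Cref{proposition:canonical-ps:canonical-union} for both outputs. You should check that its proof does not itself use commutativity; the explicit $q_c\leq_f q_a$ branch of \Cref{def:canonical-ps:union-svs} suggests it need not.

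\textbf{Two points in the semantic step.} First, your termination measure does not decrease. In the $n>1$ clause the outer call is $\set{sv_2,\dots,sv_n}\ccup_{svs}(\set{sv_1}\ccup_{svs}svs')$. Its right operand can be strictly larger than $svs'$; already when $svs''=\varnothing$ it is $\set{sv_1}\cup svs'$. So a lexicographic measure led by the size of the right operand increases there. A measure led by the size of the left operand increases instead in the sharing case, where $merge_{sv}(sv_1,sh)\cup_{svs}(sv_1'\csetminus_{sv}sh)$ may have several elements. Take termination from \Cref{thm:canonical-union-termination} and induct on the height of the finite recursion tree instead.

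Second, $share_{sv}$ is never defined in the paper, so what you need from it must be stated as an explicit hypothesis. You need $sh$ to be canonical, since $merge_{sv}$ and $\csetminus_{sv}$ are only defined on canonical symbolic vectors. You also need $uf(sh)\subseteq uf(sv_1)\cup uf(sv_1')$. This is exactly what makes $uf(sv_1)\cup uf(sh)\cup(uf(sv_1')\setminus uf(sh))$, and its mirror image in the other branch, equal to $uf(sv_1)\cup uf(sv_1')$. It is weaker than the inclusion in $uf(sv_1')$ that you assume.
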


\begin{proposition}
\label{proposition:canonical-ps:canonical-union}
If $svs, svs' \in \SVS$ are two canonical symbolic vector sets, then $svs \ccup_{svs} svs'$ is canonical.
\end{proposition}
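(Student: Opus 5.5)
The plan is an induction that mirrors the recursive structure of \Cref{def:canonical-ps:union-svs}. The outer induction is on $n=|svs|$. The case $n=0$ is trivial, since $\varnothing \ccup_{svs} svs' = svs'$ is canonical by hypothesis. For $n>1$, the defining clause rewrites $svs \ccup_{svs} svs'$ as $\set{sv_2,\dots,sv_n} \ccup_{svs} (\set{sv_1} \ccup_{svs} svs')$. So it suffices to prove the single-element case: if $sv_1$ is canonical and non-empty and $svs'$ is canonical, then $\set{sv_1}\ccup_{svs} svs'$ is canonical. The outer induction hypothesis then applies, because $\set{sv_2,\dots,sv_n}$ is a subset of a canonical set and is therefore canonical.

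For the single-element case I would use an inner well-founded induction on a measure that decreases at each recursive call. A natural first choice is the pair $(|svs'|,\ \text{number of shareable pairs})$ ordered lexicographically, or the number of lexicographically misplaced exclusion constraints. The base case $svs''=\varnothing$ returns $\set{sv_1}\cup_{svs} svs'$. Conditions 1 and 2 of \Cref{definition:canonical-ps:canonical-svs} hold because every element is canonical and non-empty. Condition 3 holds because pairs inside $svs'$ are non-shareable by hypothesis, and pairs involving $sv_1$ are non-shareable precisely because $svs''=\varnothing$.

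In the recursive case, with $q_a\leq_f q_c$ (the other case is symmetric), the set passed to the recursive call is $(merge_{sv}(sv_1,sh)\cup_{svs}(sv_1'\csetminus_{sv} sh)) \ccup_{svs} (svs'\setminus\set{sv_1'})$. I would check three things.

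First, every element of the left argument is canonical. This follows from \Cref{lemma:canonical-ps:merge-canonical} (which requires showing $sv_1$ and $sh$ are mergeable, by the construction of $share_{sv}$) and from \Cref{lemma:canonical-ps:canonical-difference-sv}.

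Second, the left argument is itself a canonical set, so that the recursive call falls under the induction hypothesis. This requires that the merged vector and the pieces of $sv_1'\csetminus_{sv} sh$ are pairwise non-shareable. Non-overlap follows from \Cref{overlap_equiv_internonempty}, since the difference removes exactly $uf(sh)$. The lexicographic placement of adjacency constraints is ensured by the ordered construction of $f$ in \Cref{definition:canonical-ps:difference-ps}.

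Third, the measure strictly decreases: the shared portion has moved to the lexicographically smaller vector, and $sv_1'$ has been removed from the right operand. Correctness of the underlying set is not needed for canonicity, but it follows from \Cref{lemma:canonical-ps:merge-ps-canonical} and the homomorphic difference, which makes the bookkeeping of $\usf$ available in the argument.

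The main obstacle is the second and third points together. Pieces of $sv_1'\csetminus_{sv} sh$ may become shareable with other elements of $svs'$, so the naive measure $|svs'|$ need not drop and canonicity of the intermediate set is not automatic. My first attempt would be a potential function: the sum, over all exclusion vectors, of their rank in the lexicographic order of the symbolic vector carrying them. I would show that each share step moves constraints only toward lexicographically smaller vectors, which gives termination because $\leq_f$ is a total order on the finitely many relevant join points. If that fails, I would instead strengthen the inductive invariant to allow the left argument to be a set that is canonical except for pairs whose shared part lies strictly above $\leq_f$-smaller bounds, and prove that this weaker invariant is preserved.
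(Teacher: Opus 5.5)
The paper gives no proof of this proposition (it defers to the thesis), so I assess your argument on its own. There is a genuine gap, and it is the part you flag as the main obstacle. The decisive steps, namely that every recursive call again has canonical arguments and that some measure decreases, are only conjectured (``my first attempt would be\dots{} if that fails\dots''). The induction scaffolding also does not close, for two reasons:
\begin{itemize}
\item In the singleton clause with $svs''\neq\varnothing$, the recursive call is $X\ccup_{svs}(svs'\setminus\set{sv_1'})$ with $X=merge_{sv}(sv_1,sh)\cup_{svs}(sv_1'\csetminus_{sv}sh)$. This $X$ typically has two or more elements (one candidate piece of the difference per exclusion vector of $sh$). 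So the call is covered neither by a singleton hypothesis nor by your outer induction on $|svs|$.
\item In the clause for $n>1$, the new right operand $\set{sv_1}\ccup_{svs}svs'$ can be larger than $svs'$; it equals $\set{sv_1}\cup svs'$ when $svs''=\varnothing$. So a measure led by $|svs'|$ does not decrease.
\end{itemize}
Your potential function is also unverified, since $\csetminus_{sv}$ and the $b''$ part of $merge_{sv}$ keep creating new exclusion vectors.

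The missing idea is that you are carrying too strong an invariant. You cannot show from the paper that $X$ is a canonical \emph{set}: $share_{sv}$ is never defined, so nothing fixes how $merge_{sv}(sv_1,sh)$ sits relative to the pieces of $sv_1'\csetminus_{sv}sh$. It is also unnecessary, because $\ccup_{svs}$ never compares two elements of its left operand with each other; it inserts them one at a time into the right operand. Prove instead the stronger statement: if the right operand is canonical and every element of the left operand is canonical and non-empty, the result is canonical. The defining equations make sense for such inputs. Argue by induction over the call tree of $svs\ccup_{svs}svs'$, which is finite by \Cref{thm:canonical-union-termination}; this removes the need for a measure of your own. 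The cases then go as follows.
\begin{itemize}
\item The clause $\varnothing\ccup_{svs}svs'$ and the base case $svs''=\varnothing$ work as you argue. Non-emptiness of the elements of a canonical set is not a listed condition, but it follows from conditions 2 and 3, since $shareable_{sv}$ is $\true$ whenever one argument is empty.
\item For $n>1$, the inner call supplies a canonical right operand to the outer call.
\item In the recursive singleton clause, $svs'\setminus\set{sv_1'}$ is canonical as a subset of a canonical set. The elements of $X$ are canonical by \Cref{lemma:canonical-ps:merge-canonical} and \Cref{lemma:canonical-ps:canonical-difference-sv}. Non-emptiness comes from $rmEmpty$ inside $\csetminus_{sv}$ and from $uf(sv_1)\neq\varnothing$ being contained in the merged result. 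If $sv_1$ and $sh$ are not mergeable, the invariant holds trivially, because $merge_{sv}$ then returns $\set{sv_1,sh}$.
\end{itemize}
The only remaining obligation is that $sh$ is canonical. Since $share_{sv}$ is undefined in the paper, you must state this explicitly as an assumption.
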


\begin{theorem}[Canonical union termination]
\label{thm:canonical-union-termination}
For all finite canonical symbolic vector sets
$svs,svs' \in \SVS$, the computation of
$svs \ccup_{svs} svs'$ terminates.
\end{theorem}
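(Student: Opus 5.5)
We prove termination of $\ccup_{svs}$ with a well-founded measure that strictly decreases along every recursive call of \Cref{def:canonical-ps:union-svs}. There are two kinds of recursion. The outer one, in the case $n>1$, peels $sv_1$ off the left operand and recurses on $\set{sv_2,\dots,sv_n}$; its length decreases by one. Its only prerequisite is that the inner call $\set{sv_1}\ccup_{svs} svs'$ terminates and returns a finite set, so the outer loop reduces to the inner one by induction on $n$.

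\textbf{Inner recursion.} Here we track the pair $(\set{sv_1}, svs')$. In the case $svs''\neq\varnothing$ one element $sv_1'$ is removed from $svs'$ by standard set difference. The new left operand is $merge_{sv}(\cdot,sh)\cup_{svs}(\cdot\csetminus_{sv} sh)$, which may contain several symbolic vectors. These are then processed through the outer recursion against $svs'\setminus\set{sv_1'}$. The natural measure is the lexicographic pair
\[
\bigl(\,|svs'|,\ \text{size of the left operand}\,\bigr).
\]
The first component drops by one at each shareable step. When $svs''=\varnothing$ the call ends immediately with a plain $\cup_{svs}$. Nested calls on the new (finite) left operand always run against a strictly smaller right operand. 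Hence, by induction on $|svs'|$, with an inner induction on the size of the left operand for the outer loop, every call terminates.

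\textbf{Finiteness of intermediate sets.} We must check that each intermediate set is finite, so that the inner induction on the size of the left operand applies.
\begin{itemize}
\item $merge_{sv}$ returns at most two symbolic vectors (\Cref{definition:canonical-ps:merge-ps}, \Cref{lemma:canonical-ps:merge-singleton}).
\item $\csetminus_{sv}$ returns at most $|d|+1$ symbolic vectors by \Cref{definition:canonical-ps:difference-ps}. Its auxiliary $f$ recurses on a strictly shrinking finite set $\set{q_1,\dots,q_n}$, so $f$ itself terminates.
\item $can_{sv}$ and $rmEmpty$ are finite computations on finite sets. The sets $b'$ and $b''$ in $merge_{sv}$ are finite, since $b''$ ranges over $b\times d$.
\item $share_{sv}$ is assumed to be a total function producing one symbolic vector.
\end{itemize}
Every right operand in a recursive call is also a subset of the original $svs'$, so its cardinality is bounded and it decreases at each shareable step.

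\textbf{Main obstacle.} Unrolling the definition, the recursion $(\ldots)\ccup_{svs}(svs'\setminus\set{sv_1'})$ places a possibly multi-element set on the left. By the $n>1$ clause this becomes $\set{x_2,\dots}\ccup_{svs}(\set{x_1}\ccup_{svs}(svs'\setminus\set{sv_1'}))$. The inner result may be larger than $svs'\setminus\set{sv_1'}$, so the right operand of the outer call is not obviously smaller, and the naive measure fails.

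I would address this with a multiset-ordering argument. Assign to each configuration the multiset of pending left elements, each tagged by the cardinality of the right operand it will be processed against. Processing a left element against a right set of size $k$ only spawns new pending elements tagged with sizes less than $k$. The crucial point is that the processed element $sv_1'$ is removed and never reinserted into the right operand it is compared with. By the Dershowitz–Manna theorem the multiset order on $\Nat$ is well-founded, which would yield termination.

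If this bookkeeping does not go through, the fallback is a bound from the finite lattice generated by all vectors appearing in $svs\cup svs'$ under $join$. Every bound created during the computation lies in this lattice: $join$, the componentwise lifts performed by $can_{sv}$, and the $q_d$ inherited by $\csetminus_{sv}$ all stay in it. Hence only finitely many canonical symbolic vectors can ever arise. Combined with the fact that each shareable step moves a shared part strictly toward the lexicographically smaller vector (via $\leq_f$), this rules out cycles and forces termination.
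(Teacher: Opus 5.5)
You are right that the $n>1$ clause is the delicate point, but your proposal does not get past it: none of your three arguments gives a quantity that decreases along every recursive call. Two of your claims are false, as your own ``Main obstacle'' paragraph shows:
\begin{itemize}
\item the induction in ``Inner recursion'' assumes nested calls always run against a strictly smaller right operand;
\item the finiteness paragraph assumes every right operand is a subset of the original $svs'$.
\end{itemize}
The multiset repair fails for the same reason. Suppose a shareable step on $\set{sv_1}\ccup_{svs}svs'$ with $|svs'|=k$ yields a left operand with at least two elements $x_1,x_2$. The definition then evaluates $\set{x_2,\dots}\ccup_{svs}(\set{x_1}\ccup_{svs}R')$ with $|R'|=k-1$. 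If $x_1$ is shareable with nothing in $R'$, the inner call returns $\set{x_1}\cup R'$, of size $k$ (larger still if $x_1$ splits elements of $R'$). So $x_2$ is processed against a set of size at least $k$. The pending element tagged $k$ is replaced by elements tagged $k-1$ and $\geq k$. That is an increase in the Dershowitz--Manna order, so your claim that spawned elements carry tags $<k$ is false.

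The fallback has one useful ingredient. All bounds produced by $join$, $can_{sv}$, $merge_{sv}$ and $\csetminus_{sv}$ stay in the finite join-closure of the input bounds and $f_\varepsilon$. You must assume the same of $share_{sv}$, which the paper never defines. Under that assumption only finitely many call configurations exist. But this only reduces termination to showing that no configuration recurs inside its own computation. Your sentence ``moving the shared part toward the lexicographically smaller vector rules out cycles'' is exactly the statement to be proved, not an argument for it. It is also not evident. In the case $q_a\leq_f q_c$, the pieces of $sv_1'\csetminus_{sv}sh$ have minimal bounds that are joins of $q_c$ with bounds of $sh$, which are in general lexicographically larger than $q_c$. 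So a potential built from lexicographic ranks can increase, and you would also need it not to increase on the $n>1$ and non-shareable steps.

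The paper does the same reduction to the singleton case and the same finiteness checks on $merge_{sv}$ and $\csetminus_{sv}$. It then uses one global measure $\mu$: the number of unresolved shareable parts among all symbolic vectors occurring in the computation. It argues that each shareable step resolves the selected $sh$ without reintroducing a resolved one. To complete your proof you need an explicit measure of that kind, with an actual proof that it decreases. The paper's own justification of this decrease is informal, so your finite-lattice observation, which makes such a count a well-defined natural number, is a reasonable place to start.
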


\begin{proof}
We proceed by analysing the different cases in
Definition~\ref{def:canonical-ps:union-svs}.

First, if the left-hand operand is empty, then
\[
    \varnothing \ccup_{svs} svs' = svs'.
\]
No recursive call is performed, and the computation therefore
terminates immediately.

Suppose next that the left-hand operand contains more than one
symbolic vector. For $n>1$, the definition gives
\[
\set{sv_1,\dots,sv_n}\ccup_{svs}svs'
=
\set{sv_2,\dots,sv_n}
\ccup_{svs}
\bigl(\set{sv_1}\ccup_{svs}svs'\bigr).
\]
This clause successively reduces the number of symbolic vectors in
the left-hand operand until the singleton case is reached. Since the
initial set is finite, this decomposition can be performed only
finitely many times. It is therefore sufficient to establish the
termination of
\[
    \set{sv_1}\ccup_{svs}svs'.
\]

Let
\[
svs'' =
\set{sv'\in svs' \mid shareable_{sv}(sv_1,sv')=\true}.
\]

If $svs''=\varnothing$, then
\[
    \set{sv_1}\ccup_{svs}svs'
    =
    \set{sv_1}\cup_{svs}svs'.
\]
This is a non-recursive set-union operation and hence terminates.

Assume now that $svs''\neq\varnothing$, and let
$sv_1'\in svs''$ be the selected symbolic vector. Moreover, let
\[
    sh = share_{sv}(sv_1,sv_1').
\]

Consider first the case $q_a\leq_f q_c$. The definition yields
\[
\bigl(
    merge_{sv}(sv_1,sh)
    \cup_{svs}
    (sv_1'\csetminus_{sv}sh)
\bigr)
\ccup_{svs}
\bigl(svs'\setminus\set{sv_1'}\bigr).
\]

The auxiliary operation $merge_{sv}(sv_1,sh)$ terminates and produces
at most one symbolic vector. Furthermore, by termination of the
canonical symbolic-vector difference,
\[
    sv_1'\csetminus_{sv}sh
\]
is computed after finitely many decomposition steps and produces a
finite symbolic vector set. Consequently,
\[
    merge_{sv}(sv_1,sh)
    \cup_{svs}
    (sv_1'\csetminus_{sv}sh)
\]
is finite.

The right-hand operand is also finite, since
\[
    svs'\setminus\set{sv_1'}
\]
is obtained by removing one element from the finite set $svs'$.

However, finiteness alone is not sufficient to establish recursive
termination. Progress follows from the treatment of the shared part
$sh$. By construction, $sh$ is incorporated into
$merge_{sv}(sv_1,sh)$ and removed from $sv_1'$ by
$sv_1'\csetminus_{sv}sh$. Thus, the conflict represented by $sh$ is
resolved by this recursive step. Moreover, neither $merge_{sv}$ nor
$\csetminus_{sv}$ reintroduces an already resolved shared part.

Let $\mu(svs,svs')$ denote the number of unresolved shareable parts
induced by the finite symbolic vectors occurring in the computation.
This is a natural number. In every non-trivial recursive step, at
least the selected shared part $sh$ is resolved and no resolved part
is reintroduced. Hence,
\[
    \mu_{\mathrm{next}} < \mu_{\mathrm{current}}.
\]
Since the strict order on $\Nat$ is well founded, only finitely many
such recursive calls can be performed.

The case $q_c\leq_f q_a$ is symmetric. In that case, the recursive
expression is
\[
\bigl(
    merge_{sv}(sv_1',sh)
    \cup_{svs}
    (sv_1\csetminus_{sv}sh)
\bigr)
\ccup_{svs}
\bigl(svs'\setminus\set{sv_1'}\bigr),
\]
and the same argument applies: the auxiliary operations terminate,
their results are finite, and the selected shared part $sh$ is
resolved without being reintroduced.

Therefore, every recursive branch either reaches a non-recursive base
case or strictly decreases the well-founded measure $\mu$. Hence,
$svs\ccup_{svs}svs'$ terminates for all finite canonical symbolic
vector sets $svs$ and $svs'$.
\end{proof}

\begin{definition}[Canonised SVS]
Let $svs = \set{sv_1, \dots, sv_n} \in \SVS, n > 0$ be a symbolic vector set.
The canonised function that returns the canonical form of a symbolic vector set, noted $can_{svs}$, is defined as follows:
\begin{align*}
can_{svs}(\varnothing) &= \varnothing \\
can_{svs}(\set{sv_1, \dots, sv_n}) =&~rmEmpty(\set{can_{sv}(sv_1)}) \\
&\ccup_{svs} \dots \ccup_{svs} rmEmpty(\set{can_{sv}(sv_n)})
\end{align*}
\end{definition}

This definition breaks down the set to reconstruct it using the canonical union.

\begin{proposition}
Let $svs \in \SVS$ be a symbolic vector set.
Then, $can_{svs}(svs)$ is canonical.
\end{proposition}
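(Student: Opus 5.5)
The plan is to argue by induction on the number $n$ of symbolic vectors in $svs$, using that $can_{svs}$ is a left fold of the canonical union $\ccup_{svs}$ over singleton pieces. The case $svs = \varnothing$ is immediate: $can_{svs}(\varnothing) = \varnothing$, and the empty set satisfies all three conditions of \Cref{definition:canonical-ps:canonical-svs} vacuously, the second one through its disjunct $svs = \varnothing$.

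\textbf{Step 1: each singleton piece is canonical.} For each $i$, consider $rmEmpty(\set{can_{sv}(sv_i)})$. By the remark introducing $can_{sv}$, the vector $can_{sv}(sv_i)$ is a canonical symbolic vector, so condition~1 holds. There are then two cases.
\begin{enumerate}
\item If $can_{sv}(sv_i) \in \varnothing_{sv}$, then $rmEmpty$ removes it and the piece is $\varnothing$, which is canonical.
\item Otherwise the piece is the singleton $\set{can_{sv}(sv_i)}$. Condition~2 holds because its only element is non-empty, so the set is not in $\varnothing_{svs}$. Condition~3 is vacuous, since it only quantifies over pairs of distinct elements.
\end{enumerate}
Hence every piece is a canonical symbolic vector set.

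\textbf{Step 2: induction along the fold.} Write $S_1 = rmEmpty(\set{can_{sv}(sv_1)})$ and $S_{k+1} = S_k \ccup_{svs} rmEmpty(\set{can_{sv}(sv_{k+1})})$. The bracketing of the fold does not matter, because \Cref{lemma:canonical-ps:union-commutativity} gives commutativity.
\begin{itemize}
\item By \Cref{thm:canonical-union-termination}, each $S_k$ is a well-defined finite set.
\item Assume $S_k$ is canonical. The next piece is canonical by Step~1.
\item \Cref{proposition:canonical-ps:canonical-union} then gives that $S_{k+1}$ is canonical.
\item If either operand is $\varnothing$, the clause $\varnothing \ccup_{svs} svs = svs$, together with commutativity, returns the other operand unchanged, so canonicity is preserved trivially.
\end{itemize}
Taking $k = n$ gives that $can_{svs}(svs) = S_n$ is canonical.

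\textbf{Main obstacle.} The substantive content is hidden in \Cref{proposition:canonical-ps:canonical-union}, which we are allowed to assume. If it had to be opened up, the hard part would be showing that in each recursive step of $\ccup_{svs}$:
\begin{itemize}
\item the result of $merge_{sv}(sv_1, sh) \cup_{svs} (sv_1' \csetminus_{sv} sh)$ is internally non-shareable, using \Cref{lemma:canonical-ps:merge-canonical} and \Cref{lemma:canonical-ps:canonical-difference-sv};
\item no new shareable pair is created with the remaining elements of $svs'$.
\end{itemize}
Since both facts are packaged in the cited proposition, the proof here reduces to the small base-case check and to carefully handling the empty pieces produced by $rmEmpty$.
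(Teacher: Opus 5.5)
Your proof is correct and follows the intended argument; the paper omits this proof and defers it to the thesis. Each piece $rmEmpty(\set{can_{sv}(sv_i)})$ is canonical, and \Cref{proposition:canonical-ps:canonical-union} carries canonicity through the chain of canonical unions. One small correction: commutativity (\Cref{lemma:canonical-ps:union-commutativity}) does not make the bracketing irrelevant, but you do not need it, because under any bracketing each intermediate node is a canonical union of two canonical operands, so your induction applies unchanged.
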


Thanks to the canonical union, all operators on symbolic vector sets can also be written canonically, which is the subject of the following subsections.

We define the canonical intersection of symbolic vector sets as follows.

\begin{definition}[Canonical intersection of symbolic vector sets]
Let $svs = \set{sv_1, \dots, sv_n},~svs' = \set{sv_1', \dots, sv_m'} \in \SVS, n,m > 0 $ be canonical symbolic vector sets.
The definition of canonical intersection, noted $\ccap_{svs}: \SVS \times \SVS \rightarrow \SVS$, is defined as:
\begin{align*}
svs \ccap_{svs} \varnothing &= \varnothing \\
\varnothing \ccap_{svs} svs &= \varnothing \\
svs \ccap_{svs} svs' &= \\
&\begin{cases}
\bm{if}& n = 1 \\
&rmEmpty(\set{sv_1 \ccap_{sv} sv_1'}) \ccup_{svs} (\set{sv_1} \ccap_{svs} 
\set{sv_2', \dots, sv_m'})\\
\bm{if}& n > 1 \\
&(\set{sv_1} \ccap_{svs} svs) \ccup_{svs} (\set{sv_2, \dots, sv_n} \ccap_{svs} svs)
\end{cases}
\end{align*}
\end{definition}

The original intersection in \cref{def:predicate-structure:union-intersection} is almost the same as the new definition.
There is a slight difference in dealing with the case of an empty intersection between symbolic vectors, where no function is added to the set.

\begin{proposition}
\label{proposition:canonical-ps:canonical-intersection}
If $svs, svs' \in \SVS$ are two canonical symbolic vector sets, then $svs \ccap_{svs} svs'$ is canonical.
\end{proposition}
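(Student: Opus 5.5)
The plan is to prove Proposition~\ref{proposition:canonical-ps:canonical-intersection} by structural induction on the recursive definition of $\ccap_{svs}$, with all the real work delegated to Proposition~\ref{proposition:canonical-ps:canonical-union}. The key observation is that every result of $\ccap_{svs}$ is built in only two ways. Either it is one of the base cases $\varnothing$, or it is a $\ccup_{svs}$ of two sets, each of which must be shown canonical separately.

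First, the base cases. If either operand is $\varnothing$, the result is $\varnothing$. This is canonical: the first and third conditions of Definition~\ref{definition:canonical-ps:canonical-svs} hold vacuously, and the second holds because $svs = \varnothing$.

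Second, the atomic piece $rmEmpty(\set{sv_1 \ccap_{sv} sv_1'})$. I would read $sv_1 \ccap_{sv} sv_1'$ as $can_{sv}(sv_1 \cap_{sv} sv_1')$, so that it is a canonical symbolic vector. Applying $rmEmpty$ then yields either $\varnothing$ or a singleton $\set{sv}$ with $sv$ canonical and $sv \notin \varnothing_{sv}$. In both cases all three canonicity conditions hold: the third is vacuous because there are no two distinct elements. This is the only point where the symbolic-vector level enters. If the paper's $\ccap_{sv}$ is instead the raw $\cap_{sv}$ of Definition~\ref{def:predicate-structure:intersection-ps}, I would insert $can_{sv}$ and justify this using Lemma~\ref{lemma:canonical-ps:basic-equality}, so that the underlying set is unchanged.

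Third, the inductive step. In the case $n = 1$, the result is the $\ccup_{svs}$ of the canonical atomic piece and $\set{sv_1} \ccap_{svs} \set{sv_2', \dots, sv_m'}$. The latter is canonical by the induction hypothesis, since its right operand has fewer elements; note that a subset of a canonical set is canonical, because all three conditions are inherited by subsets. Proposition~\ref{proposition:canonical-ps:canonical-union} then gives canonicity. In the case $n > 1$, the same reasoning applies to $\set{sv_1}$ and $\set{sv_2, \dots, sv_n}$. Both are canonical as subsets of $svs$, and each has strictly fewer left elements, so the induction hypothesis applies to both recursive calls and the union proposition concludes. The induction measure is the lexicographic pair $(n, m)$, which also shows the recursion is well-founded.

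The main obstacle I expect is not the union step but making sure the recursion is well-posed and that its operands really satisfy the hypotheses. The $n > 1$ clause as printed intersects with $svs$ rather than $svs'$; I would read it as $svs'$, which is clearly intended. I would also need to check that the atomic piece is canonical, which depends on $can_{sv}$ producing a genuinely canonical vector, with emptiness handled by $rmEmpty$ rather than by the special form $(\set{f_\varepsilon},\set{f_\varepsilon})$. Once these are fixed, the argument is a routine induction.
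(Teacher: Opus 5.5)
Your proof is correct and follows the argument the paper points to: it gives no proof here and defers to the thesis, but says the canonical operators are obtained ``thanks to the canonical union''. Every branch of $\ccap_{svs}$ yields $\varnothing$, an $rmEmpty$-filtered canonicalised singleton, or a $\ccup_{svs}$ of smaller instances, so induction on $(n,m)$ with Proposition~\ref{proposition:canonical-ps:canonical-union} concludes, and your readings of the undefined $\ccap_{sv}$ as $can_{sv}\circ\cap_{sv}$ and of the $svs$/$svs'$ typo are the intended ones. The one step stated too quickly is that subsets of canonical sets are canonical: condition~2 is not hereditary on its own, but it holds because a canonical set contains no empty symbolic vector (for a singleton by condition~2; for two or more elements by condition~3, since $shareable_{sv}$ returns $\true$ whenever an argument is empty).
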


The canonical difference between two symbolic vectors was defined earlier in~\cref{definition:canonical-ps:difference-ps}.
We now directly introduce the canonical difference operation for symbolic vector sets.

\begin{definition}[Canonical difference of symbolic vector sets]
Let $svs = (sv_1, \dots, sv_n), svs' = (sv_1', \dots, sv_m') \in \SVS$ be canonical symbolic vector sets.
The definition of canonical difference on symbolic vector sets, noted $\csetminus_{svs}: \SVS \times \SVS \rightarrow \SVS$, is defined as:  
\begin{align*}
svs \csetminus_{svs} \varnothing &= svs \\
\varnothing \csetminus_{svs} svs &= \varnothing \\
svs \csetminus_{svs} svs' &= \\
&\begin{cases}
\bm{if}& n = 1 \\
&(sv \csetminus_{sv} sv_1) \csetminus_{svs} \set{sv_2', \dots, sv_m'} \\
\bm{if}& n > 1 \\
&(\set{sv_1}  \csetminus_{svs} svs) \ccup_{svs} \set{sv_2, \dots, sv_n} \csetminus_{svs} svs
\end{cases}
\end{align*}
\end{definition}

The canonical difference is almost the same as the original~\cref{definition:predicate-structure:difference-svs}.
The sole distinction lies in the use of $\ccup_{svs}$ and $\csetminus_{sv}$ instead of their non-respective canonical form.

\begin{proposition}
If $svs, svs' \in \SVS$ are two canonical symbolic vector sets, then $svs \csetminus_{svs} svs'$ is canonical.
\end{proposition}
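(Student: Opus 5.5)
We argue by well-founded induction that follows the recursive definition of $\csetminus_{svs}$. The measure is the pair $(n, m)$ ordered lexicographically, where $n = |svs|$ and $m = |svs'|$. The base cases are immediate. If $svs' = \varnothing$, the result is $svs$, canonical by hypothesis. If $svs = \varnothing$, the result is $\varnothing$, canonical because the second canonicity condition explicitly allows $svs = \varnothing$.

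For the case $n > 1$, the definition gives $(\set{sv_1} \csetminus_{svs} svs') \ccup_{svs} (\set{sv_2, \dots, sv_n} \csetminus_{svs} svs')$. A subset of a canonical set is canonical, since all three conditions of \Cref{definition:canonical-ps:canonical-svs} are universally quantified over elements or pairs of elements. Hence $\set{sv_1}$ and $\set{sv_2,\dots,sv_n}$ are canonical, with fewer elements than $svs$. By the induction hypothesis both operands of $\ccup_{svs}$ are canonical. \Cref{proposition:canonical-ps:canonical-union} then shows that the union is canonical, and \Cref{thm:canonical-union-termination} guarantees that the union is well defined.

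The main case is $n = 1$, where the result is $(sv_1 \csetminus_{sv} sv_1') \csetminus_{svs} \set{sv_2', \dots, sv_m'}$. By \Cref{lemma:canonical-ps:canonical-difference-sv}, $sv_1 \csetminus_{sv} sv_1'$ is canonical. Its size may exceed $1$, however, so the first component of the measure can grow, and the naive lexicographic induction fails here. This is the step I expect to be the main obstacle. To handle it, I would take the outer induction on $m$ alone and prove the statement for all canonical left operands at once: "for every canonical $svs$, $svs \csetminus_{svs} svs'$ is canonical when $|svs'| = m$." Within a fixed $m$, I would run an inner induction on $n$.

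The right operand $\set{sv_2',\dots,sv_m'}$ is canonical, being a subset of $svs'$, and has $m-1$ elements. The outer hypothesis therefore applies to it, whatever the size of the canonical left set $sv_1 \csetminus_{sv} sv_1'$. The case $n>1$ at fixed $m$ uses only the inner hypothesis together with the canonical union, exactly as in the previous paragraph. The base $m=0$ is the case $svs' = \varnothing$ already treated.

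One residual subtlety concerns the case $m = 1$, where the recursion bottoms out at $\csetminus_{svs}\,\varnothing$. There I must check that the set produced by $\csetminus_{sv}$ already satisfies condition 2 of \Cref{definition:canonical-ps:canonical-svs}. It does, because $\csetminus_{sv}$ applies $rmEmpty$, which removes every empty symbolic vector, so the result is either $\varnothing$ or contains no empty element.
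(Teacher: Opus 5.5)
Your argument is correct. The paper gives no proof of this proposition; it is among the results deferred to the thesis~\cite{morard:2024:symbolic}. The paper only remarks that the canonical difference differs from \Cref{definition:predicate-structure:difference-svs} by using $\ccup_{svs}$ and $\csetminus_{sv}$. Your induction is exactly the argument that remark points to: it reduces the claim to \Cref{lemma:canonical-ps:canonical-difference-sv} and \Cref{proposition:canonical-ps:canonical-union}, which the paper also states without proof, so your result is only as solid as those two. You are right that the well-founded order must be $(m,n)$ lexicographically rather than $(n,m)$, because $sv_1 \csetminus_{sv} sv_1'$ may contain several symbolic vectors. You also read the printed clauses as intended: they contain $sv \csetminus_{sv} sv_1$ and $svs$ where $sv_1 \csetminus_{sv} sv_1'$ and $svs'$ are meant.

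Two small corrections. First, your reason why subsets of canonical sets are canonical is inaccurate. Condition~2 of \Cref{definition:canonical-ps:canonical-svs} is existential: $svs \notin \varnothing_{svs}$ only says that some element is non-empty, so it is not inherited just by quantifier shape. It is inherited nonetheless. Since $shareable_{sv}$ returns $\true$ whenever one argument is empty, condition~3 rules out empty elements in a canonical set with at least two elements, and condition~2 rules them out for a singleton. Hence every element of a non-empty canonical set is non-empty, and every subset satisfies condition~2. Second, your residual subtlety for $m=1$ is unnecessary. There the result is $sv_1 \csetminus_{sv} sv_1'$, which is fully canonical, condition~2 included, by \Cref{lemma:canonical-ps:canonical-difference-sv}.
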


The last canonical operation to explore is the negation.

\begin{definition}[Canonical negation of a symbolic vector]
Let $sv \in SV$ be a symbolic vector.
The canonical \textit{negation} of a symbolic vector, noted $\cneg_{sv}: \SV \rightarrow \SVS$, is defined as:
\begin{align*}
\cneg_{sv}(sv) = (\set{f_\varepsilon}, \varnothing) \csetminus_{sv} sv
\end{align*}
\end{definition}

\begin{definition}[Canonical negation of a symbolic vector set]
Let $svs \in \SVS$ be a symbolic vector set.
The canonical \textit{negation} of a symbolic vector set, noted $\cneg_{svs}: \SVS \rightarrow \SVS$, is defined as:
\begin{align*}
\cneg_{svs}(svs) = \set{(\set{f_\varepsilon}, \varnothing)} \csetminus_{svs} svs
\end{align*}
\end{definition}

Both operations remain akin to their original versions, with the exception that we employ the canonical variant of the difference operation.

\begin{proposition}
If $sv \in SV$ is a canonical symbolic vector, then $\cneg_{sv}(sv)$ is canonical.
\end{proposition}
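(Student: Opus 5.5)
The plan is to reduce the statement to \Cref{lemma:canonical-ps:canonical-difference-sv}. By definition, $\cneg_{sv}(sv) = (\set{f_\varepsilon}, \varnothing) \csetminus_{sv} sv$, and that lemma states that the canonical difference of two canonical symbolic vectors is a canonical symbolic vector set. So the only remaining ingredients are that $sv$ is canonical, which is the hypothesis, and that the left operand $(\set{f_\varepsilon}, \varnothing)$ is canonical.

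First I will check the four conditions of canonicity of symbolic vectors for $(\set{f_\varepsilon}, \varnothing)$.
\begin{enumerate}
\item $a = \set{f_\varepsilon}$ is a singleton.
\item The vector is non-empty, since there is no $q_b \in \varnothing$ with $q_b \subseteq_f f_\varepsilon$; hence $(\set{f_\varepsilon}, \varnothing) \not\in \varnothing_{sv}$.
\item The pairwise non-comparability of the elements of $b$ holds vacuously.
\item The condition $q_a(p) \leq q_b(p)$ for all $q_b \in b$ also holds vacuously.
\end{enumerate}
Then \Cref{lemma:canonical-ps:canonical-difference-sv} applies directly and yields that $\cneg_{sv}(sv)$ is canonical.

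For robustness, I will also unfold \Cref{definition:canonical-ps:difference-ps} in this special case. If $sv \in \varnothing_{sv}$, then the result is $rmEmpty(\set{(\set{f_\varepsilon},\varnothing)})=\set{(\set{f_\varepsilon},\varnothing)}$, which is canonical by the check above, since a singleton set has no pair to be shareable. Otherwise, with $sv=(\set{q_c},d)$, the result is $rmEmpty(\set{can_{sv}(\set{f_\varepsilon}, \set{q_c})} \cup_{svs} f((\set{f_\varepsilon,q_c},\varnothing), d))$. Here conditions 1 and 2 of canonicity of sets follow from the use of $can_{sv}$ and $rmEmpty$.

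The main obstacle, if the lemma were not available, would be condition 3: pairwise non-shareability between the piece $(\set{f_\varepsilon},\set{q_c})$ and the pieces $(\set{join(q_c,q_{d_i})}, \set{q_{d_1},\dots,q_{d_{i-1}}})$ built by $f$ in $\leq_f$ order. For these I would argue as follows.
\begin{itemize}
\item For the first piece against any later piece, the exclusion bound $q_c$ lies strictly below $q_{max}=q_{d_i}$, because $q_c \subseteq_f q_{d_i}$ by canonicity of $sv$ and $q_c\neq q_{d_i}$ by non-emptiness. This gives a gap, so the pair is not shareable.
\item For two pieces $i<j$, we have $q_{d_i}\leq_f q_{d_j}$. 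Moreover, $q_{d_i}$ is an exclusion bound of piece $j$ that lies below $q_{max}=join(q_{d_i},q_{d_j})$, so the clause $\exists q_d \in d,\, q_d\subseteq_f q_{max}$ of $shareable_{sv}$ fails.
\end{itemize}
Since all of this is already packaged in \Cref{lemma:canonical-ps:canonical-difference-sv}, I will present the proof as the direct invocation of that lemma, with the verification that $(\set{f_\varepsilon},\varnothing)$ is canonical.
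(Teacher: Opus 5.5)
Your proposal is correct and follows the intended route: $(\set{f_\varepsilon},\varnothing)$ is canonical, so \Cref{lemma:canonical-ps:canonical-difference-sv} applied to $\cneg_{sv}(sv) = (\set{f_\varepsilon},\varnothing)\csetminus_{sv} sv$ gives the claim. Keep your direct check of condition 3 rather than discarding it, because that lemma fails for arbitrary canonical operands (for instance $(\set{(0,5)},\varnothing)\csetminus_{sv}(\set{(1,0)},\set{(1,3)})$ yields the shareable pair $(\set{(0,5)},\set{(1,5)})$, $(\set{(1,5)},\varnothing)$), and it is precisely $q_a=f_\varepsilon$ that puts the first piece's exclusion bound $q_c$ strictly below $q_{max}=q_{d_i}$, which is what your direct argument uses.
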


\begin{proposition}
If $svs \in \SVS$ is a canonical symbolic vector set, then $\cneg_{svs}(svs)$ is canonical.
\end{proposition}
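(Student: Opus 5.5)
Since $\cneg_{svs}(svs)$ is defined as $\set{(\set{f_\varepsilon}, \varnothing)} \csetminus_{svs} svs$, I will reduce the statement to the preceding proposition on canonicity of $\csetminus_{svs}$. The key observation is that both operands are canonical. The right operand $svs$ is canonical by hypothesis. For the left operand I will check the conditions of \Cref{definition:canonical-ps:canonical-svs} directly for the singleton $\set{(\set{f_\varepsilon}, \varnothing)}$.

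For the left operand, the symbolic vector $(\set{f_\varepsilon}, \varnothing)$ is canonical: its first component is a singleton, and it is non-empty, because $b=\varnothing$ means no $q_b\subseteq_f f_\varepsilon$ exists. Conditions 3 and 4 on $b$ hold vacuously. Hence condition 1 of set canonicity holds. Condition 2 holds because the set contains a non-empty symbolic vector, so it is not in $\varnothing_{svs}$. Condition 3 is vacuous since the set is a singleton and there are no pairs $sv\neq sv'$.

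I will then treat the degenerate case $svs=\varnothing$ separately. The defining clause $svs \csetminus_{svs} \varnothing = svs$ returns $\set{(\set{f_\varepsilon}, \varnothing)}$, which was just shown to be canonical. Otherwise I invoke the proposition stating that $\csetminus_{svs}$ of two canonical sets is canonical, and this concludes.

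The main obstacle is only the bookkeeping: making sure the base clauses of $\csetminus_{svs}$ (empty right operand, and the $n=1$ clause, which recurses through $\csetminus_{sv}$ and is covered by \Cref{lemma:canonical-ps:canonical-difference-sv}) are all covered by the cited proposition. If one distrusts that proposition's scope, the fallback is an induction on $|svs|$. Each step produces $(sv \csetminus_{sv} sv_1')$, which is canonical by \Cref{lemma:canonical-ps:canonical-difference-sv}. A further subtraction of the remaining elements then stays canonical by the induction hypothesis, with any unions handled by \Cref{proposition:canonical-ps:canonical-union}.
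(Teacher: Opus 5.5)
Your argument is correct and follows the route the paper intends, although the paper omits this proof and defers it to the thesis: check that $\set{(\set{f_\varepsilon},\varnothing)}$ is a canonical symbolic vector set, then apply the proposition that $\csetminus_{svs}$ of two canonical sets is canonical. Treating $svs=\varnothing$ separately is harmless but unnecessary, since $\varnothing$ is itself canonical; the fallback induction is likewise unneeded, and as sketched it would also need an inner induction on the size of the left operand, because the $n>1$ clause recurses with the same right operand.
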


To conclude, this section has presented the necessary algebraic framework for symbolic vectors with the required operations that can be used for global model checking of Petri nets.

\section{Petri net model checking using symbolic vector sets}
\label{section:from-svs-to-pn}

This section aims to describe the relation and application between CTL model checking and Petri nets using symbolic vector sets.

\subsection{Encoding markings using symbolic vector sets}

\begin{definition}[Capacity Petri net] 
A Petri net with capacities is a tuple $\Sigma = \tuple{P, T, in, out, w, k, m_0}$ where $P$ is the set of finite places, $T$ is the set of finite transitions, $\text{in} \subseteq (P \times T)$ and $\text{out} \subseteq (T \times P)$ are the sets of input and output arcs linking places and transitions, $w: (\text{in} \cup \text{out}) \rightarrow \mathbb{N}^+$ is the function labeling each arc with a weight, $k: P \rightarrow \mathbb{N}^+ \cup \set{\infty}$ binds each place with a value that cannot be exceeded, and $m_0 \in M$ represents the initial marking, belonging to the set of all markings $M$ and respecting the capacity. We define $M$ the set of all markings, $M=\{f|f:P \rightarrow \mathbb{N}\}$. The content of a place $p\in P$ in the marking $m\in M$ is noted m(p). This is similar to the notion of vector introduced in the previous chapter. We do not limit the markings to those that cannot exceeded for notation simplicity. The firing rule will assume that.
We assume that $w$ is total, returning $0$ when the relations $in$ or $out$ are not defined.
Formally, $w(p,t) = 0$ if $(p,t) \notin in$ and $w(t,p) = 0$ if $(t,p) \notin out$.
Furthermore, $\Nat^+ = \Nat \setminus \set{0}$.
Operators and relations on $\Nat$ are extended to $\Nat^+ \cup \set{\infty}$.
As previously defined a marking is denoted by a vector $(n_1, \dots, n_m)$ where $n_1, \dots, n_m \in \Nat, m = |P|$, in which each tuple location is connected to a place that indicates the number of resources. 
Vectors can be seen as functions of finite domain $m: P \rightarrow \Nat$.
A transition $t \in T$ is enabled for the marking $m \in M$ iff  $\forall p \in P, w(p,t) \leq m(p) \leq k(p) - w(t,p) + w(p,t) = isEnabled(m,t)$, where $\infty - n = \infty, n \in \Nat$ and $l \leq \infty$ is true for $l \in \Nat \cup \set{\infty}$.
When a transition is enabled, it may be fired, resulting in: $fire(m,t) = m' \Leftrightarrow \forall p \in P, m'(p) = m(p) - w(p,t) + w(t,p)$ such that $fire: M \times T \rightarrow M$.

We define a function, $\lambda_{in}: T \rightarrow M$, called input marking, such that $\forall p \in P, \lambda_{in}(t)(p) = w(p,t)$.
$\lambda_{in}$ returns a marking that contains the required number of tokens to enable a transition.
In the following, the definition $\Sigma = \tuple{P, T, in, out, w, k, m_0}$ is assumed for a Petri net.
\end{definition}

To apply CTL model checking on a model, a particular operation called 
\textit{pre} is necessary. This operation is used to calculate the execution trace of a model backwards, step by step~\cite{sifakis:1982:unified}.
In the context of a Petri net, the \textit{pre} operation serves as the inverse or opposite of the $fire$ operation.
This operation computes the set of predecessor markings for a given transition, representing the state space before the firing of the transition.
In our approach, encoding the operation $pre$ is crucial to construct the state space symbolically.
The choice of Petri nets with finite capacity ($k: P \rightarrow \mathbb{N}$) is also crucial to ensure the decidability of our proof system, which is not the case for general Petri nets.

\begin{definition}[Reversible and $pre$]
\label{definition:pn-as-ps:reversible-definition}
Let $t \in T$ be a transition and $m \in M$ be a marking.
The \textit{reversibility} of a transition $t$ for a marking $m$ is defined as $rev(m,t) \Leftrightarrow \forall p \in P,w(t,p) \leq m(p) \leq k(p) - w(p,t) + w(t,p)$.
In addition, the partial function that computes its application if it is reversible, noted $pre: M \times T \rightarrow M$, is defined as: $\forall p \in P, pre(m,t)(p) = 
    m(p) + w(p,t) - w(t,p) 
$
\end{definition}

$pre$ may be seen as the \textit{reverse fire} operation for a given transition i.e. exactly the fire of the transition with arcs in the reverse order.
Hence, the targeted transition is always fireable the next time allowing to explore all satisfiable states, related to global model checking.
Our objective is not to compute the reachable configurations from a given marking, but to capture all the initial configurations satisfying a property. 
Thus, even if a transition does not influence a place, it should still be considered in the list of all valid initial configurations, we will discuss that later in the model checking algorithms.
To operate on a set of markings, it involves applying the $pre$ operation to each marking with every transition and combining the results using the union operation.

\begin{figure}[ht!]
\centering
\begin{tikzpicture}[>=stealth']
\node[place, label={135:$p_0$}] at (-2,0) (p0) {4};
\node[place, label={45:$p_1$}] at (2,0) (p1) {$1$};
\node[place, label={45:$p_2$}] at (0,2) (p2) {$1$};
\node[transition, label={90:$t_0$}] at (0,0) (t0) {};

\draw[->] (p0) to node[above] {$2$} (t0);
\draw[->] (t0) to node[above] {$1$} (p1);
\draw[->, bend right=45] (p2) to node[left] {$1$} (t0);
\draw[->, bend right=45] (t0) to node[right] {$2$} (p2);
\end{tikzpicture}

\caption{A capacity-constrained Petri net with three places, capacity vector $(4,3,3)$, and one transition.}
\label{figure:modelchecking:petri-net-pre}
\end{figure}

Given the Petri net in \cref{figure:modelchecking:petri-net-pre}, consider the initial marking
$$
m_0=(4,1,1).
$$

Firing \(t_0\) from \(m_0\) yields
$$
\operatorname{fire}(m_0,t_0)=(2,2,2).
$$
Consequently,
$$
\operatorname{rev}((2,2,2),t_0)
$$
holds and
$$
\operatorname{pre}((2,2,2),t_0)=(4,1,1).
$$
The minimal marking enabling \(t_0\), denoted by \(m_{\min}\), is the component-wise smallest marking from which \(t_0\) can be fired. In this example,
$$
m_{\min}=(2,0,1).
$$
Thus, in the absence of finite place capacities, every marking in $\{m \in M~|~m_{\min} \subseteq_f m\}$ enables \(t_0\), where the ordering on markings is understood component-wise.

When places have finite capacities, satisfying $m_{\min} \subseteq_f m$ is no longer sufficient. The firing must also produce a marking that respects the capacity of every place. More precisely, 
$$
t_0~\text{is enabled at}~m
\Leftrightarrow
\forall p\in P,~
w(p,t_0)
\leq m(p)
\leq
k(p)-w(t_0,p)+w(p,t_0).
$$

The first inequality ensures that sufficiently many tokens are available, whereas the second ensures that firing the transition does not cause the capacity of \(p\) to be exceeded.

The lower enabling condition can therefore be represented by \(m_{\min}\), while the capacity constraints can be represented by a set of upper-bound vectors. Let
$$
b_{cap}\in\powerset{M}.
$$
where \(b_{\mathrm{cap}}\) denotes the second component of a symbolic vector \((a,b_{\mathrm{cap}})\in\SV\). The subscript \(\mathrm{cap}\) is used solely to indicate that this component represents a set of capacity-related markings. For each place \(p\in P\) with finite capacity, define the marking \(\pi_p\) as follows:

$$
\pi_p(x)=
\begin{cases}
k(p)+1 & \text{if }x=p,\\
0      & \text{otherwise}.
\end{cases}
$$

The set of markings  representing all place-capacity constraints is then

$$
b_{cap}
=
\left\{
\pi_p
\;\middle|\;
p\in P,\ k(p) \neq \infty
\right\}.
$$

For the net in \cref{figure:modelchecking:petri-net-pre}, whose capacity vector is \((4,3,3)\), the set of upper-bound constraints is

$$
b_{cap}
=
\left\{
(5,0,0),(0,4,0),(0,0,4)
\right\}.
$$

Each vector represents the violation of the capacity of one particular place.

It is essential to keep these constraints separate. In particular,
$$
uf_{svs}((\{(0,0,0)\}, \{(5,4,4)\}))
\neq
uf_{svs}((\{(0,0,0)\}, \{(5,0,0),(0,4,0),(0,0,4)\}))
$$

For example,
$$
(5,2,2) \in uf_{svs}((\{(0,0,0)\}, \{(5,4,4)\})) \\
$$
$$
(5,2,2) \not\in uf_{svs}((\{(0,0,0)\}, \{(5,0,0),(0,4,0),(0,0,4)\}))
$$

Thus, the symbolic vector $((\{(0,0,0)\}, \{(5,4,4)\}))$ is not sufficient to ensure all the capacity constraints.
 
\begin{definition}[Operation $pre_{sv}$]
\label{definition:pn-as-ps:pre-ps}
Let $sv = (a,b) \in \SV$ be a symbolic vector.
The operation $pre_{sv}: \SV \rightarrow \SVS$ is defined as:
\begin{align*}
    pre_{sv}(a,b) &= \bigcup_{t \in T} pre_{svt}((a,b),t) \\
    pre_{svt}&: \SV \times T \rightarrow \SVS \\
    pre_{svt}((a,b),t) &=
    \begin{cases}
    \varnothing~\bm{if}~\exists m \in (a \cup b), \neg rev(m,t) \\
    \{can_{sv}((pre_i(a,t), pre_e(b,t))\}~\bm{else}
    \end{cases} \\
    pre_i, pre_e&: \powerset{M} \times T \rightarrow \powerset{M} \\
    pre_i(\varnothing, t) &= \set{\lambda_{in}(t)},~pre_e(\varnothing, t) = \varnothing \\
    pre_i(a,t) &= pre_e(a,t) = \bigcup_{m~\in~a} \set{pre(m,t)}, ~\bm{if}~a \neq \varnothing
\end{align*}
\end{definition}
The operation $pre_{sv}$ is designed to operate on each marking within the symbolic vector individually, computing its corresponding $pre$. It is based on a more specific function $pre_{svt}$ doing this transition by transition.
Note that if one of the markings is not reversible ($\neg rev(m,t)$), the entire operation is not applied and returns $\varnothing$.
We do not need to be more precise as we are doing global model checking, and only symbolic vectors with same behavior will be considered.
Moreover, it preserves the capacity of Petri nets with this constraint. Checking reversibility can be done without enumerating explicitly markings but checking the bounds as in $\exists m \in (a \cup b)$.

Let us look at another examples from Petri net in figure \ref{figure:modelchecking:petri-net-pre}:
\begin{align*}
    pre_{sv}((\{(0,0,0)\}, \{(5,4,4)\})) &= \bigcup_{t~\in~\{t_0\}} pre_{svt}((\{(0,0,0)\}, \{(5,4,4)\}),t) \\
    &= pre_{svt}((\{(0,0,0)\}, \{(5,4,4)\}),t_0) \\
    &= \varnothing_{svs} \\
    \\
    rev((5,4,4), t_0) &\Leftrightarrow \forall p \in P,w(t_0,p) \leq m(p) \leq k(p) - w(p,t_0) + w(t_0,p)\\
    \text{if}~(p = p_0),&~w(t_0,p_0) \leq m(p_0) \leq k(p_0) - w(p_0,t_0) + w(t_0,p_0) \\
    &\Rightarrow 0 \leq 4 \leq 4 - 2 + 1 \Rightarrow 0 \leq 4 \leq 3 \\
     \neg rev((5,4,4), t_0) &= \neg false = true \\
\end{align*}

Because there exists a marking $m = (5,4,4)$ such that it is not reversible for $t_0$, the final result is $\varnothing_{svs}$.

\begin{align*}
    pre_{sv}((\{(0,1,2)\}, \{(2,2,3)\})) &= \bigcup_{t~\in~\{t_0\}} pre_{svt}((\{(0,1,2)\}, \{(2,2,3)\}),t) \\
    &= pre_{svt}((\{(0,1,2)\}, \{(2,2,3)\}),t_0) \\
    &= \{(\{2,0,1\},\{(4,1,2)\})\}\\
    \\
    rev((2,2,3), t_0) &\Leftrightarrow \forall p \in P,w(t_0,p) \leq m(p) \leq k(p) - w(p,t_0) + w(t_0,p) \\
    \text{if}~(p = p_0), &\Rightarrow 0 \leq 2 \leq 4 - 2 + 1 \Rightarrow 0 \leq 2 \leq 3 \\
    \text{if}~(p = p_1), &\Rightarrow 1 \leq 2 \leq 3 - 0 + 1 \Rightarrow 1 \leq 2 \leq 4 \\
    \text{if}~(p = p_2), &\Rightarrow 2 \leq 2 \leq 3 - 1 + 2 \Rightarrow 0 \leq 2 \leq 4 \\
     rev((2,2,3), t_0) &=  true \\
     rev((0,1,2), t_0) &=  true \\
\end{align*}

in this case the reversibility is possible within the range of markings $(\{(0,1,2)\}, \{(2,2,3)\})$.


\begin{definition}[Operation $pre_{svs}$]
\label{def:pn-as-ps:pre-svs}
Let $svs \in \SVS$.
Operation $pre_{svs}: \SVS  \rightarrow \SVS$ is defined as: $pre_{svs}(svs) = \bigcup_{sv~\in~svs}~pre_{sv}(sv)$

\end{definition}

\begin{lemma}(Validity of the symbolic $pre$)
Let  $sv \in \SV$ be a symbolic vector. 
Then the $pre$ of each marking of a symbolic vector produces at least the results as the $pre$ on the symbolic vector $ uf(pre_{sv}(sv)) \subseteq \bigcup_{t\in T}\{pre(m,t)\mid m\in uf(sv)\} $ and similarly for its extension to set of symbolic vectors $svs  \in \SVS$.
\label{lemma:homomorphicsvs}
\end{lemma}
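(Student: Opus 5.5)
The plan is to prove the inclusion transition by transition and then take unions. Since $pre_{sv}(sv)=\bigcup_{t\in T}pre_{svt}(sv,t)$ and $\usf$ distributes over unions by \Cref{def:predicate-structure:underlying-vectors-of-set}, it suffices to fix $t\in T$ and show
\[
\usf(pre_{svt}(sv,t))\subseteq\{pre(m,t)\mid m\in uf(sv),\ rev(m,t)\}.
\]
If $pre_{svt}(sv,t)=\varnothing$ there is nothing to prove. Otherwise, every bound $m\in a\cup b$ satisfies $rev(m,t)$, and $pre_{svt}(sv,t)=\{can_{sv}(pre_i(a,t),pre_e(b,t))\}$. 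By \Cref{lemma:canonical-ps:basic-equality} I may drop $can_{sv}$ and work with $uf((pre_i(a,t),pre_e(b,t)))$ directly. The extension to $svs$ then follows by a further union over $sv\in svs$, using \Cref{def:pn-as-ps:pre-svs}.

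The key observation is that $pre(\cdot,t)$ is a translation by the fixed integer vector $\delta_t$, where $\delta_t(p)=w(p,t)-w(t,p)$. Translation is an order-embedding for $\subseteq_f$: $x\subseteq_f y \iff x+\delta_t\subseteq_f y+\delta_t$, whenever both sides are defined. So, given $m'\in uf((pre_i(a,t),pre_e(b,t)))$, I will define the candidate preimage $m=m'-\delta_t$, that is, $m(p)=m'(p)-w(p,t)+w(t,p)$. I then check three things.
\begin{enumerate}
\item \emph{$m$ is a genuine marking and satisfies the lower half of $rev$.} If $a\neq\varnothing$, pick $q_a\in a$. Then $pre(q_a,t)\subseteq_f m'$, so by translation $q_a\subseteq_f m$. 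Hence $m(p)\geq q_a(p)\geq w(t,p)\geq 0$, using $rev(q_a,t)$. If $a=\varnothing$, then $pre_i(a,t)=\{\lambda_{in}(t)\}$, which gives $m'(p)\geq w(p,t)$ and hence $m(p)\geq w(t,p)$ directly.
\item \emph{$m\in uf(sv)$.} Each inclusion constraint $q_a\subseteq_f m$ was obtained in the first step. Each exclusion constraint transfers the same way: $pre(q_b,t)\not\subseteq_f m'$ is equivalent to $q_b\not\subseteq_f m$ by the order-embedding, and $b=\varnothing$ is trivial.
\item \emph{$pre(m,t)=m'$.} This holds by construction.
\end{enumerate}

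The step I expect to be the main obstacle is the upper half of $rev(m,t)$, namely $m(p)\leq k(p)-w(p,t)+w(t,p)$. This is not forced by the symbolic bounds alone, because the exclusion bounds only constrain $m$ disjunctively. The route I would try first is to use the fact that the markings under consideration are markings of the capacity net, so $m'(p)\leq k(p)$. Rewriting, this gives exactly $m(p)=m'(p)-w(p,t)+w(t,p)\leq k(p)-w(p,t)+w(t,p)$. If one instead wants this to follow purely from the symbolic data, I would rely on the capacity exclusion set $b_{cap}$ being carried in $b$, so that $pre_e(b,t)$ encodes the shifted capacities; the translation argument of the second step then yields the bound. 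Either way, once $rev(m,t)$ holds, $m'$ lies in the right-hand side, and taking the union over $t\in T$ (and over $sv\in svs$) concludes.
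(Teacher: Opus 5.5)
The gap is exactly the step you flag, the upper half $m(p)\le k(p)-w(p,t)+w(t,p)$ of $rev(m,t)$. It cannot be filled from the definitions, because the inclusion is false as stated whenever $T\neq\varnothing$ and some place $p$ has $k(p)<\infty$.

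The rest of your argument is correct. The paper gives no proof here, only the remark that monotonicity holds \emph{modulo capacity}, and your translation argument makes that precise. With $\delta_t(p)=w(p,t)-w(t,p)$, the candidate $m=m'-\delta_t$ inherits every inclusion and exclusion bound. The lower half of $rev(m,t)$ follows from $rev(q_a,t)$, or from $\lambda_{in}(t)$ when $a=\varnothing$. And $pre(m,t)=m'$.

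Here is a counterexample. Take $sv=(\varnothing,\varnothing)$, so $uf(sv)=Q_n$. Since $a\cup b=\varnothing$, there is no bound to test, and $pre_{svt}(sv,t)=\set{can_{sv}(\set{\lambda_{in}(t)},\varnothing)}$. Its underlying set $\set{m'\mid \lambda_{in}(t)\subseteq_f m'}$ contains markings with $m'(p)>k(p)$. On the right-hand side, for every $t'\in T$, each $m$ with $rev(m,t')$ gives $pre(m,t')(p)=m(p)+w(p,t')-w(t',p)\le k(p)$. So such an $m'$ is never produced: the symbolic $pre$ carries no upper capacity information.

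Neither of your two routes repairs this.

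\textbf{Route (i).} You import $m'(p)\le k(p)$ as a fact, but the paper explicitly does not restrict $M$, hence $uf$ and $\usf$, to capacity-respecting markings, so this is an extra hypothesis. It also only works asymmetrically: intersect the left-hand side with $\set{m'\mid \forall p,\ m'(p)\le k(p)}$ and leave $uf(sv)$ unrestricted. If both sides are restricted, your step 2 fails. Take one place with $k(p)=3$, $w(p,t)=0$, $w(t,p)=2$, and $sv=(\set{(2)},\varnothing)$. Then $pre_{svt}(sv,t)=\set{(\set{(0)},\varnothing)}$, and $m'=(3)$ has the unique preimage $m=(5)$, which exceeds the capacity.

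\textbf{Route (ii).} It rests on $b_{cap}\subseteq b$, which is not a hypothesis of the lemma. When it does hold, the vectors $\pi_p$ themselves generally violate $rev(\cdot,t)$: for instance $\pi_p(p')=0<w(t,p')$ for an output place $p'\neq p$ of $t$, or $\pi_p(p)=k(p)+1>k(p)-w(p,t)+w(t,p)$ whenever $w(t,p)\le w(p,t)$. Then $pre_{svt}$ returns $\varnothing$ and the inclusion holds only vacuously; nothing is shifted.

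What your argument does establish is one of the following:
\begin{itemize}
\item the statement when all capacities are infinite;
\item the asymmetric capacity-restricted version described under route (i);
\item the version where $pre$ on the right is read as the total translation $m\mapsto m+\delta_t$, for which your steps 1--3 already suffice.
\end{itemize}
State one of these explicitly as the claim you are proving, rather than appealing to the markings under consideration.
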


Validity is here sufficient as we will iterate exactly on the interval of markings compatible with the firing of transitions, i.e. we have a partial completness conditional to the reversibility of the whole interval and defined as $\forall t \in T$ if ($\forall m \in (a,b),rev(m,t)) \Rightarrow uf(pre_{sv}(sv)) = \{pre(m,t)\mid m\in uf(sv)$
\begin{remark}
    Our approach is adaptable to models beyond classical Petri nets.
    For models other than Petri nets, encoding "\textit{pre}" is necessary to apply model checking for CTL and the fact that lemma \ref{lemma:homomorphicsvs} is true. In our case it is true as monotonocity is a valid property (modulo capacity). 
    It remains an open question whether this is always possible with symbolic vector sets.
    In some cases, it can be necessary to have an infinite number of symbolic vectors, such as a Petri net representing even or odd numbers of tokens.
\end{remark}

It must be noted that in global model checking many techniques including the one used in the paper, do not require explicit construction of reachability sets.  In our case, depending on the choice of initial marking any marking can be reached, this is why it is more pertinent to use CTL global model checking, i.e. the states that satisfy a given formula.

The three Petri nets presented in~\cref{figure:modelchecking:petri-nets} illustrate different situations regarding the finiteness of their sets of reachable markings and their representation using symbolic vector sets. Each net contains a single place $p_0$ with capacity $\infty$ and a single transition $t_0$. They differ only in the weight of the output arc from $t_0$ to $p_0$.

\begin{figure}[ht!]
\centering

\begin{subfigure}[b]{0.31\textwidth}
\centering
\begin{tikzpicture}[>=stealth']
\node[place, label={135:$p_0$}] at (-1,0) (p0) {1};
\node[transition, label={90:$t_0$}] at (1,0) (t0) {};

\draw[->,bend right=45] (p0) to node[below] {$1$} (t0);
\draw[->,bend right=45] (t0) to node[above] {$1$} (p0);
\end{tikzpicture}

\caption{One token produced}
\label{figure:modelchecking:petri-net-1}
\end{subfigure}
\hfill
\begin{subfigure}[b]{0.31\textwidth}
\centering
\begin{tikzpicture}[>=stealth']
\node[place, label={135:$p_0$}] at (-1,0) (p0) {1};
\node[transition, label={90:$t_0$}] at (1,0) (t0) {};

\draw[->,bend right=45] (p0) to node[below] {$1$} (t0);
\draw[->,bend right=45] (t0) to node[above] {$2$} (p0);
\end{tikzpicture}

\caption{Two tokens produced}
\label{figure:modelchecking:petri-net-2}
\end{subfigure}
\hfill
\begin{subfigure}[b]{0.31\textwidth}
\centering
\begin{tikzpicture}[>=stealth']
\node[place, label={135:$p_0$}] at (-1,0) (p0) {1};
\node[transition, label={90:$t_0$}] at (1,0) (t0) {};

\draw[->,bend right=45] (p0) to node[below] {$1$} (t0);
\draw[->,bend right=45] (t0) to node[above] {$3$} (p0);
\end{tikzpicture}

\caption{Three tokens produced}
\label{figure:modelchecking:petri-net-3}
\end{subfigure}

\caption{Petri nets with one place of capacity $\infty$ and one transition, differing only in the weight of the output arc.}
\label{figure:modelchecking:petri-nets}
\end{figure}

In~\cref{figure:modelchecking:petri-net-1}, firing $t_0$ consumes one token from $p_0$ and produces one token in the same place. The marking therefore remains unchanged after each firing, and the set of reachable markings is $\{(1)\}$ .

This finite set can be represented by the symbolic vector $(\{(1)\},\{(2)\})$.

In~\cref{figure:modelchecking:petri-net-2}, firing $t_0$ consumes one token and produces two tokens in $p_0$. Each firing therefore increases the marking by one, resulting in the infinite set of reachable markings $\{(1),(2),(3),\ldots\}$.
Despite containing infinitely many markings, this set can be represented by the single symbolic vector $(\{(1)\},\emptyset)$.
This example shows that our symbolic representation can finitely represent an infinite set of reachable markings.

Finally, in~\cref{figure:modelchecking:petri-net-3}, firing $t_0$ consumes one token and produces three tokens in $p_0$. Each firing therefore increases the marking by two. Starting from $(1)$, the place can only contain an odd number of tokens, and the set of reachable markings is
$$
\{(1),(3),(5),(7),\ldots\}.
$$
This set cannot be represented by a finite symbolic vector set. Indeed, a separate symbolic vector is required for each reachable marking:
$$
\{(\{(1)\},\{(2)\}),(\{(3)\},\{(4)\}),(\{(5)\},\{(6)\}),\ldots\}.
$$ 
These examples illustrate both the expressiveness and the limitations of our approach. A symbolic vector set can provide a finite representation of an infinite set of markings, as shown in~\cref{figure:modelchecking:petri-net-2}. However, this is not always possible: some infinite sets of markings, such as the set of odd markings generated by the Petri net in~\cref{figure:modelchecking:petri-net-3}, require an infinite number of symbolic vectors.

\subsection{Global model checking of Petri nets with symbolic vector sets}

In tackling the global model checking problem, our focus has been on addressing CTL formulas.
We briefly introduce the corresponding formalism and the satisfaction relation.



\begin{definition}[CTL definition and satisfaction]
Let \textit{AP} be a set of atomic propositions and $ap \in AP$.
The set of CTL formulas over \textit{AP}, noted $\phi \in CTL$, is defined as:
$\phi ::= ap~|~\neg \phi~|~\phi \lor \phi~|~\mathbf{EX}~\phi~|~\mathbf{EG}~\phi~|~\mathbf{E}[\phi~\mathbf{U}~\phi]$.
We assume the usual CTL extended syntax based on formula equivalence: $\mathbf{AX}~\phi$, $\mathbf{EF}~\phi$, $\mathbf{AF}~\phi$, $\mathbf{AG}~\phi$, and $\mathbf{A}[\phi~\mathbf{U}~\phi]$.

Let $\Sigma = \tuple{P, T, in, out, w, k, m_0}$ be a Petri net, $m \in M$ be a marking and $t \in T$ be a transition. We define the set of markings and the runs that are path within the marking graph : $ \rho : \mathbb{N}\rightarrow M$. They are defined as: $$\{\rho : m \in M\mid \forall i \in\mathbb{N}, if \rho(i) = m, \exists t \in T, fire(t,m)=m' \hbox{  \ then }\rho(i+1)=m'\}$$

As we consider with this definition only infinite path, we artificially put loops when deadlocks are reached, this is simply the condition for $\not\exists t\in T$ s.t. $fire(t,m)=m'$, in this case if $\rho(i) = m$ then $\rho(i + 1) = m$.

The \textit{satisfaction relation} is denoted as $\models \, \subseteq \Sigma \times M \times CTL$ and its definition based on the standard literature~\cite{clarke:1981:design} ($\Sigma$ is omitted in the final notation if the context is clear).
Furthermore, the set of atomic propositions for our Petri nets is $AP = \set{isFireable(t)~|~t \in T}$ such that:
$m \models isFireable(t) \Leftrightarrow$ t is enabled for $m $.

\begin{align*}
m \models \neg \phi \Leftrightarrow \neg (m \models \phi) \\ \; \\
m \models  true \\ \; \\
m \models \phi \lor \psi \Leftrightarrow m \models \phi  \ or \  m \models \psi \\ \; \\
m \models isFireable(t) \Leftrightarrow 
\forall p \in P, w(p,t) \leq m(p) \leq k(p) - w(t,p) + w(p,t)
\\ \; \\
m \models \mathbf{EX}~\phi \Leftrightarrow \exists m' \in M,t \in T,  m'=fire(m,t) \ and \  m'\models \phi \\ \; \\
    m \models \mathbf{EG} \phi_1  \Leftrightarrow \hbox{there is a run\ }\rho\hbox{\ with\ }\rho(0)=m \\
   \ \hbox{\ and\ }\rho(i)\models \phi_1 \hbox{\ for all\ } i\ge 0\,\\\; \\
   m \models\phi_1\mathbf{EU}\phi_2  \Leftrightarrow  \hbox{there is a run\ }\rho\hbox{\ with\ }\rho(0)=m
  \hbox{\ and\ } k\ge0 \hbox{\ s.t.\ }\\
   \ \rho(i)\models\phi_1 \hbox{\ for all\ } i<k
	\hbox{\ and\ }
	\rho(k)
	\models\phi_2
	\,\\
\end{align*}

\end{definition}

The atomic proposition $isFireable(t)$ is intended to observe the firing of transitions in Petri nets. It is obviously related to a particular state of the net i.e. the pre/postcondition of that transition. For $\mathbf{EG}$ and $\mathbf{EU}$ operators we have two cases to cover the finite or infinite paths cases reflected by the or operators in the formula.
We have assembled all the necessary components to provide the evaluation semantics of CTL formulas as the computation of a symbolic vector set.
This semantics has been revisited on the basis of~\cite{racloz:1994:properties}.

\begin{definition}[Evaluation of a CTL formula]
Let $\phi, \psi \in CTL$, $ap \in \set{isFireable(t)~|~t\in T}$, and $\Sigma = \tuple{P, T, pre, post, w, k, m_0}$ be a Petri net.
The \textit{evaluation} of a CTL formula as a symbolic vector set is noted $\eval{~}: CTL \rightarrow \SVS$.
The function is defined as follows:
\begin{align*}
\widetilde{pre}_{svs}(\eval{\phi}) &= \neg_{svs}pre_{svs}(\neg_{svs}\eval{\phi}) ~~~~~~~~~ \eval{\neg \phi} = \neg_{svs} \eval{\phi} \\
\eval{true} &= \set{(\set{f_\varepsilon}, \varnothing)} ~~~~~~~~~~~~~~~~ \eval{\phi \lor \psi} = \eval{\phi} \cup_{svs} \eval{\psi} \\
\eval{isFireable(t)} &= \set{(\set{\lambda_{in}(t)}, \varnothing)} ~~~~~~~~~~ \eval{\mathbf{EX}~\phi} = pre_{svs}(\eval{\phi}) \\
\eval{\mathbf{EG}~\phi} &= \nu Y . \eval{\phi} \cap_{svs} (pre_{svs}(Y) \cup_{svs} \widetilde{pre}_{svs}(Y)) \\
\eval{\mathbf{E} [\phi~\mathbf{U}~\psi]} &= \mu Y . \eval{\psi} \cup_{svs} (\eval{\phi} \cap_{svs} pre_{svs}(Y))
\end{align*}
\end{definition}

\begin{remark}
The semantics of satisfaction and evaluation both closely resemble those of a Kripke structure.
However, due to the left-total relation on Kripke structure arcs (i.e., each state has at least one arc starting from it), sink states are nonexistent. 
The semantics need to be adapted to account for this, expanding its scope, assuming an implicit continuation for sink states. 
For instance, the evaluation of $\mathbf{EG}$ in a Kripke structure could be simplified as follows: $\nu Y . \eval{\phi} \cap pre_{svs}(Y)$, which does not require the computation of $\widetilde{pre}_{svs}(Y)$ and the union application.
Furthermore, a Petri net without sink states could also benefit from this advantage. Termination is not guaranteed, in the general unbounded case, for the smallest fixpoints ($\mu Y$) or the largest fixpoint ($\nu Y$).
\end{remark}

\begin{theorem}
Let $\phi \in CTL$ a CTL formula and $m \in M$ a marking.
Then, $m \models  \phi \Leftrightarrow m \in_{svs} \eval{\phi}$.
\end{theorem}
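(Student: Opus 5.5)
The argument goes by structural induction on $\phi$. Each case reduces to two ingredients: the homomorphism lemma for $\cup_{svs},\cap_{svs},\setminus_{svs},\neg_{svs}$, and the validity of the symbolic $pre$ from Lemma~\ref{lemma:homomorphicsvs}. Throughout I read $m \in_{svs} \eval{\phi}$ as $m \in \usf(\eval{\phi})$. The claim then amounts to $\usf(\eval{\phi}) = \{m \in M \mid m \models \phi\}$, and I prove this set equality for every subformula.

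\textbf{Base cases.}
\begin{itemize}
\item For $true$: $\usf(\set{(\set{f_\varepsilon},\varnothing)}) = M$, because $f_\varepsilon \subseteq_f m$ for every $m$.
\item For $isFireable(t)$: $m \in_{sv} (\set{\lambda_{in}(t)},\varnothing)$ is equivalent to $\forall p,\ w(p,t)\le m(p)$. This gives the lower half of the enabling condition. The capacity half must come from the upper constraints; for nets with infinite capacities they are vacuous. For finite capacities I would add the capacity set $b_{cap}$ (shifted by $w(t,p)-w(p,t)$) as the exclusion component, and argue that this is what is intended.
\end{itemize}

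\textbf{Boolean connectives.} Negation and disjunction follow directly from $\usf(\neg_{svs} svs) = \neg\usf(svs)$ and $\usf(svs\cup_{svs}svs') = \usf(svs)\cup\usf(svs')$, together with the induction hypothesis.

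\textbf{The $\mathbf{EX}$ case.} I need
\[
\usf(pre_{svs}(S)) = \{m \mid \exists t,\ fire(m,t)\in \usf(S)\}.
\]
\begin{itemize}
\item The inclusion $\subseteq$ is Lemma~\ref{lemma:homomorphicsvs}. Its ``pre'' image is exactly the set of predecessors, since $pre$ is the inverse of $fire$ and $rev(m',t)$ corresponds to $t$ being enabled at $pre(m',t)$.
\item For $\supseteq$, I use the partial completeness remark stated after the lemma. It gives equality whenever the whole symbolic vector is reversible for $t$. I therefore need to split each symbolic vector of $S$ into pieces on which $rev(\cdot,t)$ holds uniformly. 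Concretely, I intersect with the symbolic vector $(\set{\lambda_{out}(t)}, b_{cap}')$ describing $rev$. Membership of $q\in_{sv}$ is monotone between the bounds, so checking $rev$ on $a\cup b$ suffices on that piece.
\end{itemize}

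\textbf{Fixpoint cases.} For $\mathbf{E}[\phi\,\mathbf{U}\,\psi]$ and $\mathbf{EG}$, I first transfer the homomorphism to the fixpoint operators. The map $Y \mapsto \eval{\psi}\cup(\eval{\phi}\cap pre(Y))$ is monotone on $\powerset{M}$, and $\usf$ commutes with it by the previous cases. Hence $\usf$ of the Kleene iterates equals the concrete iterates. The least fixpoint is then the set of markings with a finite $\phi$-path to a $\psi$-state, which is the standard CTL characterisation.

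For $\mathbf{EG}$, the extra term $\widetilde{pre}_{svs}(Y)$ handles deadlocks. Its concrete meaning is: every successor (vacuously, none) lies in $Y$. Unioned with $pre(Y)$, it admits deadlock markings that are looped by convention. The greatest fixpoint is then the set of markings admitting an infinite $\phi$-run in the deadlock-completed graph.

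\textbf{Main obstacle.} The hardest part is the fixpoints. Termination is not guaranteed, so the equality must be read at the level of the mathematical fixpoints rather than a finite computation. The greatest fixpoint needs the transition system to be image-finite, which holds since $T$ is finite. I also expect a subtle issue in the $\widetilde{pre}$ term: $\neg pre(\neg Y)$ literally means ``all successors in $Y$'' rather than ``no successor''. I would try to show that intersecting with $\eval{\phi}$ and taking the union with $pre(Y)$ still yields the correct set, using that any non-deadlocked marking with all successors in $Y$ already lies in $pre(Y)$.
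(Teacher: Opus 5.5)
Your skeleton matches the paper's one-sentence justification, which also rests on Lemma~\ref{lemma:homomorphicsvs} plus the standard fixpoint characterisation. Your treatment of the Boolean cases, of $\widetilde{pre}$ (a non-deadlocked marking with all successors in $Y$ is already in $pre(Y)$) and of the fixpoints would be fine if $pre_{svs}$ were exact. The gap is that ``adding'' capacity bounds and ``splitting'' symbolic vectors are not proof steps; they change the definitions of $\eval{isFireable(t)}$ and $pre_{svs}$. With the definitions as given, both cases are false, so the statement cannot be proved as written. Your induction hypothesis only controls $\usf(\eval{\phi})$, but $\usf(pre_{svs}(S))$ is not a function of $\usf(S)$. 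The reason is that $pre_{svt}$ discards a whole symbolic vector as soon as one bound vector fails $rev$, and the partial-completeness remark presupposes full reversibility, which nothing in $\eval{\cdot}$ ensures. In the net of Figure~\ref{figure:modelchecking:petri-net-1} ($k(p_0)=\infty$), $rev((0),t_0)$ fails because $w(t_0,p_0)=1$. Hence $\eval{\mathbf{EX}\,true}=pre_{svs}(\set{(\set{(0)},\varnothing)})=\varnothing$, although $(1)\models\mathbf{EX}\,true$. Writing the same underlying set as $\set{(\set{(0)},\set{(1)}),(\set{(1)},\varnothing)}$ instead yields the correct $\set{(\set{(1)},\varnothing)}$. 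Likewise, in the net of Figure~\ref{figure:modelchecking:petri-net-pre} with capacities $(4,3,3)$, $(2,3,1)\in_{svs}\eval{isFireable(t_0)}$ even though $3>k(p_1)-w(t_0,p_1)+w(p_1,t_0)=2$.

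So what you can prove is the theorem for a corrected evaluation, and you should state it that way. Your repair also fails for finite capacities, for three reasons.
\begin{enumerate}
\item $rev(m,t)$ is not equivalent to $t$ being enabled at $pre(m,t)$: $rev$ bounds the predecessor by $k$, whereas enabledness bounds the successor by $k$. In the same net, $rev((0,4,2),t_0)$ holds but $t_0$ is not enabled at $pre((0,4,2),t_0)=(2,3,1)$. Moreover $(0,4,2)\in\usf(\eval{\neg isFireable(t_0)})$, because complements are taken in all of $M$. Intersecting with the $rev$-region therefore makes $\mathbf{EX}$ unsound.
\item With your region, the capacity exclusion vectors violate $rev$ by construction, so the literal test in $pre_{svt}$ still returns $\varnothing$ on every piece.
\item That test says nothing about the unbounded region above the bounds. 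For example, $(\set{(0,1,2)},\varnothing)$ passes it and is sent to $(\set{(2,0,1)},\varnothing)\ni(2,5,1)$, where $t_0$ is not enabled. So do not rely on Lemma~\ref{lemma:homomorphicsvs} for the $\subseteq$ direction either.
\end{enumerate}
A version that works is as follows. Intersect with the set of $t$-successors $(\set{\lambda_{out}(t)},b_{cap})$, where $\lambda_{out}(t)(p)=w(t,p)$, canonicalise, and drop empty pieces. Then apply the translation $m\mapsto m+\lambda_{in}(t)-\lambda_{out}(t)$ to all bounds, with no $rev$ test. This translation is an order isomorphism from the cone above $\lambda_{out}(t)$ onto the cone above $\lambda_{in}(t)$, and canonical bounds lie in that cone. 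It therefore maps $uf$ exactly onto the $t$-predecessors. Also take $\eval{isFireable(t)}=\set{(\set{\lambda_{in}(t)},\set{\pi^t_p\mid k(p)\neq\infty})}$, where $\pi^t_p(p)=k(p)-w(t,p)+w(p,t)+1$ and $\pi^t_p$ is $0$ elsewhere. With these two changes, your induction goes through.
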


The theorem asserts that the satisfaction of a CTL formula for a given marking can be determined by evaluating the formula and verifying whether the marking belongs to it.
This feature enables us to perform exhaustive computations to identify all states that satisfy a given formula.
The perspective of computing $\eval{\phi}$ aligns with the challenge of \textit{global model checking}, where the focus extends beyond a single configuration. Its validity is based on the previous lemma \ref{lemma:homomorphicsvs} and the well known result of model checking where we tuned the particular case of finite length computations in the path and the blocking in the computation of $pre_{svs}$.

This approach guides the remainder of the article.

\subsection{Use case: Mutual exclusion}

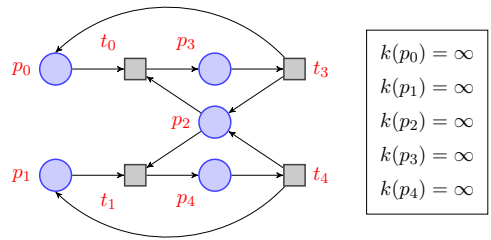
\begin{figure}[ht]
\centering
\scalebox{0.7}{
    \begin{tikzpicture}[>=stealth']
    \node[place, label={180:$p_0$}] at (0,0) (p0) {};
    \node[place, label={180:$p_1$}] at (0,-2) (p1) {};
    \node[place, label={135:$p_3$}] at (3, 0) (p3) {};
    \node[place, label={-135:$p_4$}] at (3, -2) (p4) {};
    \node[place, label={180:$p_2$}] at (3, -1) (p2) {};
    \node[transition, label={135:$t_0$}] at (1.5, 0) (t0) {};
    \node[transition, label={0:$t_3$}] at (4.5, 0) (t3) {};
    \node[transition, label={-135:$t_1$}] at (1.5, -2) (t1) {};
    \node[transition, label={0:$t_4$}] at (4.5, -2) (t4) {};
    
    \draw[->] (p0) to node[above] {} (t0);
    \draw[->] (p1) to node[above] {} (t1);
    \draw[->] (t0) to node[above] {} (p3);
    \draw[->] (t1) to node[above] {} (p4);
    \draw[->] (p2) to node[above, xshift=1mm] {} (t0);
    \draw[->] (p2) to node[above] {} (t1);
    \draw[->] (p3) to node[above] {} (t3);
    \draw[->] (p4) to node[above] {} (t4);
    \draw[->] (t3) to node[below, xshift=1mm] {} (p2);
    \draw[->] (t4) to node[above] {} (p2);
    \draw[->, bend right=45] (t3) to node[below] {} (p0.north);
    \draw[->, bend left=45] (t4) to node[below] {} (p1.south);
    \matrix [draw] at (7,-1) {
      \node [] {$k(p_0) = \infty$}; \\
      \node [] {$k(p_1) = \infty$}; \\
      \node [] {$k(p_2) = \infty$}; \\
      \node [] {$k(p_3) = \infty$}; \\
      \node [] {$k(p_4) = \infty$}; \\
    };
    \end{tikzpicture}
    }
    \caption{Petri net modelling the mutual exclusion problem. 
    }
    \label{figure:pn-as-ps:use-case-pn2}
\end{figure}

\Cref{figure:pn-as-ps:use-case-pn2} illustrates a Petri net example containing the well-known \textit{mutual exclusion problem}.
Markings are displayed using the order: $(p_0, p_1, p_2, p_3, p_4)$.
Avoiding mutual exclusion in this Petri net can be expressed by the property: $\forall m \in M, m \models \neg (\mathbf{EF}~isFireable(t_3) \land isFireable(t_4))$, computed as follows:
\begin{align*}
    \eval{\mathbf{EF}~isFire&able(t_3) \land isFireable(t_4)} = \mu Y . \set{(\set{(0,0,0,1,1)}, \varnothing)} \cup_{svs} pre_{svs}(Y) \\
    (1) &= \set{(\set{(0,0,0,1,1)}, \varnothing)}\\
    (2) &= \set{(\set{(0,0,0,1,1)}, \varnothing), (\set{(0,1,1,1,0)}, \varnothing), (\set{(1,0,1,0,1)}, \varnothing)} \\
    (3) &= \set{(\set{(0,0,0,1,1)}, \varnothing), (\set{(0,1,1,1,0)}, \varnothing), (\set{(1,0,1,0,1)}, \varnothing), \\
    &(\set{(1,1,2,0,0)}, \varnothing), (\set{(0,1,0,2,0)}, \varnothing), (\set{(1,0,0,0,2)}, \varnothing)}
\end{align*}
In the fixpoint computation of $\mathbf{EF}$, we have omitted the detailed steps and provided the final result of each step.
The result does not change after step (3).
Hence, a marking that is part of $\eval{\mathbf{EF}~isFireable(t_3) \land isFireable(t_4)}$ is considered not safe from the mutual exclusion problem.
For example, $(1,1,1,0,0)$ is a valid marking, whereas $(1,1,2,0,0)$ is not.
This result also yields an infinite number of valid markings.
Note that when the capacity of places is unbounded, the computation may not terminate if the solutions diverge.
Conversely, if all places are bounded, the computation always concludes.

\subsection{Optimisation through saturation}

Similarly to methods such as decision diagrams that encounter a \textit{peak}~\cite{ciardo:2001:saturation,hamez:2008:hierarchical} effect, the construction of symbolic vector sets is subject to the same challenge.
While computing the solution iteratively through the fixed-point, intermediate solutions are generated.
However, among these steps, some of the intermediate solutions will converge to the same solution. 
Thus, we end up working with more objects than necessary to obtain the final solution.

To address this issue, we propose a solution called \textit{saturation}, which effectively mitigates the \textit{peak} effect during computation.

\begin{figure}[ht!]
\centering
\begin{lstlisting}[style=swiftCode]
func evalEF($\phi$: CTL) -> SVS {
    n = 1 // The current capacity
    svsRes = $\eval{\phi}_{svs}$
    weights = collectLabelWeights().sorted()
    while n $\leq$ k {
        svsCap = svsRes // Keep result from previous capacity
        do {
            svsTemp = svsRes
            svsRes = svsRes $\cup_{svs}~pre_{svs_n}(svsRes)$ 
        } while !(svsRes $\subseteq_{svs}$ svsTemp)
        if svsRes $\subseteq_{svs}$ svsCap {
            weight = weights.first(where: {w > n})
            if weight.isNil() { break }
            n = weight
        } else { n = n + 1 } // Increment current capacity
    }
    return svsRes
}
\end{lstlisting}    
\caption{Saturated algorithm of the evaluation of $\mathbf{EF}~\phi$}
\label{fig:on-as-ps:saturated-ef-algo}
\end{figure}

\Cref{fig:on-as-ps:saturated-ef-algo} is the pseudo-code summarising this notion for the CTL temporal operator $\mathbf{EF}$.
$svsRes$ is the symbolic vector set storing the solution that will evolve, initialised with the previous evaluation of $\phi$ (line 3).
$weights$ is an array that compiles all weight labels from a Petri net, sorted from the smallest to the largest (line 4).
The functions \textit{collectLabelWeights} and \textit{sorted} are given.

To streamline our approach, we consider the capacity $k$ as a uniform bound for all places.
The idea is to iterate on $k$ (using $n$) starting from 1 (line 2) and compute the intermediate results of the CTL formula for a lower $k$. 
Subsequently, the result of the previous step is leveraged to compute the next one, creating an iterative process (from line 5 to 16). This process is in a straightforward way valid as the semantics of the computations is not modified, it is also complete as we keep the idea of fixpoint for the maximal capacity.

From line $7$ to $10$, we have the computation of fixed-point for $\mathbf{EF}$.
The main difference from the common evaluation is the use of $pre_{svs_n}$.
This function defined earlier in~\Cref{def:pn-as-ps:pre-svs} has been complemented by the index $n$.
This index specifies the current capacity used when computing $pre_{svs}$.
It ensures that symbolic vectors cannot contain markings with a value greater than $n$.

Upon completing the computation of the fixed-point for a given capacity, we compare the new result $svsRes$ with the previous one stored in $svsCap$ (line 6).
Furthermore, if the result of two consecutive iterations for two different capacities remains unchanged (line 11), we inspect the weights collected previously (line 4) to determine whether any of them is capable of generating a marking greater than the current capacity $n$ but less than $k$ (line 12).
The function $first$ serves as a filter that attempts to extract a weight from the array $weights$ such that its value is greater than $n$.
Here, $w$ is the variable that iterates through each value of $weights$.
If the function fails to find such a value, it returns $nil$. 
This means that no transition in the net is capable of altering the final result. 
Then, the $break$ statement (line 13) is executed to exit the loop and return $svsRes$ (line 17).
If such a weight is found, we update the corresponding capacity (line 14), allowing the potential discovery of new symbolic vectors in the next iteration.
On the other hand, if condition line 11 does not hold, $n$ is incremented by one (line 15).
Note that if the value of $weight$ is greater than $k$, the condition of the loop $while$ will also terminate.
For example, consider an arc of a transition that requires 6 tokens with $k=9$.
If $svsRes$ remains unchanged for $n = 2$ and $n = 3$, and the only other transition available also demands 6 tokens for an arc, we can update $n$ to $6$.
In fact, if no new transition is available under such conditions, the creation of new markings is not possible.

To make this pseudo-code work for other CTL temporal operators, it is enough to update line 9 for the corresponding CTL operation.
In addition, line 10 and 11 should be modified for the operations $\mathbf{EG}$ and $\mathbf{AG}$ due to the greatest fixed-point, requiring the elements of both relations to be swapped.



\section{Benchmarks and results}
\label{section:benchmark}

This section is dedicated to examining the results obtained with our tool, namely \textit{SVSKit}, which is a Swift library available on~\href{https://github.com/damdamo/SVSKit}{GitHub}.
Our library implements the theory developed in this article.
All the showcased examples are sourced directly from the \href{https://mcc.lip6.fr/2022/}{model checking contest} for the year 2022.
However, our tool is presented from the perspective of global model checking. 
All tests were conducted on a MacBook Pro computer equipped with a 3.2 GHz CPU and 32 GB of RAM.

\subsection{Circadian clock: A general model}
\label{subsection:benchmark:circadian-clock}

The Petri net of the \href{https://mcc.lip6.fr/2023/pdf/CircadianClock-form.pdf}{circadian clock} model, originally defined in~\cite{schwarick:2009:csl}, consists of 14 places, 16 transitions, and 58 arcs.
Several places in this net are associated with a variable value $N$, which can be adjusted according to specific objectives.
In our system, $N$ can be considered as the number that determines the capacity of each place.
In addition, increasing this value raises the system's complexity.

\begin{table}[ht]
    \centering
    \begin{tabular}{|c|c|c|c|c|}
    \hline
    Capacity & Saturated & Time \textbf{(s)} & Peak SV number & Final SV number\\
    \hline
    \multirow{2}{*}{1} & \cellcolor{lightgray}$true$ & \cellcolor{lightgray}1.9 & \cellcolor{lightgray}29 & \multirow{6}{*}{26} \\
    & $false$ & 1.9 & 29 & \\
    \cline{1-4}
    \multirow{2}{*}{2} & \cellcolor{lightgray}$true$ & \cellcolor{lightgray} 3.2 & \cellcolor{lightgray}29 & \\
    & $false$ & 45 & 86 & \\
    \cline{1-4}
    \multirow{2}{*}{6} & \cellcolor{lightgray}$true$ & \cellcolor{lightgray} 3.2 & \cellcolor{lightgray}29 & \\
    & $false$ & 19000 & 954 & \\
    \hline  
    \end{tabular}
    \caption{Comparison of a reachability property computation ($\mathbf{EF~\phi}$) for the \href{https://mcc.lip6.fr/2022/pdf/CircadianClock-form.pdf}{circadian clock} model, with and without the use of saturation.}
    \label{tabular:benchmark:circadian-clock-saturated-ef}
\end{table}

In \Cref{tabular:benchmark:circadian-clock-saturated-ef}, we compare the calculation of a reachability property both with and without saturation, for various capacities.
Saturation optimisation is instrumental in making symbolic vector sets practical for Petri nets of reasonable size.
It is worth noting that without saturation, the time complexity grows exponentially, reaching over 5 hours for a capacity of 6.
In contrast, with saturation enabled, the time remains constant starting from capacity 2 and above.
The use of saturation allows us to build symbolic vector sets step by step with the minimum of information required.
This can be observed in the \textit{Peak SV number} column, where the saturated solution remains consistently below 29 and does not exhibit significant growth.
Moreover, the final solution consistently comprises 26 symbolic vectors for all capacities, indicating that increasing it should not significantly alter the computational complexity.

We have implemented the \textit{query reduction} optimisation discussed in~\cite{bonneland:2018:simplification}.
In fact, because the CTL formulas for the competition are generated randomly, we are interested in simplifying them before evaluating them.
 However, only some of the reduction rules are available in the context of global model checking.
Furthermore, only the reduction rules concerning global model checking have been reused, 
which does not require any initial marking i.e. in such a way that it is not necessary to give any initial marking.


\begin{table}[ht!]
    \centering
    \begin{tabular}{|c|c|c|c|}
    \hline
    Capacity  & Marking number $(\sim)$ & Final SV number & Time \textbf{(s)}\\
    \hline
    100 & $1.1 \times 10^{28}$ & \multirow{4}{*}{26} & \multirow{4}{*}{3.2} \\
    \cellcolor{lightgray} 10000 & \cellcolor{lightgray}$1.0 \times 10^{56}$ & & \\
    1000000 & $1.0 \times 10^{84}$ & &\\
    \cellcolor{lightgray} $\infty$ & \cellcolor{lightgray} $\infty$ & &\\
    \hline  
    \end{tabular}
    \caption{Progression of the prior $\mathbf{EF}$ reachable property from~\Cref{tabular:benchmark:circadian-clock-saturated-ef} across varying capacities, assuming the saturation by default.
    }
    \label{tabular:benchmark:circadian-clock-saturated-marking-number}
\end{table}

In \Cref{tabular:benchmark:circadian-clock-saturated-marking-number}, we observe the behaviour when we assume saturation by default and only vary the capacity.
This result aligns with the previous finding, indicating that changing the capacity no longer significantly impacts the efficiency of the computation.
The primary result affected by this change is the number of markings encoded by the symbolic vector set.
Although the count of symbolic vectors remains consistent across all rows, the decoded version can still change. 
This is because capacity is not taken into account in the creation of symbolic vectors, and will only appear when decoding the set of symbolic vectors.

In addition, the table offers an approximate count of markings, calculated without the need to decode the structure.
Note that this number is related to the number of initial markings satisfying the CTL formula and not to the number of reachable markings from a given configuration.
Thus, we can efficiently handle markings until, potentially, infinity if the result converges. 
The earlier convergence occurs, the more efficient the computation becomes, enabling the capture of the Petri net's evolution at the earliest possible stage thanks to saturation.


\begin{table}[ht]
    \centering
    \begin{tabular}{|l|c|c|c|c||l|c|c|c|c|}
        \hline
        \# & Time\textbf{(s)} & SV nb & Marking nb & $\ast$ & \# & Time\textbf{(s)} & SV nb & Marking nb & $\ast$ \\
        \hline
        00 & 11 & 1 & $2.0 \times 10^{60}$ & F & 08 & 3.8 & 3 & $1.0 \times 10^{60}$ & F\\ 
        \rowcolor{lightgray}01 & 1.7 & 0 & 0 & F & 09 & 0.15 & 17 & $1.0 \times 10^{70}$ & T \\
        02 & 1.7 & 1 & $3.0 \times 10^{60}$ & F & 10 & 4 & 20 & $1.0 \times 10^{60}$ & F\\
        \rowcolor{lightgray}03 & 0.06 & 0 & 0 & F & 11 & 1350 & 1 & $1.0 \times 10^{84}$ & F\\
        04 & 20.5 & 17 & $1.0 \times 10^{70}$ & T & 12 & 0.28 & 6 & $1.1 \times 10^{70}$ & T\\
        \rowcolor{lightgray}05 & 2.4 & 1 & $1.0 \times 10^{70}$ & T &13 & 1.4 & 9 & $2.0 \times 10^{65}$ & T\\
        06 & 418 & 686 & $1.0 \times 10^{84}$ & T & 14 & 11.4 & 29 & $1.0 \times 10^{70}$ & T\\
        \rowcolor{lightgray}07 & 6.6 & 4 & $1.0 \times 10^{70}$ & F & 15 & 23.8 & 40 & $1.0 \times 10^{60}$ & F\\
        \hline
    \end{tabular}
    \caption{Computation of CTL fireability formulas for the \href{https://mcc.lip6.fr/2022/pdf/CircadianClock-form.pdf}{circadian clock} model with $N = 100~000$.
    Each CTL formula is numbered in $\#$ from $00$ to $15$, following the order provided in competition resources.
    $\ast$ column contains the local answer of each formula of the competition, where $F$ and $T$ stand for $false$ and $true$, respectively. 
    "nb" is the abbreviation for "number".\\}
    \label{table:benchmark:circadian-ctl-formula-100000}
\end{table}


In the context of the model checking contest, the same model is provided with varying initial markings, each time increasing the number of tokens.
The most challenging scenario for the circadian clock model is when \textit{N} is set to $100~000$, which means that several places in the Petri net contain such a number of tokens.

In the \textit{CTL fireability} category, three tools participated in 2022: \textit{GreatSPN}~\cite{babar:2010:greatspn}, \textit{ITS-Tools}~\cite{thierry:2015:symbolic}, and \textit{Tapaal}~\cite{jensen:2016:tapaal}. 
None of these tools managed to produce responses to all queries within the 60-minute time limit, underscoring the challenging nature of the task.
The best tool for this execution was Tapaal, which was able to answer 9 out of 16 queries\footnote{The results can be found \href{https://mcc.lip6.fr/2022/index.php?CONTENT=results/CTLFireability.html&TITLE=Results_for_CTLFireability}{here}, by looking for the \textit{CircadianClock} table.}.
In contrast, our results in~\Cref{table:benchmark:circadian-ctl-formula-100000} show that our symbolic technique successfully addressed all queries in approximately 30 minutes.
Furthermore, we computed the complete set of valid solutions and checked whether the initial marking of the model belongs to the resulting symbolic vector set. 
By focussing on a specific configuration, new optimisations become available, such as the \textit{on-the-fly} technique~\cite{bhat:1995:efficient}, \textit{structural reduction}~\cite{bonneland:2019:stubborn}, or \textit{stubborn reduction}~\cite{valmari:1991:stubborn}.
It should be noted that while our tool was not tested under identical conditions, the setup was not significantly different\footnote{Each execution on the virtual machine for the MCC was limited to a maximum of 16 GB, 2.4 GHz, and 4 cores.}. 

Nevertheless, our present implementation of canonical symbolic vector sets faces scalability challenges. Efficiency hinges on the implementation of canonical operations, whose complexity can easily grow exponentially with the number of places in the Petri net. Indeed, each additional place adds a component to the symbolic vectors, thereby increasing the complexity of the checks and operations required to preserve their canonical form. Our implementation strategy involves moving each element from one set to the other. Throughout this process, we must ensure that each moved element does not conflict with the elements in the other set. Additionally, when conflicts arise, the lexicographical order determines the symbolic vector that will contain the shared portion. Obtaining this shared part entails merging it with the lowest symbolic vector and subtracting it from the greatest symbolic vector. This subtraction may result in the creation of new symbolic vectors, requiring the canonical union to be reapplied to each of them and thus causing a deeper level of recursion.

This limitation also explains the selection of models used in our experiments. Since the proposed method directly addresses global model checking, the current implementation first computes a symbolic representation of the entire state space and only then checks whether the specific marking under consideration belongs to it. We therefore selected models from the Model Checking Contest with a relatively small number of places. The results presented here focus on the \textit{CircadianClock} model, while additional experiments were conducted on the \textit{ERK} and \textit{Robot Manipulation} models. The complete results of these experiments are available in~\cite{morard:2024:symbolic}.

In addition, most operations rely on canonical union.
Hence, we restricted our testing to a relatively small-sized Petri net due to these limitations.
Despite the inherent complexity, our method allows us to compute formulas that even the most proficient tools find challenging.
This approach already demonstrates its viability in specific scenarios, and we are confident that further refinement can address the mentioned issue and enhance its capabilities.

\section{Background \& Related works}

The original framework, initially introduced as \textit{Predicate structures} in~\cite{racloz:1994:properties}, remained stagnant for almost 30 years without further development. 
In its initial version, performance limitations rendered it nearly unusable or, at the very least, non-competitive.
Building on this foundation, we have expanded and enhanced it by introducing new operations, a novel canonical form, and optimisations.

In the realm of techniques addressing the state-space explosion problem in model checking, our method resembles symbolic model checking approaches.
Decision diagrams~\cite{akers:1978:binary,couvreur:2002:data,hamez:2008:hierarchical,hostettler:2011:high,tovchigrechko:2009:efficient} are the prevailing methods in symbolic model checking, employing various types of data encoding and optimisations~\cite{amparore:2018:decision,ciardo:2001:saturation,lopez:2014:stratagem,schwarick:2020:efficient}.
These representations enhance the efficiency of data representation and leverage homomorphisms for symbolic processing.
In contrast to our approach, based on encapsulation akin to intervals, decision diagrams exploit shared representations, allowing common components to be shared among different entities when encoding a set.
Moreover, our representation can handle certain infinite representations, a capability not universally shared by decision diagrams.
Among the big family of decision diagrams, Interval decision diagrams (IDDs)~\cite{schwarick:2020:efficient} appear to be the most closely related to our work.
However, symbolic vector sets generalise intervals for vectors, while IDDs consist of vectors of intervals.

\section{Conclusion}

In this paper, we introduce a novel structure termed \textit{symbolic vector set}. 
Alongside this structure, we present homomorphic operations, canonical form, and saturation optimisation to enhance computational efficiency.
Using this approach, we encode Petri net markings and employ the symbolic representation for CTL formula evaluation. 
Our initial results are promising, indicating potential avenues for further explorations and analysis.
We believe that our current implementation can be improved, particularly by enhancing the canonical operations.

Our work opens up several research directions.
Exploring our approach within the realm of local model checking holds promise.
Additionally, exploring models beyond Petri nets is another line of research, contingent upon our ability to construct the $pre$ operation symbolically for the model and to have the lemma \ref{lemma:homomorphicsvs} valid. In this case the monotonicity property play an important role, as its link to ordering indicates.
One of the most promising directions for our research is to integrate our method with other symbolic model checking methods, such as decision diagrams or symbolic encoding of the bounds themselves. 
We believe that both approaches are orthogonal and could mutually benefit from each other.

%
%
%
\bibliographystyle{splncs04}
\bibliography{references}

\end{document}

%% file: Package_and_commands/packages.tex
\usepackage{graphicx}

\usepackage{tabularray}

\usepackage{makecell}

\usepackage{tikz}
\usepackage{amsmath}
\usepackage{amssymb}
\usepackage{orcidlink}
\usepackage{stmaryrd}
\usepackage{subcaption}
\usepackage{bm}
\usepackage{listings}
\usetikzlibrary{patterns}

\lstdefinelanguage{Swift}{
    keywords={func, let, var, if, else, while, for, in, do, try, catch, throw, switch, case, default, return, break},
    sensitive=true,
    morecomment=[l]{//},
    morecomment=[s]{/*}{*/},
    morestring=[b]",
}

\lstdefinestyle{swiftCode}{
    language=Swift,
    basicstyle=\ttfamily\small,
    mathescape=true,  
    keywordstyle=\color{red},
    commentstyle=\color{gray!80},
    stringstyle=\color{red},
    backgroundcolor=\color{gray!8},
    frame=single,
    breaklines=true,
    numbers=left,
    numberstyle=\tiny\color{gray},
    stepnumber=1,
    tabsize=2,
    classoffset=1,
    morekeywords={sorted, collectLabelWeights, first, isNil},
    keywordstyle=\color{blue},
    classoffset=2,
}

\usepackage
[
    noabbrev,   
    nameinlink, 
]
{cleveref} 

\usepackage
[
    textsize = scriptsize, 
]
{todonotes}            

\usepackage{colortbl}
\usepackage{multirow}

%% file: Package_and_commands/commands.tex
\newcommand*{\powerset}[1]{\ensuremath{\mathcal{P}(#1)}}

\newcommand*{\tuple}[1]{\ensuremath{\langle #1 \rangle}}

\newcommand*{\set}[1]{\{#1\}}

\newcommand*{\SVS}{\ensuremath{\powerset{\mathbb{\SV}}}}
\newcommand*{\SV}{\ensuremath{\mathbb{SV}}}

\newcommand*{\Nat}{\ensuremath{\mathbb{N}}}
\newcommand*{\Bool}{\ensuremath{\mathbb{B}}}

\newcommand*{\true}{\ensuremath{true}}
\newcommand*{\false}{\ensuremath{false}}

\newcommand*{\ccap}{\ensuremath{\cap^{c}}}
\newcommand*{\ccup}{\ensuremath{\cup^{c}}}
\newcommand*{\csetminus}{\ensuremath{\setminus^{c}}}
\newcommand*{\cneg}{\ensuremath{\neg^{c}}}
\newcommand*{\usf}{\ensuremath{uf_{svs}}}
\newcommand*{\eval}[1]{\ensuremath{\llbracket #1 \rrbracket}} 

\usetikzlibrary{arrows.meta}
\usetikzlibrary{decorations.pathreplacing}

\usetikzlibrary{arrows,automata,calc,backgrounds,petri,positioning,shapes}
\tikzstyle{transition}=[draw, minimum width=0.6cm, minimum height=0.6cm]
\tikzstyle{every state}=[
    draw = black,
    thick,
    fill = white,
    minimum size = 4mm
]

\usetikzlibrary{arrows,shapes,snakes,automata,backgrounds,petri}
\usetikzlibrary{arrows.meta,decorations.pathmorphing,backgrounds,positioning,fit,petri}

 \tikzstyle{place}=[circle,thick,draw=blue!75,fill=blue!20,minimum size=6mm]
  \tikzstyle{red place}=[place,draw=red!75,fill=red!20]
  \tikzstyle{transition}=[rectangle,thick,draw=black!75,
  			  fill=black!20,minimum size=4mm]
    \tikzstyle{red transition}=[rectangle,thick,draw=red!75,
  			  fill=red!20,minimum size=4mm]
  \tikzstyle{every label}=[red]

\usetikzlibrary{automata, arrows.meta, positioning}

\tikzset{cross/.style={cross out, draw=black, minimum size=2*(#1-\pgflinewidth), inner sep=0pt, outer sep=0pt},
cross/.default={8pt}}

\makeatletter
\def\squarecorner#1{
    \pgf@x=\the\wd\pgfnodeparttextbox%
    \pgfmathsetlength\pgf@xc{\pgfkeysvalueof{/pgf/inner xsep}}%
    \advance\pgf@x by 2\pgf@xc%
    \pgfmathsetlength\pgf@xb{\pgfkeysvalueof{/pgf/minimum width}}%
    \ifdim\pgf@x<\pgf@xb%
        \pgf@x=\pgf@xb%
    \fi%
    \pgf@y=\ht\pgfnodeparttextbox%
    \advance\pgf@y by\dp\pgfnodeparttextbox%
    \pgfmathsetlength\pgf@yc{\pgfkeysvalueof{/pgf/inner ysep}}%
    \advance\pgf@y by 2\pgf@yc%
    \pgfmathsetlength\pgf@yb{\pgfkeysvalueof{/pgf/minimum height}}%
    \ifdim\pgf@y<\pgf@yb%
        \pgf@y=\pgf@yb%
    \fi%
    \ifdim\pgf@x<\pgf@y%
        \pgf@x=\pgf@y%
    \else
        \pgf@y=\pgf@x%
    \fi
    \pgf@x=#1.5\pgf@x%
    \advance\pgf@x by.5\wd\pgfnodeparttextbox%
    \pgfmathsetlength\pgf@xa{\pgfkeysvalueof{/pgf/outer xsep}}%
    \advance\pgf@x by#1\pgf@xa%
    \pgf@y=#1.5\pgf@y%
    \advance\pgf@y by-.5\dp\pgfnodeparttextbox%
    \advance\pgf@y by.5\ht\pgfnodeparttextbox%
    \pgfmathsetlength\pgf@ya{\pgfkeysvalueof{/pgf/outer ysep}}%
    \advance\pgf@y by#1\pgf@ya%
}
\makeatother

\pgfdeclareshape{square}{
    \savedanchor\northeast{\squarecorner{}}
    \savedanchor\southwest{\squarecorner{-}}

    \foreach \x in {east,west} \foreach \y in {north,mid,base,south} {
        \inheritanchor[from=rectangle]{\y\space\x}
    }
    \foreach \x in {east,west,north,mid,base,south,center,text} {
        \inheritanchor[from=rectangle]{\x}
    }
    \inheritanchorborder[from=rectangle]
    \inheritbackgroundpath[from=rectangle]
}

\definecolor{lightgray}{rgb}{0.9, 0.9, 0.9} 